\documentclass[runningheads]{LNCS/llncs}

\usepackage[
  paperwidth=152mm,
  paperheight=235mm,
  textwidth=122mm,
  textheight=193mm,
  centering
]{geometry}

\usepackage{xcolor}
\usepackage{hyperref}
\usepackage{listings}
\usepackage{amssymb}
\usepackage{amsmath}

\spnewtheorem{axiom}[theorem]{Axiom}{\bfseries}{\itshape}

\usepackage[utf8]{inputenc}
\usepackage[T1]{fontenc}
\usepackage{fontspec} 

\usepackage[bottom]{footmisc}

\usepackage{colorblind}
\usepackage{wrapfig}
\usepackage{graphicx}
\usepackage{caption}
\usepackage{subcaption}
\usepackage{semantic}
\usepackage{simplebnf}
\usepackage{tabularray}
\usepackage{ninecolors}
\usepackage{stmaryrd} 
\usepackage{tikz}\usetikzlibrary{intersections}\usetikzlibrary{matrix}\usetikzlibrary{shapes.arrows,positioning,arrows.meta,bending,shapes,decorations.markings,fit,backgrounds,calc,decorations.pathmorphing,decorations.pathreplacing}
\usetikzlibrary{automata,positioning}

\def\fadeline#1#2#3{%
  \path let
    \p1 = ($(#2)-(#1)$),
    \n1 = {veclen(\p1)},
    \n2 = {atan2(\y1,\x1)} 
  in (#1) -- (#2) node[#3, midway, rotate = \n2, shading angle = \n2+90, minimum width=\n1, inner sep=0pt, draw=none] {};
}

\usepackage{pgfplots}

\tikzset{
  -|-/.style={
  to path={
    (\tikztostart) -- ({$(\tikztostart)!#1!(\tikztotarget)$} |- {$(\tikztostart)$}) |- (\tikztotarget)
    \tikztonodes
  }, pos=0.25
  },
  -|-/.default=0.5,
  |-|/.style={
  to path={
    (\tikztostart) -- ({$(\tikztostart)!#1!(\tikztotarget)$} -| {$(\tikztostart)$}) -| (\tikztotarget)
    \tikztonodes
  }, pos=0.25
  },
  |-|/.default=0.5,
  -/.style={
  to path={
    (\tikztostart) -- ({$(\tikztostart)$} -| {$(\tikztotarget)$})
    \tikztonodes
  }
  },
  |/.style={
  to path={
    (\tikztostart) -- ({$(\tikztostart)$} |- {$(\tikztotarget)$})
    \tikztonodes
  }
  }
}
\usepackage{multirow}
\usepackage{soul}
\usepackage{placeins}

\usepackage[noabbrev, nameinlink, capitalize]{cleveref}
\AddToHook{cmd/appendix/before}{%
  \crefalias{section}{appendix}%
  \crefalias{subsection}{appendix}
}
\makeatletter
\if@cref@capitalise
  \crefname{axiom}{Axiom}{Axioms}
\else
  \crefname{axiom}{axiom}{axioms}
\fi
\makeatother
\Crefname{axiom}{Axiom}{Axioms}

\usepackage{xspace}

\newcommand{\coq}{\text{Rocq}\xspace}
\newcommand{\rocq}{\text{Rocq}\xspace}
\newcommand{\ocaml}{\text{OCaml}\xspace}

\newcommand{\ilstRocq}[1]{\lstinline{#1}}
\newcommand{\ilstrocq}[1]{%
  \ifmmode
    \text{{\lstset{basicstyle=\ttfamily,style=coq,mathescape}\lstinline!#1!}}%
  \else
    {\lstset{basicstyle=\ttfamily,style=coq,mathescape}\lstinline!#1!}%
  \fi
}

\newcommand{\koika}{\text{K\^oika}\xspace}
\newcommand{\kami}{\text{Kami}\xspace}
\newcommand{\fjfj}{\text{Fjfj}\xspace}

\newcommand{\R}{\ensuremath{\mathcal{R}}\xspace}
\newcommand{\cL}{\ensuremath{\mathcal{L}}\xspace}
\newcommand{\Ctx}{\ensuremath{\Gamma}\xspace}
\newcommand\mdoubleplus{\mathbin{+\mkern-10mu+}}
\newcommand{\kread}[2]{\ensuremath{
    \texttt{read}_{#1}\texttt{(}#2\texttt{)}
    }}
\newcommand{\kwrite}[3]{\ensuremath{
    \texttt{write}_{#1}\texttt{(}#2\texttt{,} #3\texttt{)}
    }}

\newcommand{\pre}[1]{\textcolor{orange}{#1}}
\newcommand{\post}[1]{\textcolor{green}{#1}}
\newcommand{\hoare}[3]{\ensuremath{ \pre{\{ #1 \}} ~ #2 ~ \post{\{ #3 \}}  }}
\newcommand{\thoare}[3]{\ensuremath{ \pre{[#1]} ~ #2 ~ \post{[ #3 ]}  }}
\newcommand{\wpt}[3]{\ensuremath{ \texttt{WP}_{\!\texttt{#1}} ~ #2 ~ \post{[ #3 ]}  }}
\newcommand{\wpp}[3]{\ensuremath{ \texttt{WP}_{\!\texttt{#1}} ~ #2 ~ \post{\{ #3 \}}  }}

\newcommand{\noc}{\text{NoC}\xspace}
\newcommand{\cuttlec}{\texttt{cuttlec}\xspace}

\newcommand{\sys}{\text{NoC-Out}\xspace}
\newcommand{\genoc}{\text{GeNoC}\xspace}
\newcommand{\acl}{\text{ACL2}\xspace}

\newcommand{\oursem}{Read-After-Write semantics\xspace}
\newcommand{\koikasemfun}{\ensuremath{\downarrow_{\R,L}}}
\newcommand{\oursemfun}{\ensuremath{\downarrow_{\R,L}^{\text{raw}}}}

\definecolor{primary}       {RGB}{   0,  48,  94 }
\definecolor{secondary}     {RGB}{   0, 105, 180 }
\definecolor{gray}          {RGB}{ 114, 119, 119 }
\definecolor{tertiary}      {RGB}{   0, 159, 227 }

\definecolor{green1}        {RGB}{ 148, 195,  86 }
\definecolor{teal1}         {RGB}{ 138, 203, 193 }
\definecolor{blue1}         {RGB}{ 132, 207, 237 }
\definecolor{magenta1}      {RGB}{ 238, 123, 174 }
\definecolor{red1}          {RGB}{ 240, 130,  98 }
\definecolor{orange1}       {RGB}{ 247, 169,  65 }

\definecolor{green2}        {RGB}{ 101, 179,  46 }
\definecolor{teal2}         {RGB}{   0, 172, 169 }
\definecolor{blue2}         {RGB}{   0, 161, 217 }
\definecolor{magenta2}      {RGB}{ 224,  49, 138 }
\definecolor{red2}          {RGB}{ 232,  65,  44 }
\definecolor{orange2}       {RGB}{ 239, 125,   0 }

\definecolor{green3}        {RGB}{   0, 137,  58 }
\definecolor{teal3}         {RGB}{   0, 131, 141 }
\definecolor{blue3}         {RGB}{   0, 119, 174 }
\definecolor{magenta3}      {RGB}{ 206,   0, 117 }
\definecolor{red3}          {RGB}{ 205,  23,  25 }
\definecolor{orange3}       {RGB}{ 201,  75,  23 }

\definecolor{c11}{RGB}{  0, 173, 239}
\definecolor{c12}{RGB}{  0, 185, 241}
\definecolor{c13}{RGB}{ 65, 199, 244}

\definecolor{c21}{RGB}{  0, 113, 187} 
\definecolor{c22}{RGB}{ 30, 131, 197}
\definecolor{c23}{RGB}{101, 154, 209}

\definecolor{c31}{RGB}{ 87,  63, 152} 
\definecolor{c32}{RGB}{113,  93, 168}
\definecolor{c33}{RGB}{140, 124, 185}

\definecolor{c41}{RGB}{145,  38, 143} 
\definecolor{c42}{RGB}{160,  83, 160}
\definecolor{c43}{RGB}{178, 121, 180}

\definecolor{c51}{RGB}{  0, 140,  78} 
\definecolor{c52}{RGB}{  6, 155, 105}
\definecolor{c53}{RGB}{ 89, 174, 134}

\definecolor{c61}{RGB}{ 97, 187,  69} 
\definecolor{c62}{RGB}{131, 197, 101}
\definecolor{c63}{RGB}{161, 209, 138}

\colorlet{green}  {green3}
\colorlet{teal}   {teal3}
\colorlet{blue}   {blue3}
\colorlet{magenta}{magenta3}
\colorlet{red}    {red3}
\colorlet{orange} {orange3}

\def\setallcolors{%
\gdef\kwcolor{\color{primary}\bfseries}%
\gdef\idcolor{\color{black}}%
\gdef\commentcolor{\color{gray}}%
\gdef\stringcolor{\color{green3}}
\gdef\namespacecolor{\color{c31}}%
\gdef\classcolor{\color{teal3}}%
\gdef\methodcolor{\color{primary}}%
\gdef\macrocolor{\color{c41}}%
\gdef\numcolor{\color{teal3}}%
}
\def\setallgray{%
\gdef\kwcolor{\color{gray}\bfseries}%
\gdef\idcolor{\color{gray}}%
\gdef\commentcolor{\color{gray}}%
\gdef\stringcolor{\color{gray}}%
\gdef\namespacecolor{\color{gray}}%
\gdef\classcolor{\color{gray}}%
\gdef\methodcolor{\color{gray}}%
\gdef\macrocolor{\color{gray}}%
\gdef\numcolor{\color{gray}}%
}
\setallcolors{}
\lstdefinelanguage{coq}{
  morekeywords={Definition, Fixpoint, Lemma, Theorem,
  Example, Proof, Inductive, Notation, Arguments, Hint,
  Instance, Ltac, Class, Variant, CoInductive, CoFixpoint, Context},
  sensitive=true, 
  morecomment=[s]{(*}{*)}, 
  morestring=[b]{"} 
}
\lstdefinelanguage{verilog}{
  morekeywords={module, endmodule, generate, assert, property, endproperty, endgenerate, always_ff, begin, end, if, else, for, genvar},
  sensitive=true, 
  morecomment=[l]{//}, 
  morecomment=[s]{/*}{*/}, 
  morestring=[b]{"} 
}
\lstdefinestyle{verilog}{
  language=verilog,
  emph      = [1]{logic, parameter, input, int},
  emphstyle = [1]{\macrocolor},
  emph      = [2]{Type, Prop},
  emphstyle = [2]{\classcolor},
  emph      = [3]{}, 
  emphstyle = [3]{\methodcolor},
  emph      = [4]{}, 
  emphstyle = [4]{\namespacecolor},
  moredelim=**[is][\classcolor]{@[}{]@},
  moredelim=**[is][\setallgray\color{gray}\aftergroup\setallcolors]{@}{@},
  moredelim=**[is][\numcolor]{§}{§},
  moredelim=**[is][\namespacecolor]{§4:}{§},
  moredelim=**[is][\methodcolor]{§3:}{§},
  moredelim=**[is][\classcolor]{§2:}{§},
  moredelim=**[is][\macrocolor]{§1:}{§},
}
\lstdefinestyle{coq}{
  language=coq,
  emph      = [1]{auto, match, with, end, fun, fix, assert, let, in, exact, ltac, ltac2, vm_compute, if, then, else, λ},
  emphstyle = [1]{\macrocolor},
  emph      = [2]{Type, Prop},
  emphstyle = [2]{\classcolor},
  emph      = [3]{}, 
  emphstyle = [3]{\methodcolor},
  emph      = [4]{}, 
  emphstyle = [4]{\namespacecolor},
  moredelim=**[is][\classcolor]{@[}{]@},
  moredelim=**[is][\setallgray\color{gray}\aftergroup\setallcolors]{@}{@},
  moredelim=**[is][\numcolor]{§}{§},
  moredelim=**[is][\namespacecolor]{§4:}{§},
  moredelim=**[is][\methodcolor]{§3:}{§},
  moredelim=**[is][\classcolor]{§2:}{§},
  moredelim=**[is][\macrocolor]{§1:}{§},
}
\lstdefinestyle{frame}{
  frame=l,
  framerule=0.8pt,
  framesep=4pt,
  xleftmargin=6.8pt, 
}
\makeatletter
\lst@AddToHook{Init}{\setlength{\lineskip}{0pt}}
\makeatother

\begin{document}

\title{\sys: A Formally-verified Network-on-Chip Library for Rule-based Hardware Designs}
\titlerunning{\sys}

\author{Max Kurze\orcidID{0009-0006-3404-2569} \and
František Farka\orcidID{0000-0001-8177-1322} \and
Sebastian Ertel\orcidID{0009-0000-3953-9810}}

\authorrunning{M. Kurze et al.}

\institute{Barkhausen Institut, Germany\\
\email{\{first\}.\{last\}@barkhauseninstitut.org}}

\maketitle

\begin{abstract}

%
The Network-on-Chip (\noc) is the communication backbone of
any multiprocessor chip.
A failure of the \noc has severe consequences for the whole
system.
Yet, no approach exists that provides formally-verified NoCs
with strong guarantees but without tedious verification effort.
%


%
Any library that generates formally-verified \noc{}s needs to
be parametric in the structure of the \noc.
This requires a hardware description language (HDL) that allows
for parametric and concurrent yet efficient hardware designs as well as the
necessary program logic to reason about them in a modular fashion.
So far, HDLs fall short in both aspects.
%


%
In this paper, we implement \sys, the first library/generator for formally-verified
$k$-dimensional \noc designs.
In order to build \sys, we extended \koika, a rule-based HDL
in the \rocq theorem prover, with support for concurrent yet efficient
\noc designs and a program logic for modular, automated reasoning.
Given a configuration, \sys produces a $k$-dimensional torus \noc in \koika,
which can then be compiled to Verilog.
Each produced \noc is equipped with a proof that it refines our formal
\noc specification;
no additional verification effort is required.
Our specification proves a strong liveness guarantee, which
consequently applies to all generated \noc{}s.
In our evaluation, we find that our verification approach is
even required to synthesize efficient \noc{s} in rule-based HDLs.

\keywords{Formal verification \and Network-on-chip \and Theorem proving \and Program logic}
\end{abstract}

\section{Introduction}
\label{sec:intro}


%
Modern hardware architectures are heterogeneous platforms
that compose full-fledged processors with specialized
accelerators in a \emph{network-on-chip (\noc)}~\cite{bertozzi2004xpipes,10.1145/1132952.1132953,kumar2002network}.
A failure or compromise of the \noc is not only costly but
jeopardizes the trust of the whole hardware platform~\cite{10.1145/3450964}.
Formal verification of strong guarantees for \noc{}s requires
expert knowledge.
%
%
Yet, an approach to \textbf{generate formally-verified \noc designs
with strong guarantees at zero end-user verification cost} does
not exist.
%


%
A network-on-chip consists of a set of \emph{routers}
interconnected via hardware channels to enable the
components of a hardware platform to
communicate with each other via messages.
As such, this network is essential to the platform and
its components.
But every hardware platform needs to meet different
requirements.
Some are optimized for performance.
Others focus on energy-efficiency.
And resistance to side-channel attacks makes security
 a pressing concern for many.
As such, hardware platforms are inherently heterogeneous
and integrate full-fledged (RISC-V) processors, tensor
compute units for machine learning, crypto accelerators,
and other domain-specific circuits.
To optimize the communication in these platforms, very
different \noc architectures and implementations exist.
Existing approaches to the formal verification of \noc designs
need to be carried out for every \noc individually,
and do not connect to actual hardware designs~\cite{genoc}.
Most notably, \genoc, a generic \noc model defined in \acl,
emits local obligations that every specific model needs to prove.
From these local obligations global properties can be inferred automatically.
However, these models are not directly connected to the corresponding
hardware implementations.
As a result, such specifications are often high-level, making them easier
to verify but less likely to match the implementation.
Meanwhile, promising approaches to formally-verified hardware
design exist but lack the necessary infrastructure.
\koika, \kami and \fjfj are hardware description
languages (HDLs) implemented in \rocq~\cite{koika,kami,fjfj}.
All three HDLs are rule-based for which the \noc presents a
particularly challenging use case when it comes to generating
efficient hardware.
All three HDLs come with well-defined and mechanized semantics.
As such, proving that an actual hardware program
refines a formal specification is possible.
Yet, scaling such proofs even to a small \noc remains a
daunting task, as a program logic for modular reasoning
together with corresponding proof automation is still missing.
As such, many \noc{}s remain unverified and thus
not trustworthy.
%


%
In order to provide strong guarantees without end-user verification
efforts, we incorporate the \noc structure into the specification.
To scale this to more than one \noc, we abstract the network structure
to $k$-dimensional tori.
To connect this specification to actual hardware implementations,
we define it in the \rocq theorem prover and introduce the missing
infrastructure into \koika to prove that a \noc implementation follows
this specification.
%


%
In this paper, we design a formally-verified \noc generator
in \koika, called \emph{\sys}.
\sys takes a configuration that specifies the number of
dimensions $k$ and produces a corresponding $k$-dimensional \noc
hardware description.
Every produced \noc is formally-verified against the
specification and requires no additional verification effort.
Furthermore, to demonstrate the soundness of the specification, we
prove a global liveness guarantee on top of it.

\begin{figure}
  \centering
  \begin{tikzpicture}[
  node distance=2cm,
  every node/.style={align=center},
  color2/.style={green},
  color3/.style={blue},
  colornew/.style={fill=teal!8},
  font=\scriptsize
]
  \node (koika) [color3] {Parametric \\ \noc design};

  \node (koika_prog) [right=0.5cm of koika] {\koika \\ program};
  \node (verilog_prog) [right=3cm of koika_prog] {Verilog \\ program};
  \draw[->] (koika_prog) -- (verilog_prog)
  node (sem_pre) [color2, midway, above] {Semantic preservation}
  node (compile) [midway, below] {compile};

  \node (koika_sem)   [color2, above=0.1cm of koika_prog] {\koika \\ semantics};
  \node (verilog_sem) [color2] at (koika_sem -| verilog_prog) {Verilog \\ semantics};

  \draw[->, dashed, color2] (koika_sem.east) -| ([xshift=-0.1cm]sem_pre.north);
  \draw[->, dashed, color2] (verilog_sem.west) -| ([xshift=0.1cm]sem_pre.north);

  \node[fit=(koika_prog) (verilog_prog) (compile)] (comp) {};

  \node (koika_ocaml) [color3, below=1.3cm of koika] {Parametric \\ \noc library};
  \node[above] (cuttlec) at (comp |- koika_ocaml) { \cuttlec };

  \node (verilog_desc) at (koika_ocaml -| verilog_prog) {Verilog \\ description};

  \node[color3] (prop) at (koika |- koika_sem) {Properties};

  \draw[->, dashed, color3] (koika_sem.west) -- (prop.east);

  \draw[->] (koika_ocaml) -- (verilog_desc);

  \draw[<-, decoration={snake, amplitude=0.5mm,pre length=0.5mm}, decorate] (koika_ocaml.north) -- (koika.south);
  \coordinate (extr1) at ($(koika_ocaml.north)!0.5!(koika.south)$);
  \draw[<-, decoration={snake, amplitude=0.5mm,pre length=0.5mm}, decorate] (cuttlec.north) -- (comp.south);
  \coordinate (extr2) at ($(cuttlec.north)!0.5!(comp.south)$);

  \node[gray, rotate=90, above, colornew, inner sep=1pt] at ([xshift=-1mm]extr1) {extract};
  \node[gray, rotate=90, above, fill=white, inner sep=1pt] at ([xshift=-1mm]extr2) {extract};

  \node (dashy) at (extr1) {};

  \draw[dashed, gray] ($(koika |- dashy) - (0.3cm,0)$) -- ($(verilog_desc |- dashy) + (0.5cm,0)$);
  \draw[dashed, gray] ({$(cuttlec)!0.5!(verilog_desc)$} |- dashy) -- ($(cuttlec)!0.5!(verilog_desc) - (0,0.5cm)$);

  \node[gray, below left] (ocaml) at ({$(cuttlec)!0.5!(verilog_desc)$} |- dashy) {\ocaml};
  \node[gray, above] (coq) at ({$(cuttlec)!0.5!(verilog_desc)$} |- dashy) {\rocq};
  \node[gray, below right] (verilog) at ({$(cuttlec)!0.5!(verilog_desc)$} |- dashy) {Verilog};




  \node (koika_sem2)   [color3, above=0.1cm of koika_sem] {Alternative semantics};
  \node (pl)           [color3, left=2cm of prop] {Program logic};
  \node (spec)         [color3] at (pl |- koika) {\noc \\ Specification};
  \node (config)       [] at (spec |- extr1) {\noc \\ configuration};

  \draw[->, dashed, color3] (prop.west) -- (pl.east);
  \draw[->, dashed, color3] (koika_sem2.west) -| (pl.north);
  \draw[->, dashed, color3] (koika.west) -- (spec.east)
  node (re) [midway, above] {Refinement};

  \draw[->] (config.east) -- ([xshift=-4mm]extr1);

  \begin{pgfonlayer}{background}
    \draw[rounded corners,colornew,draw=none] (spec.south west) |- (koika.north east) -- (koika_ocaml.south east) -| (koika.south west) -- cycle;
    \node[text=gray] [left= 0cm of koika_ocaml] {\sys \\ (\Cref{sec:noc})};
    \draw[rounded corners,colornew,draw=none] (pl.south west) |- (koika_sem2.north east) |- (koika_sem.north west) |- cycle;
    \node[text=gray] [above left= 0cm and 0cm of koika_sem2] {\koika extensions (\Cref{sec:extensions})};
  \end{pgfonlayer}
\end{tikzpicture}
  \caption{
    The \koika compile(r) and the \koika programs are
    extracted from Coq into OCaml.
    We highlight \textcolor{blue}{the contributions of
      this paper} and \textcolor{green}{the existing formal
      definitions of \koika}.
  }
  \label{fig:overview}
  \vspace*{-0.2cm}
\end{figure}

We use \koika which has an accompanying compiler with a proof that
it preserves the semantics in the translation to Verilog.
\Cref{fig:overview} presents an overview of \koika and the
twofold contributions of this paper.
We first extend \koika to then define \sys.
Specifically, we contribute:
\begin{description}
   \item[Read-After-Write (RAW) semantics for \koika]
        In the design of \sys, we noticed that the semantics of
        \koika are too strict when it comes to concurrency.
        This prevents an efficient \noc design (\Cref{sec:challenge2}).
        Hence, we implemented Read-After-Write semantics
        to increase concurrent register access.
        To show that this does not jeopardize \koika{}'s data race freedom,
        we reestablished all the guarantees of \koika including compiler correctness.
        Our RAW semantics apply to hardware designs beyond \noc{}s.
  \item[A program logic for \koika]
        We created a program logic for \koika to write specifications
        as Hoare triples and reason about \koika programs in a modular
        fashion with support for proof automation. 
        To the best of our knowledge, this is the first program logic
        for a rule-based hardware description language.
 \item[A parametric \noc library]
        These two extensions to \koika enabled the design of \sys.
        \sys includes a formal specification of $k$-dimensional on chip
        communication networks.
        In order to prove refinement only once for all $k$-dimensional
        \noc{}s, the implementation in \koika is non-trivial.
        It is parametric in the number of \noc dimensions and their respective
        sizes.
        We are unaware of such a highly parameterized hardware design that
        is formal yet practical.
\end{description}
The rest of the paper starts with an introduction into the design
of our \noc library show-casing how it goes beyond what is possible
in hardware design languages such as SystemVerilog.
In \cref{sec:problem}, we explain the challenges that we faced
in the design of \noc{s} in \koika.
Afterwards, \cref{sec:extensions} defines our extensions to
\koika that enable the design of \sys in \cref{sec:noc}.
To study the practicality of \sys, we synthesize several \noc{s}
in \cref{sec:eval}.
We find that \textbf{practical and scalable \noc
design in rule-based hardware requires strong reasoning, as in
\sys, to lift data-race-freedom guarantees from runtime to
design time}.
Finally, we review related work and point to future directions
in \cref{sec:related}.
Our \emph{admit-free} \rocq development for \sys is available as an artifact
of this paper.

\section{A library for multi-dimensional NoC design}
\label{sec:libdesign}

Support for higher-order programming in hardware
design languages is limited which makes the design
of a library for \noc{s} challenging.
At the same time, higher-order reasoning does not exist
and hence it is impossible to formally-verify such a
library itself.
The most popular language for hardware design,
SystemVerilog, supports a C-like macro system and
a more principled approach to generate code based
on module parameters and \texttt{generate} blocks.
Module parameters are wires, i.e.,
\texttt{logic} in SystemVerilog terminology.
For a library that generates \noc{s} of size $N\!\times\!N$,
we would respectively define a
module with \texttt{parameter} \texttt{N}.
\begin{lstlisting}[style=verilog, style=frame]
module noc
  #(parameter int §4:WIDTH§ = §32§, §4:N§)
   (input logic clk, rst);
\end{lstlisting}

Additionally, we parameterize the module over the
\texttt{WIDTH} of the data to be routed.
In the \texttt{noc} module, we then define the
input and output interfaces for all routers and the
channels.
\begin{lstlisting}[style=verilog, style=frame]
  §1:channel_if§ #(WIDTH) channels [§4§][§4:N§ * §4:N§] ();
  §1:channel_if§ #(WIDTH) src_if [§4:N§ * §4:N§] ();
  §1:channel_if§ #(WIDTH) dst_if [§4:N§ * §4:N§] ();
\end{lstlisting}

The \texttt{channel\_if} is the data type for data transfer
both inside the network and outside the network, i.e., data
enters the \noc via a source interface (\texttt{src\_if})
and leaves it via a destination interface (\texttt{dst\_if}).
A \texttt{generate} block would generate the $N\!\times\!N$
routers and connect them respectively.
\begin{lstlisting}[style=verilog, style=frame]
  generate
    for (genvar i = §0§; i < §4:N§ * §4:N§; i++) begin
      router #(.§4:WIDTH§, .§4:N§, .X(i / §4:N§), .Y(i % §4:N§))
        rtr (/* router connections */);
    end
  endgenerate
endmodule
\end{lstlisting}

We omit the complicated code for setting up the proper
router connections.
The router also needs to be parameterized
by its index (\texttt{X:Y}), which is necessary
for routing decisions.
\begin{lstlisting}[style=verilog, style=frame]
module router
  #(parameter int §4:WIDTH§ = §32§, §4:N§, §4:X§, §4:Y§)
   (input logic clk, rst,
    §1:channel_if.in§ dir_in [§4§],  src_if,
    §1:channel_if.out§ dir_out [§4§], dst_if);
  <|\vspace{-5pt}|>
  always_ff <|@|>(posedge clk) begin
    if (rst) begin /* reset logic */ end
    else begin /* routing / arbitration / switching */ end
  end
endmodule
\end{lstlisting}

Once more, we omit the complicated code that generalizes
the router design, i.e., routing, arbitration and switching,
over \texttt{N}, \texttt{X} and \texttt{Y}. We just assume
that it should be feasible to implement these parts in a
generic fashion.
Instead, we focus on stating properties
about the \noc as SystemVerilogAssertions (SVA),
an extension of SystemVerilog with a form of temporal logic~\cite{sva}.
An interesting liveness property would generalize
over the state of the whole \noc, the data that is
routed and the packets that are injected into
and retrieved from the \noc.
\begin{theorem}[Single \noc liveness]\label{theo:noc:liveness:base}
  Any message $m$ that managed to enter into the \noc
  at any router $i$ will \textbf{eventually} be delivered to its destination.
\end{theorem}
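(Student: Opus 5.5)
We would prove \Cref{theo:noc:liveness:base} on an abstract \noc model instead of the SystemVerilog text: a $k$-dimensional torus (here $k=2$, size $N\!\times\!N$), where each router has per-direction input buffers and routes each message by dimension-order routing. A message first moves along dimension $0$ in a fixed direction until its coordinate matches the destination, then along dimension $1$, and so on. The liveness claim will follow from two separate facts. The first is \emph{progress of routing}: every hop strictly decreases a well-founded distance measure. The second is \emph{fairness of arbitration}: a message waiting at the head of a buffer is forwarded within a bounded number of cycles. We need two explicit assumptions, since without them the theorem is false: destinations always eventually consume delivered messages, and arbitration is fair (for example round-robin).

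\textbf{Step 1 (routing measure).} For a message $m$ at router $i$ with destination $d$, define the lexicographic measure $\mu(m) = (\delta_0(i,d),\dots,\delta_{k-1}(i,d))$. Here $\delta_j$ is the remaining distance, modulo $N$, in dimension $j$ along the chosen direction. We show that the routing function either delivers $m$ (when $\mu(m)=0$) or forwards it to a neighbour on which $\mu$ strictly decreases. This is a local, per-router obligation in the style of \genoc.

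\textbf{Step 2 (no starvation at a router).} Suppose the output port that $m$ requests is free infinitely often. Then fair arbitration guarantees that $m$ is granted within at most $p$ grants, where $p$ is the number of input ports. This is an invariant over the arbiter's priority pointer.

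\textbf{Step 3 (deadlock freedom).} This is the main obstacle. A torus has cyclic channel dependencies along each ring, so plain dimension-order routing can deadlock when every buffer of a ring is full. We would first try the standard remedy and assume the design provides it: either two virtual channels per ring with a dateline, or bubble flow control. A ring then accepts a new injection only if at least one buffer slot stays free. With this in place we prove, by well-founded induction on the channel dependency order (dimension-major, then position relative to the dateline), that every channel is eventually drained. The base case is the ejection ports, which are drained by the consumption assumption. For the inductive step, a channel whose downstream channels all eventually free up is itself eventually served, by Step 2. In the bubble variant, the free slot guarantees that some message on each ring can always move, so occupancy cannot stay frozen.

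\textbf{Conclusion.} Combine the three steps by induction on $\mu(m)$. By Step 3, the output $m$ needs is eventually free infinitely often. By Step 2, $m$ then eventually advances. By Step 1, that advance strictly decreases $\mu(m)$. The induction therefore terminates, which means $m$ is delivered to its destination.
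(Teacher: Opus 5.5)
Your proposal is correct and follows essentially the same route as the paper. The paper also proves a routing-progress lemma (Manhattan distance drops by exactly one per hop, where you use a lexicographic measure), round-robin arbiter progress, deadlock freedom by induction on the channel dependency order within a dimension and then across dimensions using high/low dateline virtual channels, and liveness by induction on distance with an inner induction on the arbiter priority gap. The only differences are that the paper commits to virtual channels (so your sketched bubble-flow-control alternative is not needed), proves arbitration fairness from the round-robin specification rather than assuming it, and axiomatizes that cores consume packets immediately rather than eventually.
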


Since SVA does not support quantification, this liveness property
must be constructed from smaller properties,
one for each of the $N\!\times\!N$ routers.
\begin{lstlisting}[style=verilog, style=frame]
generate
  for (genvar i = 0; i < §4:N§ * §4:N§; i++) begin
    assert property (liveness #(i));
  end
endgenerate
\end{lstlisting}

For every router, we assert that if a packet entered
at this router then it will eventually arrive at its
destination and the transmitted \texttt{data} did not change.
\begin{lstlisting}[style=verilog,style=frame]
property liveness #(int §4:SRC§);
  logic [WIDTH-1:0] data;
  logic [ADDR_W-1:0] dest;
  <|\vspace{-5pt}|>
  (§3:src_if§[§4:SRC§].valid, dest = §3:src_if§[§4:SRC§].dest, data = §3:src_if§[§4:SRC§].data)
  <||-> \#\#[0:\$]|>
  (§3:dst_if§[dest].valid &&  §3:dst_if§[dest].data == data);
endproperty
\end{lstlisting}

The connective \pre{$A$} \texttt{|-> \#\#[0:\$]} \post{$B$} denotes that
if assertion \pre{$A$} holds in the current clock cycle
then \post{$B$} must hold after an arbitrary
number (\texttt{\$}) of clock cycles%
\footnote{A hardware design is essentially an infinite loop
where each loop iteration is referred to as a \emph{clock cycle}.}.

\begin{figure}
  \centering
  \begin{subfigure}[t]{0.53\textwidth}
    \centering
    \vspace{0pt}
    \newcommand{\spcr}{0.2cm}
\newcommand{\spcc}{0.2cm}
\begin{tikzpicture}[
  font=\footnotesize,
  node distance=0.01\textwidth
  ]

   \node[] (c) at (0,0) { \underline{Configuration} };
   \node[] (hd) [right = 0.4cm of c] { \underline{Hardware Design} };

   \matrix (2x2net) [
    column sep=0.3cm,
    row sep=0.3cm,
    matrix of nodes,
    nodes in empty cells,
    nodes={outer sep=0pt,draw,fill=white},
    ]
    [below = of hd]
  {
     & \\
     & \\
  };

  \begin{pgfonlayer}{background}
  \begin{scope}[rounded corners=0.1cm, red]
    \draw (2x2net-1-1) -- (2x2net-1-2);
    \draw (2x2net-2-1) -- (2x2net-2-2);
    \draw (2x2net-1-2) -| ($(2x2net-1-2) + (1.5*\spcr,-\spcr)$) -- ($(2x2net-1-1) + (-1.5*\spcr,-\spcr)$) |- (2x2net-1-1);
    \draw (2x2net-2-2) -| ($(2x2net-2-2) + (1.5*\spcr,-\spcr)$) -- ($(2x2net-2-1) + (-1.5*\spcr,-\spcr)$) |- (2x2net-2-1);
  \end{scope}
  \end{pgfonlayer}

  \begin{pgfonlayer}{background}
  \begin{scope}[rounded corners=0.1cm, blue]
    \draw (2x2net-1-1) -- (2x2net-2-1);
    \draw (2x2net-1-2) -- (2x2net-2-2);
    \draw (2x2net-1-1) |- ($(2x2net-1-1) + (\spcc,\spcc)$) -- ($(2x2net-2-1) + (\spcc,-1.2*\spcc)$) -| (2x2net-2-1);
    \draw (2x2net-1-2) |- ($(2x2net-1-2) + (\spcc,\spcc)$) -- ($(2x2net-2-2) + (\spcc,-1.2*\spcc)$) -| (2x2net-2-2);
  \end{scope}
  \end{pgfonlayer}

  \matrix (3x3net) [
    column sep=0.3cm,
    row sep=0.3cm,
    matrix of nodes,
    nodes in empty cells,
    nodes={outer sep=0pt,draw,fill=white},
    ]
    [below =4pt of 2x2net]
  {
     & & \\
     & & \\
     & & \\
  };

  \begin{pgfonlayer}{background}
  \begin{scope}[rounded corners=0.1cm, red]
    \draw (3x3net-1-1) -- (3x3net-1-2);
    \draw (3x3net-2-1) -- (3x3net-2-2);
    \draw (3x3net-3-1) -- (3x3net-3-2);
    \draw (3x3net-1-2) -- (3x3net-1-3);
    \draw (3x3net-2-2) -- (3x3net-2-3);
    \draw (3x3net-3-2) -- (3x3net-3-3);
    \draw (3x3net-1-3) -| ($(3x3net-1-3) + (1.5*\spcr,-\spcr)$) -- ($(3x3net-1-1) + (-1.5*\spcr,-\spcr)$) |- (3x3net-1-1);
    \draw (3x3net-2-3) -| ($(3x3net-2-3) + (1.5*\spcr,-\spcr)$) -- ($(3x3net-2-1) + (-1.5*\spcr,-\spcr)$) |- (3x3net-2-1);
    \draw (3x3net-3-3) -| ($(3x3net-3-3) + (1.5*\spcr,-\spcr)$) -- ($(3x3net-3-1) + (-1.5*\spcr,-\spcr)$) |- (3x3net-3-1);
  \end{scope}

  \begin{scope}[rounded corners=0.1cm, blue]
    \draw (3x3net-1-1) -- (3x3net-2-1);
    \draw (3x3net-1-2) -- (3x3net-2-2);
    \draw (3x3net-1-3) -- (3x3net-2-3);
    \draw (3x3net-2-1) -- (3x3net-3-1);
    \draw (3x3net-2-2) -- (3x3net-3-2);
    \draw (3x3net-2-3) -- (3x3net-3-3);
     \draw (3x3net-1-1) |- ($(3x3net-1-1) + (\spcc,\spcc)$) -- ($(3x3net-3-1) + (\spcc,-1.2*\spcc)$) -| (3x3net-3-1);
    \draw (3x3net-1-2) |- ($(3x3net-1-2) + (\spcc,\spcc)$) -- ($(3x3net-3-2) + (\spcc,-1.2*\spcc)$) -| (3x3net-3-2);
    \draw (3x3net-1-3) |- ($(3x3net-1-3) + (\spcc,\spcc)$) -- ($(3x3net-3-3) + (\spcc,-1.2*\spcc)$) -| (3x3net-3-3);
  \end{scope}
  \end{pgfonlayer}

  \node[anchor=west] (c1) at (c.west |- 2x2net) {$2\!\times\!2$};
  \coordinate (3x3net_) at ($(3x3net.west)+(0,0.1cm)$);

  \node[anchor=west] (c2) at (c.west |- 3x3net_) {$3\!\times\!3$};
  \node[] (cn) at ($(c2) + (0,-1.2cm)$) {$N\!\times\!N$};
  \node[] (dots) at ($(c2)!0.4!(cn)$) {$\vdots$};

  \node[] (nxnnet) at (hd.south |- cn) {$\ldots$};

  \coordinate (boxl) at ($(c.south east) - (0.8cm,0)$);
  \coordinate (boxr) at ($(hd.south west) + (0.2cm,0)$);

  \draw[fill=red!10, draw=red] (boxl |- cn.south east) |- (boxr) |- (boxl |- cn.south east);

  \node[align=center] (lib) at ($(boxl)!0.5!(boxr) + (0,-1.2cm)$) {\noc\\library};

  \draw[-latex] (c1.east) -- (boxl |- c1.east);
  \draw[dashed] (boxl |- c1.east) -- (boxr |- 2x2net.east);
  \draw[-latex] (boxr |- 2x2net.east) -- (2x2net.west);

  \draw[-latex] (c2.east) -- (boxl |- c2.east);
  \draw[dashed] (boxl |- c2.east) -- (boxr |- 3x3net_);
  \draw[-latex] (boxr |- 3x3net_) -- (3x3net_);

  \draw[-latex] (cn.east) -- (boxl |- cn.east);
  \draw[dashed] (boxl |- cn.east) -- (boxr |- nxnnet.east);
  \draw[-latex] (boxr |- nxnnet.east) -- (nxnnet.west);

  \node[align=center] (po1) [right=of 2x2net] {\textcolor{red}{proof} \\ \textcolor{red}{obligation}};
  \node[align=center] (po2) [right=of 3x3net] {\textcolor{red}{proof} \\ \textcolor{red}{obligation}};
  \node[align=center] (po3) [right=of nxnnet] {\textcolor{red}{proof} \\ \textcolor{red}{obligation}};

 \end{tikzpicture}
  \end{subfigure}
  \begin{subfigure}[t]{0.45\textwidth}
    \centering
    \vspace{0pt}
    \hspace*{-0.5cm}
    \newcommand{\spcr}{0.2cm}
\newcommand{\spcc}{0.2cm}
\begin{tikzpicture}[
  font=\footnotesize,
  node distance=0.01\textwidth
  ]

  \node[] (c) at (0,0) { \underline{Configuration} };
  \node[] (hd) [right = 0.5cm of c] { \underline{Hardware Design} };

  \matrix (2x2net) [
    column sep=0.3cm,
    row sep=0.3cm,
    matrix of nodes,
    nodes in empty cells,
    nodes={outer sep=0pt,draw,fill=white},
  ] [xshift=0.7cm, below = of hd]
  {
     & \\
     & \\
  };

  \begin{pgfonlayer}{background}
  \begin{scope}[rounded corners=0.1cm, red]
    \draw (2x2net-1-1) -- (2x2net-1-2);
    \draw (2x2net-2-1) -- (2x2net-2-2);
    \draw (2x2net-1-2) -| ($(2x2net-1-2) + (1.5*\spcr,-\spcr)$) -- ($(2x2net-1-1) + (-1.5*\spcr,-\spcr)$) |- (2x2net-1-1);
    \draw (2x2net-2-2) -| ($(2x2net-2-2) + (1.5*\spcr,-\spcr)$) -- ($(2x2net-2-1) + (-1.5*\spcr,-\spcr)$) |- (2x2net-2-1);
  \end{scope}

  \begin{scope}[rounded corners=0.1cm, blue]
    \draw (2x2net-1-1) -- (2x2net-2-1);
    \draw (2x2net-1-2) -- (2x2net-2-2);
    \draw (2x2net-1-1) |- ($(2x2net-1-1) + (\spcc,\spcc)$) -- ($(2x2net-2-1) + (\spcc,-1.2*\spcc)$) -| (2x2net-2-1);
    \draw (2x2net-1-2) |- ($(2x2net-1-2) + (\spcc,\spcc)$) -- ($(2x2net-2-2) + (\spcc,-1.2*\spcc)$) -| (2x2net-2-2);
  \end{scope}
  \end{pgfonlayer}

  \matrix (3x3net) [
    column sep=0.3cm,
    row sep=0.3cm,
    matrix of nodes,
    nodes in empty cells,
    nodes={outer sep=0pt,draw,fill=white},
    ] [xshift=-0.26cm, below =4pt of 2x2net]
  {
     & & \\
     & & \\
     & & \\
  };

  \begin{pgfonlayer}{background}
  \begin{scope}[rounded corners=0.1cm, red]
    \draw (3x3net-1-1) -- (3x3net-1-2);
    \draw (3x3net-2-1) -- (3x3net-2-2);
    \draw (3x3net-3-1) -- (3x3net-3-2);
    \draw (3x3net-1-2) -- (3x3net-1-3);
    \draw (3x3net-2-2) -- (3x3net-2-3);
    \draw (3x3net-3-2) -- (3x3net-3-3);
    \draw (3x3net-1-3) -| ($(3x3net-1-3) + (1.5*\spcr,-\spcr)$) -- ($(3x3net-1-1) + (-1.5*\spcr,-\spcr)$) |- (3x3net-1-1);
    \draw (3x3net-2-3) -| ($(3x3net-2-3) + (1.5*\spcr,-\spcr)$) -- ($(3x3net-2-1) + (-1.5*\spcr,-\spcr)$) |- (3x3net-2-1);
    \draw (3x3net-3-3) -| ($(3x3net-3-3) + (1.5*\spcr,-\spcr)$) -- ($(3x3net-3-1) + (-1.5*\spcr,-\spcr)$) |- (3x3net-3-1);
  \end{scope}

  \begin{scope}[rounded corners=0.1cm, blue]
    \draw (3x3net-1-1) -- (3x3net-2-1);
    \draw (3x3net-1-2) -- (3x3net-2-2);
    \draw (3x3net-1-3) -- (3x3net-2-3);
    \draw (3x3net-2-1) -- (3x3net-3-1);
    \draw (3x3net-2-2) -- (3x3net-3-2);
    \draw (3x3net-2-3) -- (3x3net-3-3);
     \draw (3x3net-1-1) |- ($(3x3net-1-1) + (\spcc,\spcc)$) -- ($(3x3net-3-1) + (\spcc,-1.2*\spcc)$) -| (3x3net-3-1);
    \draw (3x3net-1-2) |- ($(3x3net-1-2) + (\spcc,\spcc)$) -- ($(3x3net-3-2) + (\spcc,-1.2*\spcc)$) -| (3x3net-3-2);
    \draw (3x3net-1-3) |- ($(3x3net-1-3) + (\spcc,\spcc)$) -- ($(3x3net-3-3) + (\spcc,-1.2*\spcc)$) -| (3x3net-3-3);
  \end{scope}
  \end{pgfonlayer}

  \node[anchor=west] (c1) at (c.west |- 2x2net) {$2\!\times\!2$};
  \coordinate (3x3net_) at ($(3x3net.west)+(0,0.1cm)$);

  \node[anchor=west] (c2) at (c.west |- 3x3net_) {$3\!\times\!3$};
  \node[] (cn) at ($(c2) + (0,-1.2cm)$) {$N\!\times\! \textcolor{green}{\ldots \!\times\! N}$};
  \node[] (dots) at ($(c2)!0.4!(cn)$) {$\vdots$};

  \node[] (nxnnet) at (hd.south |- cn) {$\ldots$};

  \coordinate (boxl) at ($(c.south east) - (0.7cm,0)$);
  \coordinate (boxr) at ($(hd.south west) + (0.8cm,0)$);

  \draw[fill=green!10, draw=green] (boxl |- cn.south east) |- (boxr) |- (boxl |- cn.south east);

  \node[align=center] (lib) at ($(boxl)!0.5!(boxr) + (0,-1.2cm)$) {\noc\\library};

  \node[align=center] (po) [below = 0.4cm of lib] {\textcolor{green}{higher-order} \\ \textcolor{green}{proof}};

  \draw[-latex] (c1.east) -- (boxl |- c1.east);
  \draw[dashed] (boxl |- c1.east) -- (boxr |- 2x2net.east);
  \draw[-latex] (boxr |- 2x2net.east) -- (2x2net.west);

  \draw[-latex] (c2.east) -- (boxl |- c2.east);
  \draw[dashed] (boxl |- c2.east) -- (boxr |- 3x3net_);
  \draw[-latex] (boxr |- 3x3net_) -- (3x3net_);

  \draw[-latex] (cn.east) -- (boxl |- cn.east);
  \draw[dashed] (boxl |- cn.east) -- (boxr |- nxnnet.east);
  \draw[-latex] (boxr |- nxnnet.east) -- (nxnnet.west);

 \end{tikzpicture}
  \end{subfigure}
\\
  \begin{subfigure}[t]{0.45\textwidth}
    \vspace{-\baselineskip}
    \caption{External proof obligations}
    \label{fig:lib:verilog}
  \end{subfigure}
  \begin{subfigure}[t]{0.50\textwidth}
    \vspace{-\baselineskip}
    \caption{Internal higher-order proof (\emph{\sys})}
    \label{fig:lib:ours}
  \end{subfigure}

  \caption{Two different \noc library designs:
    In languages such as SystemVerilog, it is possible to emit
    proof obligations with every instantiated \noc, but the state
    space quickly explodes making them unprovable for model checkers.
    In \sys, we conduct a single higher-order
    proof about all the \noc{s} that the library generates.
    No additional verification is necessary anymore while \sys scales
    to \noc{s} with arbitrary dimensions.
  }
\end{figure}

\paragraph{Implications}
At this point, we generate a proof obligation for every instantiated
\noc as visualized in \cref{fig:lib:verilog}.
Our stated property is valid with respect to the SVA standard, but
it is unclear whether it is supported by the various proprietary
formal tools.
Despite that, the proof effort is substantial, even for
small instances of the \noc.
Formal tools for SVA usually use a combination of model checking and
SAT solving and as such depend heavily on the state space of the design.
Assume a benevolent approximation, omitting all the additional complexity
of a \noc, that wants to fulfill the above liveness property (which
we implement and verify in the rest of the paper) and assumes a
data \texttt{WIDTH} of 1 bit.
Then even a small $2\!\times\!2$ \noc already contains $2^{36}$ states.
Recent work in modelling such \noc{s} in a model checker reports
verification times ranging from hours to days~\cite{9858921}.
Even worse, their property is weaker than the liveness property that we
stated above. Specifically, they only used a local liveness property over each individual router.
That is, the \texttt{liveness} property -- even for this small $2\!\times\!2$
instance of our \texttt{noc} module -- would be unbearable to check.

\paragraph{Library reasoning}
At this point, we would like to hide all these proof obligations
inside the library as visualized in \cref{fig:lib:ours}.
That is, we would like to reason about all the \noc{s} that
the library could ever generate.
\begin{theorem}[\noc liveness]\label{theo:noc:liveness:lib}
  For every generated \noc, \cref{theo:noc:liveness:base} holds.
\end{theorem}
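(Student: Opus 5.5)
I would prove \cref{theo:noc:liveness:lib} once, for all configurations at the same time, and never instance by instance. The proof splits into a refinement step and a liveness step.

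\textbf{Abstract specification.} First I fix an abstract specification of a $k$-dimensional torus \noc in \rocq, parametric in $k$ and in the sizes $N_1,\dots,N_k$.
\begin{itemize}
\item Its state maps each router to the contents of its buffers, one buffer per incoming direction plus the source interface.
\item A step relation moves messages one hop along a deterministic dimension-ordered routing function: correct dimension $1$ first, then dimension $2$, and so on, always in the positive direction of the ring.
\item A message leaves the network at its destination router.
\end{itemize}

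\textbf{Liveness at the specification level.} I prove \cref{theo:noc:liveness:base} for this specification with a well-founded ranking argument. For an in-flight message $m$, define a measure as the lexicographic pair of
\begin{itemize}
\item the remaining hop distance $d(m)=\sum_i ((\mathit{dst}_i-\mathit{pos}_i) \bmod N_i)$, and
\item the position of $m$ in its local queue together with its rank under a round-robin arbiter.
\end{itemize}
I then show that this measure is non-increasing in every cycle and strictly decreases within a bounded number of cycles.

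The decrease relies on two facts. Arbitration is fair, so $m$ at the head of a buffer is granted within at most as many cycles as the router has inputs. Downstream buffers eventually drain, which requires absence of deadlock. Dimension-ordered routing on a torus has cyclic channel dependencies along each ring. I would break them either with two virtual channels per dimension and a dateline (the standard Dally–Seitz argument), or with a bubble-flow condition: a message is injected into a ring only if at least two free slots remain. Either way, I build an acyclic channel-dependency order and argue by induction along it, from sinks backwards, that every channel is eventually freed. Finally, the routing function never modifies the payload, and this gives the data-integrity part of \cref{theo:noc:liveness:base}.

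\textbf{Refinement of the \koika implementation.} Next I show that every \noc the library generates refines this specification. I use the program logic and the Read-After-Write semantics announced in \cref{fig:overview}. For one generic router rule, parametric in its coordinate and in the configuration, I prove a Hoare triple of the form \hoare{\mathit{Inv} \wedge \mathit{abs}(s)=\sigma}{\mathit{rule}}{\mathit{Inv} \wedge \mathit{abs}(s')\in \mathit{step}(\sigma)}.

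The network is then the composition of the router rules over all coordinates. I extend the triple to the whole cycle by induction over the list of generated rules, using a frame property: each rule touches only its own registers and the channel registers adjacent to it. Under the RAW semantics, the writes of one router become visible as reads to its neighbours within the same cycle, and this is exactly the concurrency the specification step models.

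By forward simulation, the abstract trace of any execution is then a trace of the specification. Liveness transfers along it, since it is a property of all specification traces and the simulation preserves the fairness assumptions. Compiler correctness of \koika lifts the result to the Verilog.

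\textbf{Main obstacle.} I expect the hardest step to be the deadlock-freedom part of the ranking argument, generalized over arbitrary $k$ and arbitrary $N_i$. The proof of the acyclic channel order has to be carried out on an inductively defined torus, and no concrete instance can be enumerated. I would first try the bubble condition, because it requires no virtual channels in the implementation and reduces to a counting invariant per ring. That invariant is: the number of messages in a ring is at most its capacity minus one, and it is preserved by every step.
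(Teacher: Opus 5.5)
Your overall architecture matches the paper's: liveness is proved once on an abstract model parametric in \texttt{dims}, and the generated schedule is then shown to refine that model using the program logic. Your virtual-channel branch is also essentially the paper's proof. Inducting along an acyclic channel-dependency order, starting from sinks that always accept (the paper's \cref{ax:noc:output}) and then over the dimensions, gives \cref{theo:noc:deadlockfree}. Your lexicographic (distance, round-robin rank) measure is the paper's induction on distance with an inner induction on priority via \cref{lem:arb:progress}.

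\textbf{Gap 1: the bubble route does not prove the statement.} This is the route you say you would try first. The invariant ``at most capacity minus one messages per ring'' only guarantees a free slot somewhere in each ring, so \emph{some} message can move. It does not guarantee that a \emph{given} message waiting to be injected, or to turn into the next dimension, ever sees two free slots when its arbiter serves it. So the key step ``the target port is eventually free, hence round-robin eventually grants $m$'' has no counterpart; bubble flow control gives deadlock freedom, not starvation freedom in general. The bubble scheme also keeps the cyclic dependencies, so ``either way, an acyclic order'' is false for that branch. Moreover, the abstract model has to be one that the \emph{generated} designs refine. Those designs have single-flit channels, route in the shorter direction (\texttt{Up}/\texttt{Down}), and use \texttt{High}/\texttt{Low} virtual channels. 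So neither a bubble model nor positive-direction-only routing can serve as your specification.

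\textbf{Gap 2: the refinement step misreads RAW and how rules compose.} RAW does not make a neighbour's writes visible in the same cycle; \kread{1}{\cdot} forwarding already existed. What RAW adds is that the \emph{initial} value stays readable via \kread{0}{\cdot} after another rule has written the register, and it changes only reads.

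Same-cycle forwarding is in fact infeasible here. The channel graph of a torus is cyclic, so for some channel the reader's \kread{1}{r} precedes the writer's \kwrite{0}{r}{\cdot} in the schedule, and the writer aborts. Where forwarding did succeed, messages would move several hops per cycle, which a one-hop step does not model. In the pipeline-parallel design every router acts on start-of-cycle channel contents, and that is why one cycle matches one synchronous step.

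Your per-rule posts $\mathit{abs}(s')\in\mathit{step}(\sigma)$, chained over all rules, yield as many spec steps as there are rules. Also, between rules there is no committed state to abstract. The intermediate assertions must track, per channel, both the initial value (still read by later neighbours) and the newly written value (what gets committed). That is exactly the register relation \texttt{[$P_0$,$P_1$,$P_2$]} of \cref{fig:logstate}. The composed result should be the single schedule-level triple of \cref{theo:noc:refine} against the deterministic \texttt{ns\_step}.

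That triple must be \emph{total}. A partial triple holds vacuously for an aborting rule, while the schedule silently drops that router's writes, so nothing about the cycle would follow.
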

If we could do this then there would be no further proof obligation
for the generated \noc{s}.
A -- hypothetical -- property in SVA would therefore have to reason about
the \texttt{noc} module itself.
\begin{lstlisting}[style=verilog,style=frame]
property liveness_lib;
  <|$\forall$|> §4:N§, noc #(.§4:N§).§3:liveness§
endproperty
\end{lstlisting}

But this is impossible to state in SVA.
First, SVA does not have a notion of universal quantification ($\forall$).
Second, the \texttt{N} which is quantified over is a module parameter that
gets elaborated even before any formal framework sees it.
That is, it is impossible to even state such a property let alone
reasoning about it.
And as a consequence, the expressivity of SystemVerilog is clearly insufficient
for the goals of this paper.
Instead, we decided to use \koika, a language for hardware design
embedded in the Rocq prover, to construct our \noc generator along
with the higher-order proof for the liveness property
(\Cref{theo:noc:liveness:lib}).
Beyond that, \sys does not only allow for 2-dimensional \noc designs but
generalizes to arbitrary-sized $k$-dimensional tori.

\section{Hardware design in \koika}
\label{sec:problem}

To introduce the challenges and shortcomings of \koika that
we encountered in the design of \sys,
we start from a simple \noc and gradually add complexity.
To introduce \koika's programming model, we define a simple
3-staged \noc pipeline with unidirectional messaging.
This pipeline suffices to show that stating properties about
the execution is neither modular nor idiomatic and makes proof
challenging.
Afterwards, we add bidirectional messaging, i.e., messages may also
travel the dimensions in reverse.
This increases concurrency inside the routers and shows that
\koika's semantics would enforce inefficient \noc designs.
The \sys library in \cref{sec:noc} then generalizes
over many aspects that are left specific in this section.

\subsection{A 3-stage pipeline \noc}
\begin{figure}
  \centering
  \begin{tikzpicture}[
  node distance=0.3cm,
  font=\footnotesize,
]
  \node[draw] (r0) at (0,0) { \ilstRocq{router}$_0$ };
  \node[draw] (r1) [right = 0.15\textwidth of r0] { \ilstRocq{router}$_1$ };
  \node[draw] (r2) [right = 0.15\textwidth of r1] { \ilstRocq{router}$_2$ };

  \draw[-latex] (r0.east) --node[midway,fill=white]{\ilstRocq{up}$_0$} (r1.west);
  \draw[-latex] (r1.east) --node[midway,fill=white]{\ilstRocq{up}$_1$} (r2.west);

  \node[draw,rounded corners] (e1) [left = 0.05\textwidth of r0] { \ilstRocq{noc_enter} };
  \node[draw,rounded corners] (e2) [right = 0.05\textwidth of r2] { \ilstRocq{noc_exit} };

  \draw[densely dotted,->] (e1.east) -- (r0.west);
  \draw[densely dotted,->] (r2.east) -- (e2.west);

\end{tikzpicture}
  \caption{The 3-stage \noc pipeline.}
  \label{fig:router:noc}
\end{figure}
\begin{figure}
%
%
\begin{subfigure}[b]{0.2\textwidth}
\begin{lstlisting}
Variant reg_t :=

| up<|$_0$|>
| up<|$_1$|>
.
\end{lstlisting}
\captionsetup{justification=raggedright,singlelinecheck=false}
\caption{Definition}
\label{fig:registers:def}
\end{subfigure}
\hfill
\begin{subfigure}[b]{0.32\textwidth}
  \centering
\begin{lstlisting}
Definition R reg :=
match reg with
| up<|$_0$|> <|=>|> §3:bits_t§ flit_size
| up<|$_1$|> <|=>|> §3:bits_t§ flit_size
end.
\end{lstlisting}
\captionsetup{justification=raggedright,singlelinecheck=false}
\caption{Data types}
\label{fig:registers:types}
\end{subfigure}
\hfill
\begin{subfigure}[b]{0.34\textwidth}
  \centering
\begin{lstlisting}
Definition r reg : R reg :=
match reg with
| up<|$_0$|> <|=>|> §3:empty_flit§
| up<|$_1$|> <|=>|> §3:empty_flit§
end.
\end{lstlisting}
\captionsetup{justification=raggedright,singlelinecheck=false}
\caption{Initialization}
\label{fig:registers:init}
\end{subfigure}
%
\caption{The \koika definition registers for the 2 channels in a 3-stage \noc pipeline.}
\label{fig:registers}
\end{figure}

To introduce \koika's programming model, we develop the
simple unidirectional 3-stage pipeline from \cref{fig:router:noc}.
Messages enter the \noc at the router with the lowest
dimensional index (\texttt{router}$_{0}$),
travel the dimension \texttt{up}-wards (from left to right) and
exit the \noc at the router with the highest
dimensional index (\texttt{router}$_{2}$).
To exchange messages, two routers are connected via a \emph{channel},
i.e., a queue that preserves FIFO ordering of messages.
As such, for every channel there is a router that sends/enqueues
messages and another router that receives/dequeues these messages.
For sake of simplicity, we degenerate the channels of our \noc
in this paper to queues of size 1.
We leave the extension to channels that store more than a single
message as future work.
\koika's programming model abstracts over hardware
concepts such as wires and clocks thereby making it
amenable to non-hardware engineers.
A \koika program consists of three parts:
the register definition,
the functional specification, and
an execution schedule.
\paragraph{Registers}
Hardware programs execute in cycles.
Registers are the state of a hardware program, i.e.,
a value stored to a register in one cycle is available
in the next cycle.
The channels capture the state, i.e., the in-flight messages,
of our \noc and as such need to be defined as registers.
\Cref{fig:registers} presents the three steps to
define the set of registers (\ref{fig:registers:def}),
specify their types (\ref{fig:registers:types}) and
initialize them (\ref{fig:registers:init}).
A channel \texttt{up}$_{n}$ sends a message
from a router at index $n$
to a router at index $n+1$.
Similarly, in a bidirectional \noc,
a channel \texttt{down}$_{n}$ sends a message
from a router at index $n$
to a router at index $n-1$.
The type of the channels is \texttt{bits\_t flit\_size}, i.e.,
a bit vector of the size of a single message,
also called a \emph{flit}.
As such, our channels can store only a single message.
Since every hardware state needs to be initialized, we
start from an \texttt{empty\_flit}, which zeroes all bits.
\paragraph{Actions}
Expressions in \koika are called \emph{actions}.
An action may perform operations on bit vectors,
incorporate condition control flow,
call external functions, and
read from or write to registers.
That is, actions can only communicate with other
actions via registers, e.g., our routers communicate
via channel registers.
For our \noc, we implement three actions, one for
each variant of the routers in the pipeline.
In \texttt{router}$_0$ (\ref{fig:router:left}), a
message enters via the (external) interface of the \noc (Line~2)
and is sent \texttt{up}wards in the dimension (Line~3).
The intermediate \texttt{router}$_1$ (\ref{fig:router:mid})
retrieves the message from channel \texttt{up}$_0$ (Line~2) and
pipes it \texttt{up}wards via channel \texttt{up}$_1$ (Line~3).
Lastly, \texttt{router}$_2$ (\ref{fig:router:right}) retrieves
the message from channel \texttt{up}$_1$ (Line~2) and emits it via
the (external) \noc interface.
\paragraph{Schedule}
To define a \koika program, we register the three actions
as \emph{rules} in a schedule (see \cref{fig:scheduler}).
The schedule represents one hardware execution
cycle where every rule, i.e., every action, runs exactly once.%
\footnote{From this point on, we will use ``rule'' to mean a scheduled action.}

\begin{figure}
\begin{subfigure}[b]{0.30\textwidth}
\begin{lstlisting}
Definition router<|$_0$|> := <{
  let m = §3:noc_enter§()
  in write<|$_0$|>(up<|$_0$|>, m)
}>.
\end{lstlisting}
    \captionsetup{justification=raggedright,singlelinecheck=false}
    \caption{Leftmost router}
    \label{fig:router:left}
  \end{subfigure}
  \hfill
  \begin{subfigure}[b]{0.30\textwidth}
\begin{lstlisting}
Definition router<|$_1$|> := <{
  let m = read<|$_0$|>(up<|$_0$|>)
  in write<|$_0$|>(up<|$_1$|>, m)
}>.
\end{lstlisting}
    \captionsetup{justification=raggedright,singlelinecheck=false}
    \caption{Intermediate router}
    \label{fig:router:mid}
 \end{subfigure}
  \hfill
  \begin{subfigure}[b]{0.30\textwidth}
\begin{lstlisting}
Definition router<|$_2$|> := <{
  let m = read<|$_0$|>(up<|$_1$|>)
  in §3:noc_exit§(m)
}>.
\end{lstlisting}
    \captionsetup{justification=raggedright,singlelinecheck=false}
    \caption{Rightmost router}
    \label{fig:router:right}
  \end{subfigure}
  \\
\begin{subfigure}{1\textwidth}
  \centering
  \vspace{7pt}
\begin{minipage}{0.76\linewidth}
\begin{lstlisting}[escapeinside={§<}{>§}]
Definition schedule := §3:router$_0$§ §<|>>§ §3:router$_1$§ §<|>>§ §3:router$_2$§ §<|>>§ §4:done§.
\end{lstlisting}
\end{minipage}
\caption{Scheduling the rules, i.e., the different routers, in the \noc.}
\label{fig:scheduler}
\end{subfigure}
  \caption{
    The three router
    implementations for our simple unidirectional
    \noc pipeline.}
  \label{fig:routers}
\end{figure}

\subsection{Parallel execution but sequential reasoning}

\koika executes actions in parallel
but reasoning about them remains sequential.
All actions of a schedule execute in parallel and
thus \koika needs to prevent data races, i.e., two
actions that access the same register.
In order to resolve such concurrent accesses,
the execution is transactional.
That is, per-register-logs record reads and writes.
The final value of a register is only \emph{committed}
at the end of the cycle when all actions were executed.
If a data race occurred then the first action in the schedule
wins and the succeeding conflicting actions are aborted, i.e.,
\koika discards their effects to the registers.
As such, reasoning about a program either during the implementation
or in a proof is sequential: one rule at a time.
For a single action, it follows the sequential control flow
of its definition and at the level of a whole cycle it
follows the order stated in the schedule.

\subsection{Challenge 1: Lack of modular reasoning}
\label{sec:challenge1}
\begin{figure}
  \centering
  \begin{tikzpicture}[
  node distance=0.3cm,
  font=\footnotesize,
  color1/.style={orange},
  color2/.style={blue},
]
  \node[draw] (r0) at (0,0) { \lstinline{router}$_0$ };
  \node[draw] (r1) [right = 0.3\textwidth of r0] { \lstinline{router}$_1$ };
  \node[draw] (r2) [right = 0.3\textwidth of r1] { \lstinline{router}$_2$ };

  \draw[-latex] (r0.east) -- (r1.west)
  node (reg0) [midway,fill=white]{\lstinline{up}$_0$};
  \draw[-latex] (r1.east) -- (r2.west)
  node (reg1) [midway,fill=white]{\lstinline{up}$_1$};

 \draw (r0.north east)
  edge [dashed,bend left,-latex]
  node[midway, above, color1]{\texttt{write$_{0}$}}
  (reg0.north west)
  ;
 \draw (reg0.north east)
  edge [dashed,bend left,-latex]
  node[midway, above, color1]{\texttt{read$_{0}$}}
  (r1.north west)
  ;
 \draw (r1.north east)
  edge [dashed,bend left,-latex]
  node[midway, above,color1]{\texttt{write$_{0}$}}
  (reg1.north west)
  ;
 \draw (reg1.north east)
  edge [dashed,bend left,-latex]
  node[midway, above, color1]{\texttt{read$_{0}$}}
  (r2.north west)
  ;

 \draw (r0.south east)
  edge [dashed,bend right,-latex]
  node[midway, below, color2]{\texttt{write$_{0}$}}
  (reg0.south west)
  ;
 \draw (reg0.south east)
  edge [dashed,bend right,-latex]
  node[midway, below, color2]{\texttt{read$_{1}$}}
  (r1.south west)
  ;
 \draw (r1.south east)
  edge [dashed,bend right,-latex]
  node[midway, below,color2]{\texttt{write$_{0}$}}
  (reg1.south west)
  ;
 \draw (reg1.south east)
  edge [dashed,bend right,-latex]
  node[midway, below, color2]{\texttt{read$_{1}$}}
  (r2.south west)
  ;

\end{tikzpicture}
  \caption{
    \koika's ephemeral history registers allow a \textcolor{orange}{throughput-} and
    a \textcolor{blue}{latency-}oriented \noc design.
  }
  \label{fig:noc:sem}
\end{figure}
\koika actions can exchange data even in the same cycle via
\emph{ephemeral history registers (EHRs)}. Such registers remember
and allow access to some of their previous values, i.e. their \emph{history}.
In \koika, a call to \kread{0}{\texttt{up}} retrieves the initial value,
i.e., the value stored in register \texttt{up} at the beginning
of the cycle.
Similarly, \kwrite{0}{\texttt{up}}{v} stores a new value $v$ in \texttt{up} but the
initial value is still retrievable (in the same action) via \kread{0}{\texttt{up}}.
To make the new value available to another action
in the same cycle, \koika provides the \kread{1}{\texttt{up}} expression.
And lastly, \kwrite{1}{\texttt{up}}{v'} writes the final value $v'$ to
register \texttt{up} in the current cycle.
As an example, we visualize the execution semantics
of our \noc in \cref{fig:noc:sem}.
Our \textcolor{orange}{current implementation of the routers} uses
solely \kread{0}{\cdot} and \kwrite{0}{\cdot}{\cdot}%
\footnote{\kwrite{0}{\cdot}{\cdot} denotes a write of some value to some register.
  \kread{0}{\cdot} is analogous.}
.
As such, a message needs two cycles to arrive at \texttt{router}$_{2}$.
An \textcolor{blue}{alternative implementation} uses \kread{1}{\cdot}
to retrieve the message written by the previous router in the same cycle.
Respectively, a message only takes a single cycle to arrive
at \texttt{router}$_{2}$.
There are pros and cons to each of the two \noc designs.
The previous design is throughput-oriented because it routes
messages in a pipeline parallel fashion, i.e., 2 messages arrive
per cycle:
one at \texttt{router}$_{1}$ and
the other at \texttt{router}$_{2}$.%
\footnote{
  This holds starting from the 2nd cycle when
  the channels are filled.
}
The new single-cycle design favors latency because the message that entered the
\noc at \texttt{router}$_{0}$ arrives
at \texttt{router}$_{2}$ in the very same cycle.
But, in the synthesized hardware, the wall-clock time of a cycle actually increases, because
the shared registers serialize the execution of the routers.
Formal reasoning about the execution of these two different
\noc designs also differs.
\koika's operational semantics are a big-step evaluation function,
which directly computes the value of a given action.
\begin{definition}[Koika Semantics for Actions]
  For every
  register environment \R,
  scheduler log $L$,
  action log $l$, and
  variable context \Ctx
  the action (i.e. expression) $a$
  big-step evaluates $\downarrow$
  to a result value $v$, a new variable assignment \Ctx' and log $l'$, denoted
  $(l,\Ctx,a) \koikasemfun (l',\Ctx',v)$, or it aborts, denoted
  $(l,\Ctx,a) \koikasemfun \texttt{abort}$.
\end{definition}
%
%
The register environment \R stores the register values at the
beginning of the cycle, while the context \Ctx maintains the
current values of local variables.
The logs track prior reads and writes of the ongoing cycle with a crucial
distinction.
The action log $l$ only captures register accesses of
the current action, while the scheduler log $L$ aggregates
the accesses of all preceding actions in the schedule.
This distinction is necessary to discard the register
accesses of the current action in case it is aborted.
Note that, the evaluation only effects the action log $l$
and the context \Ctx.
The register environment \R and scheduler log $L$
are not altered.
After every execution of an action, in case there were no
conflicts, the scheduler log
gets concatenated $L' := L \mdoubleplus l'$.
Then, at the end of the cycle, the scheduler only persists the
changes accumulated in $L'$ by committing them into the
register environment $\R' := \texttt{update} ~ \R ~ L'$.

But reasoning about an action needs to take this into
account.
Consider this Hoare-like statement with
a \pre{precondition $P$} and
a \post{postcondition $Q$}
for reasoning about the action of \texttt{router}$_{2}$:

\noindent\begin{minipage}{\linewidth}
\begin{lstlisting}
<|$\forall$|> <|\R{}|> L <|Γ|>' l' v,
  <|\pre{$P$ (update \R L)}|> -> ([],{},router<|$_{\texttt{2}}$|>) <|$\downarrow_{\R,L}$|> (l',<|Γ|>',v) -> <|\post{$Q$ (update \R (L$\mdoubleplus$l'))}|>.
\end{lstlisting}\vspace{2pt}
\end{minipage}
Here, both \pre{$P$} and \post{$Q$} reason about an
already committed register environment \R.
This is intuitive, since the state at the end of the cycle
is the only one visible to other components and thus should
provide the important properties.
However, reasoning about the register environment
would require a schedule of a single action. Because,
the \texttt{update} occurs only at the end of a cycle when
all the actions have run and not after each individual action.
Hence, this encoding only allows running a single action
per cycle.
This is clearly not what most of the \koika programs
look like and would prevent all parallel processing.
Instead, to reason about multiple actions, the \pre{pre-} and \post{post}conditions need
to take the logs into account.
For example, if \pre{$P$} would test for a specific value \texttt{v}
in register $r$ then this translates into:
\[
\R[r] = \texttt{v} ~\land~ \kwrite{0}{r}{\cdot} \notin L ~\land~ \kwrite{1}{r}{\cdot} \notin L
\]

This condition covers the cases where the register $r$ had the value \texttt{v} at the beginning
of the cycle and no prior action overrode it. In other words, having this condition is enough
to guarantee that a \kread{0}{r} or a \kread{1}{r} would return \texttt{v}. However, even this fairly
explicit assumption leaves ambiguity about whether a \kwrite{0}{r}{\cdot} succeeds or fails. In case
$\kread{1}{r} \in L$ it fails, otherwise it succeeds. As this example demonstrates, \koika's multi-log
semantics result in rather complex register state descriptions. In particular, the two logs $l$ and $L$
in combination with the EHR operations \kread{0}{\cdot}, \kread{1}{\cdot}, \kwrite{0}{\cdot}{\cdot}, and
\kwrite{1}{\cdot}{\cdot} yield a total of $2^{8} = 256$ ways to describe the state of a single register.
Although many of these characterizations are semantically equivalent, a subset differs%
\footnote{
  This subset actually contains 7 unique states and is visualized later in \cref{fig:states:original}.
}%
, giving rise to two practical difficulties.
First, selecting the right one as well as comprehending them is
cumbersome. Second, when the selected descriptions do not match syntactically across proofs,
translating them results in significant overhead.
\begin{figure}
  \centering
  \begin{tikzpicture}[
  node distance=0.3cm,
  font=\footnotesize,
]
  \node[draw] (r0) at (0,0) { \lstinline{router}$_n$ };
  \node[draw] (r1) [right = 0.5\textwidth of r0] { \lstinline{router}$_{n+1}$ };

  \draw[-latex] ([yshift=0.1cm]r0.east) -- ([yshift=0.1cm]r1.west)
  node (reg0) [midway,fill=white,yshift=0.1cm]{\lstinline{up}$_{n}$};
  \draw[latex-] ([yshift=-0.1cm]r0.east) -- ([yshift=-0.1cm]r1.west)
  node (reg1) [midway,fill=white,yshift=-0.1cm]{\lstinline{down}$_{n+1}$};

 \draw (r0.north east)
  edge [dashed,bend left=20,-latex]
  node[midway, above]{\textcolor{orange}{\texttt{write$_{0}$}}/\textcolor{blue}{\texttt{write$_{0}$}}}
  (reg0.north west)
  ;
 \draw (reg0.north east)
  edge [dashed,bend left=20,-latex]
  node[midway, above]{\textcolor{orange}{\texttt{read$_{0}$}}/\textcolor{blue}{\texttt{read$_{1}$}}}
  (r1.north west)
  ;

  \draw (r0.south east)
  edge [dashed,bend right=20,latex-]
  node[midway, below]{\textcolor{orange}{\texttt{read$_{0}$}}/\textcolor{blue}{\texttt{read$_{1}$}}}
  (reg1.south west)
  ;
 \draw (reg1.south east)
  edge [dashed,bend right=20,latex-]
  node[midway, below]{\textcolor{orange}{\texttt{write$_{0}$}}/\textcolor{blue}{\texttt{write$_{0}$}}}
  (r1.south west)
  ;

  \draw[draw=gray,-latex] ([xshift=-1cm,yshift=0.1cm]r0.west) -- ([yshift=0.1cm]r0.west);
  \draw[draw=gray,-latex] ([yshift=-0.1cm]r0.west) -- ([xshift=-1cm,yshift=-0.1cm]r0.west);

  \draw[draw=gray,-latex] ([yshift=0.1cm]r1.east) -- ([xshift=1cm,yshift=0.1cm]r1.east);
  \draw[draw=gray,-latex] ([xshift=1cm,yshift=-0.1cm]r1.east) -- ([yshift=-0.1cm]r1.east);

\end{tikzpicture}
  \caption{Two intermediate routers of a bidirectional NoC.}
  \label{fig:intermediate_routers}
\end{figure}
%
%

\subsection{Challenge 2: \koika's semantics foster inefficient \noc design}
\label{sec:challenge2}

Increasing the communication inside the \noc leads to more concurrency and in turn
to a trade-off between efficient execution and efficient verification.
In \cref{fig:intermediate_routers}, we integrate bidirectional messaging with
a dedicated channel for each direction and
again depict a \textcolor{orange}{throughput}- and a \textcolor{blue}{latency}-favored
\noc design.
In both cases, either of the two router actions aborts. That is because, for a given register $r$, \koika
neither allows \kread{0}{r} nor \kread{1}{r} after a \kwrite{0}{r}{\cdot} occurred in a different action.
The log $\{
\kread{0}{\texttt{down}_{n+1}}, \kwrite{0}{\texttt{up}_n}{\cdot}\} \mdoubleplus \{
\textcolor{red}{\kread{0}{\texttt{up}_n}}, \kwrite{0}{\texttt{down}_{n+1}}{\cdot}
\}$
aborts \verb|router|$_{n+1}$.
Likewise, the log $\{
\kread{0}{\texttt{up}_n}, \kwrite{0}{\texttt{down}_{n+1}}{\cdot}\} \mdoubleplus \{
\textcolor{red}{\kread{0}{\texttt{down}_{n+1}}}, \kwrite{0}{\texttt{up}_n}{\cdot},
\}$
aborts \verb|router|$_{n}$.
The cases for the latency-oriented \noc design are analogous.

The only way to circumvent this limitation is by splitting each router into separate
actions for retrieving and sending using additional router-local registers.
Then it is possible to first issue all $\texttt{read}_0$'s, save the messages into
the router-local intermediate registers, and then issue all $\texttt{write}_0$'s
in a final set of actions.
But the additional router-local registers bloat the generated Verilog and
the increased number of actions unnecessarily complicates formal reasoning.
The root cause of this problem lies in the design of \koika's transactional
register semantics.
Specifically, the decision to prevent read access to the initial value
(\kread{0}{r}) after an earlier action wrote a new value (\kwrite{p}{r}{\cdot}).
This makes two \emph{dependent} actions in a schedule
\ilstrocq{a}$_{1}$ \texttt{|>} \ilstrocq{a}$_{2}$ \texttt{|>} \ilstrocq{§4:done§},
connected via \kwrite{0}{r}{\cdot} and \kread{1}{r}, appear as a single fused action.

\noindent
\hspace{0.2cm}
\begin{minipage}[t]{0.31\linewidth}
\captionof*{figure}{Action $a_{1}$}
\vspace{-\baselineskip}
\begin{lstlisting}[style=coq]
let initial = read$_{0}$(r) in
write$_{0}$(r, §'b110§);
  $\ldots$
write$_{0}$(s, initial);
\end{lstlisting}
\end{minipage}
\hfill
\begin{minipage}[t]{0.54\linewidth}
\captionof*{figure}{Action $a_{2}$}
\vspace{-\baselineskip}
\begin{lstlisting}[style=coq]
let initial = read$_{0}$(r) in      (* aborts *)
let last_written = read$_{1}$(r) in
assert (last_written == §'b110§);
  $\ldots$
\end{lstlisting}
\end{minipage}
\hspace{0.2cm}
\\[3pt]

Aborting a consecutive \kread{0}{r} and thereby forcing the developer
into using \kread{1}{r}, facilitates monotonic evolution of the registers.
That is, a read access always sees the latest value written.
But in a single (fused) action, it would always be possible to store the initial
value from the beginning of the cycle in a local variable to access it even after
updating the register (see action \ilstrocq{a}$_{1}$).
But the monotonic semantics prevent action \ilstrocq{a}$_{2}$ from reading
the initial value of register \ilstrocq{r}.
We argue that this restriction is not necessary and accessing the initial value
from the beginning of a cycle is a typical hardware design pattern.
Therefore, we make the initial value available by default to all actions in
a \koika schedule.

\section{\koika extensions}
\label{sec:extensions}

In this section, we describe our contributions to \koika that help
us to overcome the limitations outlined in the previous section.
At first, we present the \oursem (RAW) for \koika and then
construct our program logic on top of that.
Our \oursem removes some of the limitations for reading
previously written values (see \cref{sec:challenge2}).
Later, this also simplifies the design of the program logic.

\subsection{\oursem}
To make our \noc design, but also other hardware
designs, more efficient in terms of the generated Verilog and
the number of actions to formally reasoning about,
we propose a semantics that allows for reads after writes.
\begin{figure}
  \begin{tikzpicture}[node distance = 0cm]
    \node[] (a1) at (0,0) {\textsf{Action} $a$};
    \node[] (a) [right=of a1] {$::=$};
    \node[] (b) [right=of a] {
      $b$ $\mid$ $x$  $\mid $ \ilstrocq{§4:skip§} $\mid$ \ilstrocq{let} $x$ \ilstrocq{:=} $a_1$ \ilstrocq{in} $a_2$ $\mid$ \ilstrocq{§4:abort§} $\mid$
    };
    \node[] (c) [below right=-0.15cm and 0cm of b.south west] {
      \ilstrocq{if} $a_1$ \ilstrocq{then} $a_2$ \ilstrocq{else} $a_3$ $\mid$ \kread{p}{r} $\mid$ \kwrite{p}{r}{a} $\mid$ \texttt{$f$($a_1$,$\ldots$,$a_n$)}
    };
    \node[] (d) [below = 0.55cm of a] {$::=$};
    \node[] (d1) [left=of d] {\textsf{Ports} $p$};
    \node[] (d2) [right=of d] {$0$ $\mid$ $1$};

    \node[] (e) [right=0.4cm of d2] {\textsf{Variables} $\: x$};
    \node[] (f) [right=0.4cm of e] {\textsf{Registers} $\: r$};
    \node[] (g) [right=0.4cm of f] {\textsf{Externals} $\: f$};
    \node[] (h) [right=0.4cm of g] {\textsf{Bitstrings} $\: b$};

    \node[] (i1) [above=0.2cm of a] {$::=$};
    \node[] (i2) [left=of i1] {\textsf{Schedule} $s$};
    \node[] (j) [right=of i1] {
      \ilstrocq{§4:done§} $\mid$ $rule$ \texttt{|>} $s$
    };

  \end{tikzpicture}

  \caption{Syntax of \koika actions.}
  \label{fig:syntax}

\end{figure}
\begin{definition}[Syntax of \koika]\label{def:koika:syn}
  We define the syntax of \koika in \cref{fig:syntax}.
  A \koika programs $P:=(\mathbb{R},\mathbb{A},s)$ consists of
  a set of registers $\mathbb{R}$,
  a set of actions $\mathbb{A}$ and
  a schedule $s$, which defines their order of execution.
  An action is an expression composed of
  (constant) bit strings $b$,
  variables $x$,
  conditionals,
  binds,
  reads and writes to registers $r$,
  no-ops (\texttt{skip}),
  calls to functions $f$, and
  explicit action-\texttt{abort}s.
\end{definition}
Instead of recreating a new semantics, we integrate
ours as a switch into the existing ones.
We do so because other designs might in turn benefit from
the more restricted original ones.
Therefore, we add a Read-After-Write (\textcolor{blue}{\texttt{raw}}) switch to
\koika's semantics.
This switch only effects the evaluation of reads, i.e.,
\kread{0}{\cdot} and \kread{1}{\cdot}.
%
The semantics for all other language constructs of \koika
remain unaffected.
\begin{figure}
\begin{tikzpicture}
  \matrix (m) [
  matrix of nodes,
  ampersand replacement=\&,
  row sep=0.3cm
  ] {
  {$
  \inference
  []
  {{\color{c41} ( \kwrite{0}{r}{\cdot} \notin L ~ \land ~
    \kwrite{1}{r}{\cdot} \notin L )
  } \color{blue} ~ \lor ~ \texttt{raw}}
  { (l,\Ctx,\kread{0}{r}) \oursemfun (l,\Ctx,\R[r]) }
  [\textsc{Read0}\textcolor{blue}{$_{\text{raw}}$}]
  $}

  \\

  {$
  \inference
  []
  {\textcolor{c41}{\kwrite{1}{r}{\cdot} \notin L} ~ \textcolor{blue}{\lor} ~
  \textcolor{blue}{\texttt{raw}}
  }
  { (l,\Ctx,\kread{1}{r}) \oursemfun (l,\Ctx,v) ~
    \text{where} ~
    v ~ := ~
    \left\{
    {\begin{array}{@{}l l@{}l@{}}
    \R[r] & \text{if} ~ \kwrite{0}{r}{\cdot} ~ & \notin L \mdoubleplus l \\
    v_{0} & \text{if} ~ \kwrite{0}{r}{v_{0}} ~ & \in L \mdoubleplus l
    \end{array}}
    \right.
  }
  [\textsc{Read1}\textcolor{blue}{$_{\text{raw}}$}]
  $}

  \\

  {$
  \inference
  []
  { (l,\Ctx,a) \koikasemfun (l',\Ctx',v)   }
  { (l,\Ctx,a) \oursemfun (l',\Ctx',v)  }
  [\textsc{Original}]
  $}
  \\
   };
 \end{tikzpicture}

 \caption{
   Support for reads after writes in \koika unfolds into 4 rules.
   \textsc{Read0} and \textsc{Read1} are the \textcolor{c41}{original} rules
   that place preconditions on the occurrence of writes in
   the log $L$.
   Our \textcolor{blue}{Read-After-Write} Semantics adds rules \textsc{Read0}\textcolor{blue}{$_{\text{raw}}$}
   and \textsc{Read1}\textcolor{blue}{$_{\text{raw}}$} where these
   preconditions are omitted.
   The remainder of the rules stays unchanged.
 }
 \label{fig:sem_raw}
\end{figure}
\begin{definition}[\oursem]\label{def:koika:sem}
  We define \oursem as \oursemfun{} in \cref{fig:sem_raw}.
  Rules~\textsc{Read0}\textcolor{blue}{$_{\text{raw}}$} and
  ~\textsc{Read1}\textcolor{blue}{$_{\text{raw}}$} define the new
  semantics for reading values from registers.
  Rule~\textsc{Original} reestablishes the existing
  semantics \koikasemfun{} for the rest of the
  syntactic constructs for actions $a$.
\end{definition}
When \verb|raw| is
\textcolor{c41}{false} then the
original semantics are restored and reads are allowed
only when the corresponding preconditions hold.
For Rule~\textsc{Read0}, no previous writes from other
actions to the same register are allowed (to occur
in the scheduler log $L$).
For Rule~\textsc{Read1}, the final value must
not have been written yet ($\kwrite{1}{r}{\cdot} \notin L$),
then the returned value $v$
is either the initial value from the register environment or
the current value in the logs $L$ and $l$.%
\footnote{
  There can only be one written value for a register in the
  log because \koika prevents Write-After-Writes.
}
When \texttt{raw} is \textcolor{blue}{true} then
reads are allowed without any precondition, i.e.,
even to a previously written register.
The respective rules
\textsc{Read0}\textcolor{blue}{$_{\text{raw}}$}
\textsc{Read1}\textcolor{blue}{$_{\text{raw}}$} just vacuously
satisfy the preconditions (premises) but leave the conclusion
untouched.
With this Read-After-Write support, we had to adapt
the compiler, the generated Verilog code and respectively
reestablish the compiler correctness theorem.
The key property of \koika programs, data race freedom,
was not effected by our changes and remains valid.

\paragraph{Original vs. \texttt{raw}-semantics}

Compared to the original semantics of \koika, our new semantics resolve
both of the previously outlined challenges. They remove the unnecessary
\texttt{abort} mentioned in \cref{sec:challenge2}, allowing for pipeline parallel
designs. But, they also simplify the state model of registers mentioned
in \cref{sec:challenge1}. The latter is presented in full detail in the
following section.

\subsection{Program logic}
\label{sec:programlogic}

In order to make reasoning about \koika programs tractable,
we implement a program logic, in our case a \emph{Hoare logic}.
We define assertions to reason about registers and logs first
and then state the structural and the term rules of the Hoare logic.
We conclude this section with a proof of soundness and
our initial design for the automation of proofs in our
Hoare logic.

\subsubsection{Foundations}

Informally, a specification in our program logic is the usual
Hoare triple \hoare{P}{a}{Q}.
Precondition \pre{$P$} is a proposition that needs to hold before the (symbolic)
execution of action $a$.
Postcondition \post{$Q$} is a proposition that needs to hold after the
execution of $a$.
Both, $P$ and $Q$ reason about the whole state of a \koika action.
Thus, this reasoning needs access to the register environment $\R$, the
logs for the scheduler $L$ and the current action $l$, and the
variable context \Ctx (\Cref{sec:challenge1}).
A convenient side effect of our new semantics is that the abort handling
no longer depends on a distinction between scheduler and action logs.%
\footnote{
  Though not necessary for the semantics, the logs are still kept separate for
  an easier rewind on abort.
}\textsuperscript{,}%
\footnote{
  Later in \cref{fig:states} this can be observed as the action transitions (%
\begin{tikzpicture}
  \draw[->, gray, densely dashed] (0,0) -- +(0.3,0);
\end{tikzpicture}) only loop on
  their state instead of leading to new ones.
}
Thus, reasoning can be simplified to a single combined log $\cL := L \mdoubleplus l$.
Pre- and postconditions are then assertions defined as follows.
\begin{definition}[Assertions]
  Assertions are propositions that reason
  over the register environment $\R$,
  the combined log $\cL$, and
  the variable context \Ctx
  \[
    \{ P \} := \lambda ~ \R ~ \cL ~ \Ctx. ~ P ~ (\R, \cL, \Ctx)
  \]
\end{definition}
%
According to \cref{def:koika:syn,def:koika:sem},
\koika actions are expressions that return a value.
If a postcondition \post{$Q$} would also like to reason about
this return value then we write \post{$v.~Q$}.
For a particular value $v$, we write \post{$Q~(v)$}.
For a particular log $l$, we write \post{$Q~(l,~v)$}.
The reasoning about the (combined) log needs to
distinguish between values written via \kwrite{0}{\cdot}{\cdot} and \kwrite{1}{\cdot}{\cdot}.
To access a written value $v$ at port $p$ from a register $r$,
we define operator $\cL[r]_{p}$.
And, since this access is partial, we also define an operator \texttt{?\!:}
for chaining accesses.
\begin{figure}
  \centering
  \begin{subfigure}[t]{0.45\textwidth}
    \pgfmathsetlengthmacro{\nodedist}{1.3cm}
\begin{tikzpicture}[shorten >=1pt,
  node distance=\nodedist, on grid,
  auto,
  initial text=,
  loop belowl/.style={in=210,out=240,loop},
  loop belowr/.style={in=300,out=330,loop},
  loop abover/.style={in=30,out=60,loop},
  loop abovel/.style={in=120,out=150,loop},
  cycle/.style={densely dashed, gray},
  noop/.style={gray},
  state0/.style={fill=blue!10,draw=blue},
  state1/.style={fill=red!10,draw=red},
  state2/.style={fill=green!10,draw=green},
  every state/.style={
    minimum size=5mm,
    inner sep=0pt
  },
  every node/.style={ inner xsep=0pt, inner ysep=2pt },
  ]
  \pgfmathsetlengthmacro{\dist}{\nodedist}
  \node[state, state0, initial] (q_0)                      {};
  \node[state, state1]          (q_1) [above right=of q_0] {};
  \node[state, state1]          (q_2) [below right=of q_0] {};
  \node[state, state2]          (q_3) [below right=of q_1] {};
  \node[state, state1]          (q_4) [      right=\dist of q_1] {};
  \node[state, state2]          (q_5) [      right=\dist of q_4] {};
  \node[state, state2]          (q_6) [below right=of q_5] {};

  \path[->] (q_0) edge               node        {\texttt{w$_0$}}    (q_1)
                  edge               node [swap] {\texttt{r$_1$}}    (q_2)
                  edge               node        {\texttt{w$_1$}}    (q_3)
                  edge [loop belowl,noop] node [left=2pt] {\texttt{r$_0$}} ()
                  edge [loop abovel,cycle]               ()
            (q_1) edge               node        {\texttt{w$_1$}}    (q_3)
                  edge [cycle]                           (q_4)
                  edge [loop abovel,noop] node [left=2pt] {\texttt{r$_0/$r$_1$}} ()
            (q_2) edge               node [swap] {\texttt{w$_1$}}    (q_3)
                  edge [loop belowl,noop] node [left=2pt] {\texttt{r$_0/$r$_1$}} ()
                  edge [loop belowr,cycle]               ()
            (q_3) edge [cycle]                           (q_6)
                  edge [loop belowr,noop] node [right=2pt] {\texttt{r$_0/$r$_1$}} ()
                  edge [loop abover,cycle]               ()
            (q_4) edge               node [swap] {\texttt{w$_1$}}    (q_5)
                  edge [loop abovel,noop] node [left=2pt] {\texttt{r$_1$}}    ()
                  edge [loop abover,cycle]               ()
            (q_5) edge [cycle]                           (q_6)
                  edge [loop abover,noop] node [right=2pt] {\texttt{r$_1$}}    ()
            (q_6) edge [loop belowr,cycle]               ();
\end{tikzpicture}
    \caption{Original semantics}
    \label{fig:states:original}
  \end{subfigure}
  \begin{subfigure}[t]{0.05\textwidth}
    \centering
    \vspace{-0.7cm}
    \hspace*{-2cm}
    \begin{tikzpicture}[
  font=\footnotesize
]
  \matrix[
    fill=white,
    row sep=5pt,
    column sep=2pt,
    inner sep=0pt,
    column 1/.style={anchor=center},
  ] at (0,0) {
      \node {
      \begin{tikzpicture}
        \draw[->, gray, densely dashed] (0,0) -- +(0.3,0);
      \end{tikzpicture}
      }; &
      \node[anchor=west] {\tiny \shortstack[l]{
        transition to next action \\
        (concat. act. log into sched. log)
      }}; \\
  };
\end{tikzpicture}
  \end{subfigure}
  \begin{subfigure}[t]{0.37\textwidth}
    \centering
    \pgfmathsetlengthmacro{\nodedist}{1.3cm}
\begin{tikzpicture}[shorten >=1pt,
  node distance=\nodedist, on grid,
  auto,
  initial text=,
  loop belowl/.style={in=210,out=240,loop},
  loop belowr/.style={in=300,out=330,loop},
  loop abover/.style={in=30,out=60,loop},
  loop abovel/.style={in=120,out=150,loop},
  cycle/.style={densely dashed, gray},
  noop/.style={gray},
  state0/.style={fill=blue!10,draw=blue},
  state1/.style={fill=red!10,draw=red},
  state2/.style={fill=green!10,draw=green},
  every state/.style={
    minimum size=5mm,
    inner sep=0pt
  },
  every node/.style={ inner xsep=0pt, inner ysep=2pt },
  ]
  \pgfmathsetlengthmacro{\dist}{\nodedist * 1.3}
  \node[state, state0, initial] (q_0)                      {};
  \node[state, state1]          (q_1) [right=\dist of q_0] {};
  \node[state, state2]          (q_2) [right=\dist of q_1] {};

  \path[->] (q_0) edge                   node {\texttt{w$_0/$r$_1$}} (q_1)
                  edge [bend right=40] node [swap] {\texttt{w$_1$}} (q_2)
                  edge [loop above,noop] node {\texttt{r$_0$}} ()
                  edge [loop below,cycle] ()
            (q_1) edge                   node {\texttt{w$_1$}} (q_2)
                  edge [loop above,noop] node {\texttt{r$_0/$r$_1$}} ()
                  edge [loop below,cycle] ()
            (q_2) edge [loop above,noop] node {\texttt{r$_0/$r$_1$}} ()
                  edge [loop below,cycle] ();

  \node[state,draw=none]         (q_3) [below right=of q_0] {}; 
  \path (q_3) edge [loop belowl,noop,draw=none] node [left=2pt] {\phantom{r0/r1}} (); 
\end{tikzpicture}
    \caption{\oursem}
    \label{fig:states:ours}
  \end{subfigure}
  \caption{
    Unique states of a single register which can be distinguished solely by their abort semantics.
    (All transitions which are left out lead to an \texttt{abort}, and
      every register is reset to the initial state at the beginning of each cycle)
  }
  \label{fig:states}
\end{figure}
\[
\begin{array}{c@{\hspace{2cm}}c}
{\cL[r]_p :=
\left\{
\begin{array}{@{}l l@{}l@{}l@{}}
\texttt{Some} \; v & \text{if} \;\; \kwrite{p}{r}{v}     & \; \in    & \; \cL \\
\texttt{None}      & \text{if} \;\; \kwrite{p}{r}{\cdot} & \; \notin & \; \cL
\end{array}
\right.} &
{a \,\texttt{?\!:}\, b ~ :=
\left\{
\begin{array}{@{}l l@{}}
v & \text{if} \;\; a = \texttt{Some} \; v \\
b & \text{if} \;\; a = \texttt{None}
\end{array}
\right.}
\end{array}
\]
With these operators, we define a relation that resolves some of
the complexity of reasoning about the register environment \R and
the logs \cL.
In \cref{sec:challenge1}, we calculated $2^{8}$ potential
states per register exist.
\cref{fig:states} shows that the 7 distinct states for the original
semantics reduce to 3 distinct states for our RAW semantics.
Note their difference in complexity, i.e. transitions, which directly
translates into mental overhead during the design and additional proof
burden during the verification of a hardware description.
Next, we capture the 3 new states in a relation for concise reasoning.
\begin{figure}
  \colorlet{cond1}{blue}
  \colorlet{cond2}{red}
  \colorlet{cond3}{green}
  \centering
  \begin{tikzpicture}
  \matrix (m) [
  matrix of math nodes,
  ampersand replacement=\&,
  row sep=0.3cm
  ] {
  \inference
  { \textcolor{cond1}{P_0 (\R[r])} & \textcolor{cond2}{\kread{1}{r} \notin \cL} & \textcolor{cond2}{\kwrite{0}{r}{\cdot} \notin \cL} & \textcolor{cond3}{\kwrite{1}{r}{\cdot} \notin \cL} }
  {r \Mapsto_{\R,\cL} \texttt{[\textcolor{cond1}{$P_0$},\textcolor{cond2}{/},\textcolor{cond3}{/}]}}
  [\textsc{RegEnv}]
  \\
  \inference
  { \textcolor{cond1}{P_0 (\R[r])} & \textcolor{cond2}{P_1 (\cL[r]_0 ~ \texttt{?\!:} ~ \R[r])} & \textcolor{cond3}{\kwrite{1}{r}{\cdot} \notin \cL} }
  {r \Mapsto_{\R,\cL} \texttt{[\textcolor{cond1}{$P_0$},\textcolor{cond2}{$P_1$},\textcolor{cond3}{/}]}}
  [\textsc{RegState0}]
  \\
  \inference
  { \textcolor{cond1}{P_0 (\R[r])} & \textcolor{cond2}{P_1 (\cL[r]_0 ~ \texttt{?\!:} ~ \R[r])} & \textcolor{cond3}{P_2 (\cL[r]_1 ~ \texttt{?\!:} ~ \cL[r]_0 ~ \texttt{?\!:} ~ \R[r])}   }
  {r \Mapsto_{\R,\cL} \texttt{[\textcolor{cond1}{$P_0$},\textcolor{cond2}{$P_1$},\textcolor{cond3}{$P_2$}]}}
  [\textsc{RegState1}]
  \\
  };
  \end{tikzpicture}

  \caption{
    The relation to reason about the values of a single
    register $r$ across the register environment and the logs.
  }
  \label{fig:logstate}

\end{figure}
\begin{definition}[Log properties]
  We define the relation to reason about a register $r$
  in the context of a register environment \R and a
  combined log \cL in \cref{fig:logstate} as
  \mbox{$r \Mapsto_{\R,\cL}$ \texttt{[$P_0$,$P_1$,$P_2$]}}. In this relation
  $P_{0}$, $P_{1}$, and $P_{2}$ are
  propositions on the value of the register.
\end{definition}
The notation \mbox{\texttt{[$P_0$,$P_1$,$P_2$]}} refers to the three
possible values that a register could have at the same time.
$P_{0}$ is a proposition about the initial value from the
register environment $\R[r]$.
$P_{1}$ and $P_{2}$ reason about the values written with \kwrite{0}{r}{\cdot}
and \kwrite{1}{r}{\cdot}, respectively.
For example, to assert an initial value $v_{0}$ and a
written value $v_{1}$, we simply write \mbox{\texttt{[$=\!v_0$,$=\!v_1$,/]}}.
Note the use of our operators, for example
rule~\textsc{RegState1} covers the case where there is no
\kwrite{0}{r}{\cdot} in the log such that $P_{1}$ needs to hold
for the initial value in \R.
We also support a notation such as \mbox{\texttt{[?,$P_1$,/]}} to reason
exclusively about a previously written value by setting $P_0 := \top$.
Note how this register relation makes register conditions significantly easier to comprehend.
The value of a \kread{0/1}{r} can directly be deduced from $P_{0/1}$
and likewise the failure of a \kwrite{0/1}{r}{\cdot} from $P_{1/2}$.

\subsubsection{Rules}
Many of the rules for reasoning in our program logic
are standard.
Hence, we focus on the ones that are unique to \koika and
define the full set of rules in the \cref{sec:appendix:hoare}.
\begin{figure}
\centering
\begin{tikzpicture}[node distance=0cm]

\matrix (m1) at (0,0) [
  inner sep=0pt,
  matrix of math nodes,
  ampersand replacement=\&,
  column sep=0.25cm
  ] {
    \inference
    {  }
    { \hoare{P}{\texttt{abort}}{\bot}~~ }
    [\scriptsize\hspace{-1.8em}\raisebox{-0.8em}{\textsc{Abort}}]
    \&
    \inference
    {  }
    { \hoare
      { Q~(\cL[r]_0 ~ \texttt{?\!:} ~ \R[r]) }
      { \kread{1}{r} }
      { Q }~~
    }
    [\scriptsize\hspace{-1.8em}\raisebox{-0.8em}{\textsc{HRead1}}]
    \\
};

\matrix (m1a) [below =1em of m1] [
  inner sep=0pt,
  matrix of math nodes,
  ampersand replacement=\&,
  column sep=0.25cm
  ] {
    \inference
    { \hoare
      { P }
      { a }
      { v. ~ Q~(\cL \mdoubleplus [\kwrite{p}{r}{v}], ~ \epsilon) }
    }
    { \hoare{ P }{ \kwrite{p}{r}{a} }{ Q } }
    [\scriptsize\hspace{-1.8em}\raisebox{-0.8em}{\textsc{HWrite}}]
    \&
    \inference
    {  }
    { \hoare
      { Q~(\R[r]) }
      { \kread{0}{r} }
      { Q }~~
    }
    [\scriptsize\hspace{-1.8em}\raisebox{-0.8em}{\textsc{HRead0}}]
  \\
};

\end{tikzpicture}

  \caption{Reasoning rules for selected \koika actions.}
  \label{fig:rules:action}

\end{figure}
\begin{definition}[Action rules (excerpt)]
  The term rules of our program logic to read from and
  write to a register, and aborting an action
  are defined in \cref{fig:rules:action}.
\end{definition}
For \kread{0}{\cdot}, we retrieve the value $v$ from the register
environment \R.
And for \kread{1}{\cdot}, we either find \kwrite{0}{r}{v} in the
log \cL or load it again from \R.
For a write to a register, we require that $Q$ holds when
passed the log \cL with the appended \kwrite{p}{r}{v}.
Finally, the \textsc{Abort} rule maps an \texttt{abort}
to $\bot$ and thereby specifies our Hoare triples as partial.%
\footnote{
  As the language does not support any form of looping by design, the
  usual definition of partiality would not apply to \koika.
}
For total Hoare triples denoted as \thoare{P}{a}{Q},
we drop this rule and thereby require that the action never
aborts.
If it happens in the course of a proof then the proof is simply
stuck.

\subsubsection{Soundness}

To prove the soundness of our program logic, we define
the Hoare triples formally.
\koika's big-step evaluation $\downarrow_{\R,L}$ is partial and
so is our \oursem. The only unevaluable (stuck) terms are the ones which result in
an \texttt{abort}.
Based on this partial evaluation, we define partial and total
weakest-preconditions (WPs).
\begin{definition}[Partial WPs for \koika actions]\label{def:wpa:partial}
\begin{equation*}
\wpp{a}{a}{Q} := (l,\Ctx,a) \downarrow_{\R,L}^{\text{raw}} (l',\Ctx',v) \Rightarrow \post{Q~(\R ,L \mdoubleplus l',\Ctx',v)}
\end{equation*}
\end{definition}
That is, if an action $a$ aborts the precondition reduces to $\bot$ and the postcondition
follows vacuously.
\begin{definition}[Total WPs for \koika actions]\label{def:wpa:total}
\begin{equation*}
  \wpt{a}{a}{Q} := (l,\Ctx,a) \downarrow_{\R,L}^{\text{raw}} (l',\Ctx',v) \land \post{Q~(\R ,L \mdoubleplus l',\Ctx',v)}
\end{equation*}
\end{definition}
Drawing a $\bot$ conclusion for a total WP is not possible.
That is, for a proven total WP, the execution never aborts.
From these WP definitions, we derive Hoare triples.
\begin{definition}[Partial Hoare Triples for \koika actions]
\begin{equation*}
  \hoare{P}{a}{Q} := \forall ~ \R ~ L ~ l ~ \Ctx. ~ \pre{P~(\R ,L \mdoubleplus l,\Ctx)} \vdash \wpp{a}{a}{Q}
\end{equation*}
\end{definition}
Note, that both the precondition and postcondition reason over the combined log.
The definition for total Hoare triples is analogous.
\begin{definition}[Total Hoare Triples for \koika actions]\label{def:hoare:total}
\begin{equation*}
  \thoare{P}{a}{Q} := \forall ~ \R ~ L ~ l ~ \Ctx. ~ \pre{P~(\R ,L \mdoubleplus l,\Ctx)} \vdash \wpt{a}{a}{Q}
\end{equation*}
\end{definition}
For the scheduler, we define only total WPs because a
\koika program can not fail.
\begin{definition}[WPs for the schedules]\label{def:wps}
\[
\wpt{s}{s}{R} ~ := ~
\left\{
\begin{array}{@{}l l@{}}
R                          & \text{if} ~ s = \texttt{done} \\
\wpt{a}{a}{\wpt{s}{s'}{R}} & \text{if} ~ s = a~\texttt{|>}~s'
\end{array}
\right.
\]
\end{definition}
This definition builds upon the WPs for actions.
For the time being, we only reason about schedules where
actions do not abort.
It is straightforward to add reasoning support for schedules
with failing actions.
With these definitions in place, we can define the
soundness of our program logic for \koika.
Since our Hoare Triples directly unfold to WPs, we use those to state
the adequacy, first for actions and then for schedules.
\begin{theorem}[Adequacy of \koika actions]\label{theo:adequate:action}
\begin{equation*}
  \inference
  {(l,\Ctx,a) \textnormal{\oursemfun} (l',\Ctx',v) & \wpp{a}{a}{Q} }
  {\post{Q~(\R ,L \mdoubleplus l',\Ctx',v)}}
  []
\end{equation*}
\end{theorem}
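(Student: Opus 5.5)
This is essentially immediate from \cref{def:wpa:partial}. The partial weakest precondition $\wpp{a}{a}{Q}$ is, by definition, an implication. Its antecedent is the evaluation judgment $(l,\Ctx,a) \oursemfun (l',\Ctx',v)$, and its consequent is $\post{Q~(\R ,L \mdoubleplus l',\Ctx',v)}$. So the plan is to unfold the definition and apply modus ponens with the first premise of the rule.

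The one point needing care is quantification. In the definition, the outcome $(l',\Ctx',v)$ is implicitly universally quantified, while $\R$, $L$, $l$ and $\Ctx$ are fixed parameters of the WP. Concretely, as in the \rocq development, I read $\wpp{a}{a}{Q}$ as $\forall\, l'~\Ctx'~v.~ (l,\Ctx,a)\oursemfun(l',\Ctx',v) \Rightarrow Q~(\R, L\mdoubleplus l',\Ctx',v)$. I would make this explicit first. I would also check that the WP in the second premise is taken at the same $\R$, $L$, $l$, $\Ctx$ as the evaluation in the first premise, so that both refer to the same register environment and logs. Given that, the proof is to instantiate the universal with the $l'$, $\Ctx'$, $v$ of the first premise and discharge the implication.

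The main obstacle is not mathematical but representational. If the WP is instead defined as a fixpoint over the syntax of $a$, the argument needs an induction on the derivation of $\oursemfun$ rather than a one-line unfolding. Such a fixpoint would compose, for example, the WP of a \texttt{let} from the WPs of its subterms. In that case the induction has one case per rule in \cref{fig:sem_raw} plus the \textsc{Original} rules. The \textsc{Read0}/\textsc{Read1} cases would use that the raw rules leave $l$ unchanged and return $\R[r]$ or $\cL[r]_0 ~\texttt{?\!:}~ \R[r]$, matching \textsc{HRead0}/\textsc{HRead1}. The \texttt{abort} case is vacuous because there is no evaluation derivation. With the definition as given in the paper, however, no induction is needed.
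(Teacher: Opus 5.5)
Your proposal is correct and matches the paper's proof, which is likewise immediate by unfolding the partial WP (\cref{def:wpa:partial}) and applying it to the given evaluation, with the implicitly quantified outcome instantiated as you describe. The fallback discussion of a fixpoint-style WP is not needed, since the paper defines the WP directly as that implication.
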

\begin{proof}
  The proof is immediate by unfolding
  $\wpp{a}{a}{Q}$ (\Cref{def:wpa:partial}).
\end{proof}
\begin{theorem}[Adequacy of \koika schedules]\label{theo:adequate:schedule}
\begin{equation*}
  \inference
  {(\R,L,s) \downarrow_{\textnormal{\text{Sched}}}^{\textnormal{\text{raw}}} (\R,L',\epsilon) & \wpp{s}{s}{Q} }
  {\post{Q~(\R ,L')}}
  []
\end{equation*}
\end{theorem}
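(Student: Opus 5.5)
The plan is induction on the structure of the schedule $s$, generalizing over the incoming scheduler log $L$. The key observation is that $\wpp{s}{s}{Q}$ is defined in \cref{def:wps} by recursion on $s$, so each inductive step peels off exactly one action. That step can then be discharged with \cref{theo:adequate:action}. Before starting, I would fix the reading of the schedule evaluation $(\R,L,s) \downarrow_{\text{Sched}}^{\text{raw}} (\R,L',\epsilon)$. Since \cref{def:wps} only covers schedules whose actions do not abort, this evaluation runs each rule $a$ from the empty action log $l=[]$ and the empty context. It appends the resulting action log $l'$ to the scheduler log and continues with the remaining schedule. The register environment $\R$ stays unchanged throughout, because the commit happens only at the end of the cycle.

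For the base case $s = \texttt{done}$, the schedule evaluation can only return $L' = L$. The weakest precondition unfolds to $Q$ itself, so $Q~(\R,L)$ holds directly.

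For the step case $s = a~\texttt{|>}~s'$, I would invert the schedule evaluation. This gives an action evaluation $([],\{\},a) \oursemfun (l',\Ctx',v)$, with $\R$ and $L$ as parameters, together with a schedule evaluation $(\R, L \mdoubleplus l', s') \downarrow_{\text{Sched}}^{\text{raw}} (\R, L', \epsilon)$. Unfolding the hypothesis $\wpp{s}{s}{Q}$ yields the action-level weakest precondition $\texttt{WP}_{\texttt{a}}~a~\{\lambda(\R,\cL,\Ctx,v).~\texttt{WP}_{\texttt{s}}~s'~\{Q\}\}$, instantiated at $\R$, $L$, $l=[]$ and the empty context. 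Applying \cref{theo:adequate:action} to the action evaluation and this weakest precondition gives the postcondition at $(\R, L \mdoubleplus l', \Ctx', v)$, which is exactly $\wpp{s}{s'}{Q}$ evaluated at $(\R, L\mdoubleplus l')$. The induction hypothesis, instantiated with the log $L \mdoubleplus l'$ and the evaluation of $s'$, then yields $Q~(\R, L')$.

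I expect the main obstacle to be the bookkeeping that makes this step line up, rather than any conceptual difficulty. Three things must agree. First, the combined log $L \mdoubleplus l$ at the action level, with $l=[]$, must coincide with the scheduler log, which needs the fact $L \mdoubleplus [] = L$. Second, the log passed to the continuation must match the one the scheduler threads forward. Third, the variable context and return value must be discarded, since schedule postconditions ignore them. A further subtlety is that the hypothesis is stated as a partial $\texttt{WP}_{\texttt{s}}$ while \cref{def:wps} defines the total one. I would handle this by noting that the total weakest precondition implies the partial one (drop the evaluation conjunct and use determinism of $\oursemfun$), or equivalently by defining the partial schedule weakest precondition analogously and applying partial action adequacy at each step. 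Determinism of the raw semantics, which follows from the rules in \cref{fig:sem_raw}, is what lets the evaluation witness be reused consistently.
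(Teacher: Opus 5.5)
Your proposal is correct and follows the paper's proof. Both argue by induction on $s$: the base case is immediate from $L' = L$, and in the step case the schedule WP unfolds into an action WP, which is discharged by \cref{theo:adequate:action}, and the induction hypothesis is applied at the extended log $L \mdoubleplus l'$. Your extra bookkeeping fills in what the paper's short proof leaves implicit: generalizing over $L$, instantiating $l=[]$ and the empty context, and using determinism to bridge the total action WP of \cref{def:wps} and the partial WP that action adequacy expects.
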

\begin{proof}
  The proof is by induction on $s$.
  The base case, where $s=\ilstrocq{§4:done§}$ is immediate because
  $L=L'$ and, by \cref{def:wps}, $\wpp{s}{\ilstrocq{§4:done§}}{Q}$
  unfolds to $Q$.
  The induction step, where $s=a\,\texttt{|>}\,s'$ unfolds
  $\wpp{s}{a\,\texttt{|>}\,s'}{Q}$ to
  $\wpt{a}{a}{\wpt{s}{s'}{R}}$ (\Cref{def:wps}).
  The WP for the action follows from the adequacy for actions
  (\Cref{theo:adequate:action}) and then we apply the
  induction hypothesis to conclude the proof.
\end{proof}

\paragraph*{Proof automation}
Based on this WP-encoding, we provide the usual tactic support
to symbolically execute an action in the \rocq proof mode.

\section{A parameterized \noc library in \koika}
\label{sec:noc}

Equipped with the \oursem and the program logic, we can implement
a library that allows for $k$-dimensional \noc designs.
We start with an informal overview of our design and define the
input to our \noc library, i.e., the parameters that the
user has to provide and that our design is parametric in.
Then, we craft a specification for Network-on-Chips and
implement it in \coq.
Afterwards, we use our \oursem to implement an efficient
\noc library in \koika.
Finally, we take advantage of our program logic to prove
that this implementation refines the specification.

\begin{figure}
  \centering
  \begin{minipage}[b]{0.38\textwidth}
    \vspace{0pt}
      \centering
      \begin{subfigure}[t]{\linewidth}
          \centering
          \newcommand{\spcr}{0.4cm}
\newcommand{\spcc}{0.5cm}
\begin{tikzpicture}[
  font=\footnotesize
]
  \matrix (net) [
    column sep=0.6cm,
    row sep=0.4cm,
    matrix of nodes,
    nodes in empty cells,
    nodes={outer sep=0pt,draw},
  ]
  {
    \texttt{\texttt{r}$_{\textcolor{blue}{0};\textcolor{red}{0}}$} & \texttt{\texttt{r}$_{\textcolor{blue}{0};\textcolor{red}{1}}$} & \texttt{\texttt{r}$_{\textcolor{blue}{0};\textcolor{red}{2}}$} \\
    \texttt{\texttt{r}$_{\textcolor{blue}{1};\textcolor{red}{0}}$} & \texttt{\texttt{r}$_{\textcolor{blue}{1};\textcolor{red}{1}}$} & \texttt{\texttt{r}$_{\textcolor{blue}{1};\textcolor{red}{2}}$} \\
  };

  \begin{scope}[rounded corners=0.1cm, red]
    \draw (net-1-1) -- (net-1-2);
    \draw (net-2-1) -- (net-2-2);
    \draw (net-1-2) -- (net-1-3);
    \draw (net-2-2) -- (net-2-3);
    \draw (net-1-3) -| ($(net-1-3) + (1.5*\spcr,-\spcr)$) -- ($(net-1-1) + (-1.5*\spcr,-\spcr)$) |- (net-1-1);
    \draw (net-2-3) -| ($(net-2-3) + (1.5*\spcr,-\spcr)$) -- ($(net-2-1) + (-1.5*\spcr,-\spcr)$) |- (net-2-1);
  \end{scope}

  \begin{scope}[rounded corners=0.1cm, blue]
    \draw (net-1-1) -- (net-2-1);
    \draw (net-1-2) -- (net-2-2);
    \draw (net-1-3) -- (net-2-3);

    \draw (net-1-1) |- ($(net-1-1) + (\spcc,\spcc)$) -- ($(net-2-1) + (\spcc,-1.2*\spcc)$) -| (net-2-1);
    \draw (net-1-2) |- ($(net-1-2) + (\spcc,\spcc)$) -- ($(net-2-2) + (\spcc,-1.2*\spcc)$) -| (net-2-2);
    \draw (net-1-3) |- ($(net-1-3) + (\spcc,\spcc)$) -- ($(net-2-3) + (\spcc,-1.2*\spcc)$) -| (net-2-3);
  \end{scope}


\end{tikzpicture}
          \caption{Typical $2\!\times\!3$ torus network}
          \label{fig:2x3torus}
      \end{subfigure}

      \vspace{1em}

      \begin{subfigure}[t]{\linewidth}
          \setcounter{subfigure}{2}
          \centering
\newcommand{\rspace}{1cm}
\newcommand{\spacing}{2.3cm}
\newcommand{\len}{2.2cm}
\newcommand{\debug}{none}

\begin{tikzpicture}[every node/.style={draw,rectangle}]

  \node[draw=none, minimum height=1.5cm, minimum width=3cm] (r) at (0,0) {};
  \node[fill=white, minimum height=15pt, left=-0.1pt of r.north] (rout) {\footnotesize{\texttt{routing}}};
  \node[fill=white, minimum height=15pt, right=-0.1pt of r.north] (arb) {\footnotesize{\texttt{arbitration}}};

  \draw[thick] (rout.west) -- ++(left:0.5cm) -- ++(down:0.8cm);
  \draw[thick] (arb.east) -- ++(right:0.5cm) -- ++(down:0.8cm);

  \newcommand{\spc}{0.75cm}
  \draw[dashed, thick, gray] ($(rout.north west) + (-0.5cm,\spc)$) -- ($(arb.north east) + (0.5cm,\spc)$);

  \node[draw=none, text=blue, inner sep=1pt, above=0.8cm of rout] (down) {\texttt{down$_{n+1}$}};
  \draw[latex-] (rout) -- (down);
  \node[draw=none, text=blue, inner sep=1pt, above=0.3cm of arb] (up) {\texttt{up$_n$}};
  \draw[-latex] (arb) -- (up);

  \draw[-latex, gray] (up) -- +(up:1cm);
  \draw[latex-, gray] (down) -- +(up:0.5cm);

  \foreach \a in {0,-5,-10,5,10} {
    \draw[latex-, transform canvas={xshift=  \a pt}] (arb)  -- ++(down:0.7cm);
    \draw[-latex, transform canvas={xshift=  \a pt}] (rout) -- ++(down:0.7cm);
  }
  \node[draw=none, above=5pt of r.south, fill=white] {\texttt{r}$_{\textcolor{blue}{n};\textcolor{red}{m};...;\textcolor{green}{z}}$};

\end{tikzpicture}
          \caption{Routing and arbitration}
          \label{fig:routingarbitration}
      \end{subfigure}

  \end{minipage}
  \hfill
  \begin{minipage}[b]{0.58\textwidth}
    \vspace{0pt}
    \begin{subfigure}[t]{\linewidth}
      \setcounter{subfigure}{1}
      \centering
\newcommand{\rspace}{1.5cm}
\newcommand{\spacing}{1.5cm}
\newcommand{\len}{2.2cm}
\newcommand{\debug}{none}

\begin{tikzpicture}[every node/.style={draw,rectangle}]
  \begin{scope}[node distance=20pt]
    \node (a) at (0,0) {\texttt{r}$_{\textcolor{blue}{n};\textcolor{red}{m};...;\textcolor{green}{z}}$};

    \node[draw=none, minimum height=15pt] (nxt1) at ($(a)!0.7 * (\spacing + 0.9cm)!   0:(up:1cm)$) {};
    \node[draw=none, minimum height=15pt] (nxt2) at ($(a)!0.9 * (\spacing + 0.9cm)! -55:(up:1cm)$) {};
    \node[draw=none, minimum height=15pt] (nxtn) at ($(a)!0.9 * (\spacing + 0.9cm)!-125:(up:1cm)$) {};
    \node[draw=none, minimum height=15pt] (prv1) at ($(a)!0.7 * (\spacing + 0.9cm)!-180:(up:1cm)$) {};
    \node[draw=none, minimum height=15pt] (prv2) at ($(a)!0.9 * (\spacing + 0.9cm)!-235:(up:1cm)$) {};
    \node[draw=none, minimum height=15pt] (prvn) at ($(a)!0.9 * (\spacing + 0.9cm)!-305:(up:1cm)$) {};
    \node[draw=none, minimum height=15pt, text=gray] (nxtd) at ($(a)!0.9 * (\spacing + 0.9cm)! -90:(up:1cm)$) {...};
    \node[draw=none, minimum height=15pt, text=gray] (prvd) at ($(a)!0.9 * (\spacing + 0.9cm)!-270:(up:1cm)$) {...};

    \begin{scope}[transform canvas={xshift=2.5pt}]
      \draw[-latex, draw=gray] (nxt1) -- (a);
      \draw[-latex, draw=gray] (prv1) -- (a);
    \end{scope}
    \begin{scope}[transform canvas={xshift=-7pt}]
      \draw[-latex, draw=gray] (prv2) -- (a);
      \draw[-latex, draw=gray] (prvn) -- (a);
    \end{scope}
    \begin{scope}[transform canvas={xshift=7pt}]
      \draw[-latex, draw=gray] (nxt2) -- (a);
      \draw[-latex, draw=gray] (nxtn) -- (a);
    \end{scope}

    \node[draw=gray, text=gray, fill=white, minimum height=15pt, above=0pt of nxt1.south] {\texttt{r}$_{\textcolor{blue}{n+1};\textcolor{red}{m};...;\textcolor{green}{z}}$};
    \node[draw=gray, text=gray, fill=white, minimum height=15pt, right=0pt of nxt2.west] {\texttt{r}$_{\textcolor{blue}{n};\textcolor{red}{m+1};...;\textcolor{green}{z}}$};
    \node[draw=gray, text=gray, fill=white, minimum height=15pt, right=0pt of nxtn.west] {\texttt{r}$_{\textcolor{blue}{n};\textcolor{red}{m};...;\textcolor{green}{z+1}}$};
    \node[draw=gray, text=gray, fill=white, minimum height=15pt, below=0pt of prv1.north] {\texttt{r}$_{\textcolor{blue}{n-1};\textcolor{red}{m};...;\textcolor{green}{z}}$};
    \node[draw=gray, text=gray, fill=white, minimum height=15pt, left=0pt of prv2.east] {\texttt{r}$_{\textcolor{blue}{n};\textcolor{red}{m-1};...;\textcolor{green}{z}}$};
    \node[draw=gray, text=gray, fill=white, minimum height=15pt, left=0pt of prvn.east] {\texttt{r}$_{\textcolor{blue}{n};\textcolor{red}{m};...;\textcolor{green}{z-1}}$};

    \begin{scope}[every node/.style={
      draw=none,fill=white,
      inner sep=1pt
    }]
    \node (up1)   at ($(a)!0.7 * \rspace!   0:(up:1cm)$) {
    };
    \node (up2)   at ($(a)!0.9 * \rspace! -55:(up:1cm)$) {
    };
    \node (upd)   at ($(a)!0.9 * \rspace! -90:(up:1cm)$) {
    };
    \node (upn)   at ($(a)!0.9 * \rspace!-125:(up:1cm)$) {
    };
    \node (down1) at ($(a)!0.7 * \rspace!-180:(up:1cm)$) {
    };
    \node (down2) at ($(a)!0.9 * \rspace!-235:(up:1cm)$) {
    };
    \node (downd) at ($(a)!0.9 * \rspace!-270:(up:1cm)$) {
    };
    \node (downn) at ($(a)!0.9 * \rspace!-305:(up:1cm)$) {
    };
    \end{scope}

    \begin{scope}[transform canvas={xshift=-2.5pt}]
      \draw[latex-, draw=blue] (nxt1) -- (a);
      \draw[latex-, draw=blue] (prv1) -- (a);
    \end{scope}


    \draw[latex-, draw=  red] (nxt2) -- (a);
    \draw[latex-, draw=green] (nxtn) -- (a);
    \draw[latex-, draw=  red] (prv2) -- (a);
    \draw[latex-, draw=green] (prvn) -- (a);

    \draw[draw=\debug, name path=l1] ($(a)!0.7 * \spacing!   0:(up:1cm)$) -- ([turn]90:\len);
    \draw[draw=\debug, name path=r1] ($(a)!0.7 * \spacing!   0:(up:1cm)$) -- ([turn]-90:\len);
    \draw[draw=\debug, name path=l2] ($(a)!0.9 * \spacing! -55:(up:1cm)$) -- ([turn]90:\len);
    \draw[draw=\debug, name path=r2] ($(a)!0.9 * \spacing! -55:(up:1cm)$) -- ([turn]-90:\len);
    \draw[draw=\debug, name path=l3] ($(a)!0.9 * \spacing!-125:(up:1cm)$) -- ([turn]90:\len);
    \draw[draw=\debug, name path=r3] ($(a)!0.9 * \spacing!-125:(up:1cm)$) -- ([turn]-90:\len);
    \draw[draw=\debug, name path=l4] ($(a)!0.7 * \spacing!-180:(up:1cm)$) -- ([turn]90:\len);
    \draw[draw=\debug, name path=r4] ($(a)!0.7 * \spacing!-180:(up:1cm)$) -- ([turn]-90:\len);
    \draw[draw=\debug, name path=l5] ($(a)!0.9 * \spacing!-235:(up:1cm)$) -- ([turn]90:\len);
    \draw[draw=\debug, name path=r5] ($(a)!0.9 * \spacing!-235:(up:1cm)$) -- ([turn]-90:\len);
    \draw[draw=\debug, name path=l6] ($(a)!0.9 * \spacing!-305:(up:1cm)$) -- ([turn]90:\len);
    \draw[draw=\debug, name path=r6] ($(a)!0.9 * \spacing!-305:(up:1cm)$) -- ([turn]-90:\len);

    \begin{scope}[
      name intersections={of=r1 and l2, by={pa}},
      name intersections={of=r2 and l3, by={pb}},
      name intersections={of=r3 and l4, by={pc}},
      name intersections={of=r4 and l5, by={pd}},
      name intersections={of=r5 and l6, by={pe}},
      name intersections={of=r6 and l1, by={pf}},
    ]
      \draw[dashed, rounded corners]
        ($(a)!0.95!(pa)$) --
        ($(a)!0.95!(pb)$) --
        ($(a)!0.95!(pc)$) --
        ($(a)!0.95!(pd)$) --
        ($(a)!0.95!(pe)$) --
        ($(a)!0.95!(pf)$) -- cycle;

      \foreach \a/\b in {pf/pa,pc/pd}
      \draw[dashed, rounded corners, draw=blue!50!white]
        ($(a)!1.30!-4:(\a)$) --
        ($(a)!1.05!-4:(\a)$) --
        ($(a)!1.05! 4:(\b)$) --
        ($(a)!1.30! 4:(\b)$);

      \coordinate (paa) at ($(a)!1.05!-4:(pa)$);
      \coordinate (pbb1) at ($(a)!1.05! 4:(pb)$);
      \coordinate (pbb2) at ($(a)!1.05!-4:(pb)$);
      \coordinate (pcc) at ($(a)!1.05! 4:(pc)$);
      \coordinate (pdd) at ($(a)!1.05!-4:(pd)$);
      \coordinate (pee1) at ($(a)!1.05! 4:(pe)$);
      \coordinate (pee2) at ($(a)!1.05!-4:(pe)$);
      \coordinate (pff) at ($(a)!1.05! 4:(pf)$);

      \draw[dashed, rounded corners, draw=red!50!white]
        ($(a)!1.30!-4:(pa)$) -- (paa) --
        ($(paa)!0.5!(pbb1)$) -- +(right:1cm);
      \draw[dashed, rounded corners, draw=green!50!white]
        ($(a)!1.30! 4:(pc)$) -- (pcc) --
        ($(pcc)!0.5!(pbb2)$) -- +(right:1cm);
      \draw[dashed, rounded corners, draw=red!50!white]
        ($(a)!1.30!-4:(pd)$) -- (pdd) --
        ($(pdd)!0.5!(pee1)$) -- +(left:1cm);
      \draw[dashed, rounded corners, draw=green!50!white]
        ($(a)!1.30! 4:(pf)$) -- (pff) --
        ($(pff)!0.5!(pee2)$) -- +(left:1cm);

    \end{scope}

  \end{scope}
\end{tikzpicture}
      \caption{A single router in the torus}
      \label{fig:routerND}
    \end{subfigure}
  \end{minipage}
  \caption{Foundations of our formal specification for $k$-dimensional \noc{}s.}
  \label{fig:combined}
\end{figure}

\subsection{Overview}

Our library provides formally-verified $k$-dimensional \noc{}s.
Naturally, our library and the respective specification abstracts over
\texttt{data\_t}, i.e., the payload, of a single packet that is transmitted
via the network.
More importantly, our development is parametric in the number of dimensions
$|\texttt{dims}|$, and the length of each individual dimension.
\begin{lstlisting}
Context data_t : §2:type§.
Context dims : §2:list nat§.
\end{lstlisting}

Based on these inputs, our library creates $k$-dimensional torus networks, i.e.,
each dimension has a wrap-around link.
\Cref{fig:2x3torus} shows a \texttt{[2;3]} torus.
Hence, the shortest path to transmit a message from router
\texttt{r}$_{\textcolor{blue}{0};\textcolor{red}{0}}$ to router
\texttt{r}$_{\textcolor{blue}{0};\textcolor{red}{2}}$
is not via \texttt{r}$_{\textcolor{blue}{0};\textcolor{red}{1}}$ but the
\textcolor{red}{direct connection} that wraps around.
This wrap-around simplifies our design because it removes the start and end such that
every router contains the same logic.
Routers only differ in their index, i.e., the location in the network, and respectively
the registers they are connected to.
Consequently, the whole network can be understood by examining the design of
a single router.
\Cref{fig:routerND} visualizes router
\texttt{r}$_{\textcolor{blue}{n};\textcolor{red}{m};...;\textcolor{green}{z}}$
in a $k$-dimensional \noc.
The index of the router, i.e., its location in the \noc{}, is
$\textcolor{blue}{n};\textcolor{red}{m};...;\textcolor{green}{z}$
with $|\textcolor{blue}{n};\textcolor{red}{m};...;\textcolor{green}{z}| = k$
dimensions.
The router has two neighbors per dimension, e.g.,
\texttt{r}$_{\textcolor{blue}{n-1};\textcolor{red}{m};...;\textcolor{green}{z}}$ and
\texttt{r}$_{\textcolor{blue}{n+1};\textcolor{red}{m};...;\textcolor{green}{z}}$ for
\textcolor{blue}{dimension 1}.
As in the pipeline design in \cref{fig:intermediate_routers}, two shared
channels \texttt{up} and \texttt{down} facilitate a bidirectional channel
per neighbor.
\Cref{fig:routingarbitration} shows the router logic that composes
of two algorithms, \texttt{routing} and \texttt{arbitration}.
For each incoming channel, the \texttt{routing} decides which output a message
should be forwarded to.
In our design, we consider only channels that can store at most a single message.
Hence, we send at most a single message along a channel per cycle.
For each outgoing channel, the \texttt{arbitration} selects which message to
prioritize in case of a conflict, i.e., when multiple
incoming channels are routed towards the same outgoing channel.

\begin{figure}
  \hspace{0.5cm}
  \begin{subfigure}[t]{0.40\textwidth}
    \centering
    \newcommand{\arrspc}{0pt}
\begin{tikzpicture}[
  font=\footnotesize
]
  \matrix (net) [
    column sep=0.7cm,
    row sep=0.7cm,
    matrix of nodes,
    nodes in empty cells,
    nodes={outer sep=0pt,draw,minimum size=0.7cm, inner sep=0pt},
  ]
  {
    \texttt{\texttt{r}$_{\textcolor{blue}{0}}$} & \texttt{\texttt{r}$_{\textcolor{blue}{1}}$} \\
    \texttt{\texttt{r}$_{\textcolor{blue}{3}}$} & \texttt{\texttt{r}$_{\textcolor{blue}{2}}$} \\
  };

  \begin{scope}[rounded corners=0.1cm, blue]
    \draw[<-] ($(net-1-1.east)  + (0, \arrspc)$) -- ($(net-1-2.west)  + (0, \arrspc)$); 
    \draw[<-] ($(net-2-2.west)  + (0,-\arrspc)$) -- ($(net-2-1.east)  + (0,-\arrspc)$); 

    \draw[<-] ($(net-2-1.north) + (-\arrspc,0)$) -- ($(net-1-1.south) + (-\arrspc,0)$); 
    \draw[<-] ($(net-1-2.south) + ( \arrspc,0)$) -- ($(net-2-2.north) + ( \arrspc,0)$); 
  \end{scope}



  \coordinate (p3) at ($ ([yshift=\arrspc]net-1-1.east) !0.5! ([yshift=\arrspc]net-1-2.west) $);
  \coordinate (p1) at ($ ([yshift=-\arrspc]net-2-2.west) !0.5! ([yshift=-\arrspc]net-2-1.east) $);
  \coordinate (p2) at ($ ([xshift=-\arrspc]net-2-1.north) !0.5! ([xshift=-\arrspc]net-1-1.south) $);
  \coordinate (p0) at ($ ([xshift=\arrspc]net-1-2.south) !0.5! ([xshift=\arrspc]net-2-2.north) $);

  \newcommand{\idk}{5pt}

  \begin{scope}[rounded corners]
    \node[circle, fill=blue, text=white, inner sep=1pt] (c3) at (p3) {3};
    \node[circle, fill=blue, text=white, inner sep=1pt] (c1) at (p1) {1};
    \node[circle, fill=blue, text=white, inner sep=1pt] (c2) at (p2) {2};
    \node[circle, fill=blue, text=white, inner sep=1pt] (c0) at (p0) {0};

    \draw[->, red, dash pattern=on 5pt off 2pt] (c3) -- ($(net-1-1.north east) + (0,\idk)$) -| ($(net-1-1.south west) + (-\idk,0)$) -- (c2);
    \draw[->, red, dash pattern=on 5pt off 2pt] (c2) -- ($(net-2-1.north west) + (-\idk,0)$) |- ($(net-2-1.south east) + (0,-\idk)$) -- (c1);
    \draw[->, red, dash pattern=on 5pt off 2pt] (c1) -- ($(net-2-2.south west) + (0,-\idk)$) -| ($(net-2-2.north east) + (\idk,0)$) -- (c0);
    \draw[->, red, dash pattern=on 5pt off 2pt] (c0) -- ($(net-1-2.south east) + (\idk,0)$) |- ($(net-1-2.north west) + (0,\idk)$) -- (c3);
  \end{scope}


\end{tikzpicture}
    \vspace{2pt}
    \caption{Deadlock}
    \label{fig:deadlock}
  \end{subfigure}
  \begin{subfigure}[t]{0.1\textwidth}
    \centering
    \vspace{-2.3cm}
    \hspace*{-0.8cm}
    \begin{tikzpicture}[
  font=\footnotesize
]
  \matrix[
    fill=white,
    row sep=5pt,
    column sep=2pt,
    inner sep=0pt,
    column 1/.style={anchor=center},
  ] at (0,0) {
      \node[circle, fill=blue, text=white, inner sep=1pt] {\tiny $x$}; &
      \node[anchor=west] {\tiny packet destined for $x$}; \\

      \node {
      \begin{tikzpicture}
        \draw[->, blue] (0,0) -- +(0.3,0);
      \end{tikzpicture}
      }; &
      \node[anchor=north west, yshift=3pt] {\tiny \shortstack[l]{channel \\[-1pt] \hspace{5pt} inner = low \\[-1pt] \hspace{5pt} outer = high}}; \\

      \node {
      \begin{tikzpicture}
        \draw[->, gray] (0,0) -- +(0.3,0);
      \end{tikzpicture}
      }; &
      \node[anchor=north west, yshift=3pt] {\tiny \shortstack[l]{channel which is\\[-1pt]never routed}}; \\

      \node {
      \begin{tikzpicture}
        \draw[->, red, densely dashed] (0,0) -- +(0.3,0);
      \end{tikzpicture}
      }; &
      \node[anchor=west] {\tiny blocked by}; \\

      \node {
      \begin{tikzpicture}
        \draw[->, green, densely dashed] (0,0) -- +(0.3,0);
      \end{tikzpicture}
      }; &
      \node[anchor=west] {\tiny available hop}; \\
  };
\end{tikzpicture}
  \end{subfigure}
  \begin{subfigure}[t]{0.40\textwidth}
    \centering
    \newcommand{\arrspc}{-3pt}
\newcommand{\arrspcc}{3pt}
\begin{tikzpicture}[
  font=\footnotesize
]
  \matrix (net) [
    column sep=0.7cm,
    row sep=0.7cm,
    matrix of nodes,
    nodes in empty cells,
    nodes={outer sep=0pt,draw,minimum size=0.7cm, inner sep=0pt},
  ]
  {
    \texttt{\texttt{r}$_{\textcolor{blue}{0}}$} & \texttt{\texttt{r}$_{\textcolor{blue}{1}}$} \\
    \texttt{\texttt{r}$_{\textcolor{blue}{3}}$} & \texttt{\texttt{r}$_{\textcolor{blue}{2}}$} \\
  };

  \begin{scope}[rounded corners=0.1cm, blue]
    \draw[<-] ($(net-1-1.east)  + (0, \arrspc)$) -- ($(net-1-2.west)  + (0, \arrspc)$); 
    \draw[<-] ($(net-2-2.west)  + (0,-\arrspc)$) -- ($(net-2-1.east)  + (0,-\arrspc)$); 

    \draw[<-,gray] ($(net-2-1.north) + (-\arrspc,0)$) -- ($(net-1-1.south) + (-\arrspc,0)$); 
    \draw[<-] ($(net-1-2.south) + ( \arrspc,0)$) -- ($(net-2-2.north) + ( \arrspc,0)$); 
  \end{scope}
  \begin{scope}[rounded corners=0.1cm, blue]
    \draw[<-] ($(net-1-1.east)  + (0, \arrspcc)$) -- ($(net-1-2.west)  + (0, \arrspcc)$); 
    \draw[<-,gray] ($(net-2-2.west)  + (0,-\arrspcc)$) -- ($(net-2-1.east)  + (0,-\arrspcc)$); 

    \draw[<-] ($(net-2-1.north) + (-\arrspcc,0)$) -- ($(net-1-1.south) + (-\arrspcc,0)$); 
    \draw[<-] ($(net-1-2.south) + ( \arrspcc,0)$) -- ($(net-2-2.north) + ( \arrspcc,0)$); 
  \end{scope}




  \coordinate (p3) at
    ($ ([yshift=\arrspcc]net-1-1.east) !0.5! ([yshift=\arrspcc]net-1-2.west) $);
  \coordinate (p3l) at
    ($ ([yshift=-\arrspcc]net-1-1.east) !0.5! ([yshift=-\arrspcc]net-1-2.west) $);
  \coordinate (p1) at
    ($ ([yshift=-\arrspc]net-2-2.west) !0.5! ([yshift=-\arrspc]net-2-1.east) $);
  \coordinate (p2) at
    ($ ([xshift=-\arrspcc]net-2-1.north) !0.5! ([xshift=-\arrspcc]net-1-1.south) $);
  \coordinate (p0) at
    ($ ([xshift=\arrspc]net-1-2.south) !0.5! ([xshift=\arrspc]net-2-2.north) $);

  \newcommand{\idk}{5pt}

  \begin{scope}[rounded corners]
    \node[circle, inner sep=1pt, draw=green, densely dashed, fill=green,
    fill opacity=0.2] (c3l) at (p3l) {\phantom{3}};

    \node[circle, fill=blue, text=white, inner sep=1pt] (c3) at (p3) {\hypertarget{virt_channel:3}{3}};
    \node[circle, fill=blue, text=white, inner sep=1pt] (c1) at (p1) {\hypertarget{virt_channel:1}{1}};
    \node[circle, fill=blue, text=white, inner sep=1pt] (c2) at (p2) {\hypertarget{virt_channel:2}{2}};
    \node[circle, fill=blue, text=white, inner sep=1pt] (c0) at (p0) {\hypertarget{virt_channel:0}{0}};

    \draw[->, red, dash pattern=on 5pt off 2pt] (c3) -- ($(net-1-1.north east) + (0,\idk)$) -| ($(net-1-1.south west) + (-\idk,0)$) -- (c2);
    \draw[->, red, dash pattern=on 5pt off 2pt] (c2) -- ($(net-2-1.north west) + (-\idk,0)$) |- ($(net-2-1.south east) + (0,-\idk)$) -- (c1);
    \draw[->, red, dash pattern=on 5pt off 2pt] (c1) -- ($(net-2-2.south west) + (0,-\idk)$) -| ($(net-2-2.north east) + (\idk,0)$) -- (c0);
    \draw[->, green, dash pattern=on 5pt off 2pt] (c0) -- ($(net-1-2.south west) - (\idk,\idk)$) -- (c3l);

  \end{scope}

\end{tikzpicture}
    \vspace{2pt}
    \caption{Virtual channels}
    \label{fig:virt_chan}
  \end{subfigure}
  \caption{
    Because the wrap-around creates a cycle,
    a deadlock might occur in which every packet
    is blocked by another (a).
    Virtual channels resolve this issue by breaking
    the dependency cycle (b).
  }
  \label{fig:virtual_channels}
\end{figure}

\subsection{Specification}

Due to the generalization over the number and size of the dimensions,
most internal definitions are conceptually constructed from two nested recursions.
The outer recursion over all dimensions and
the inner recursion over the routers in a single dimension.

\subsubsection{Virtual Channels}\label{sec:virt_chan}

Our \noc architecture employs wrap-around links, which makes it
vulnerable to the classic deadlock issue. \Cref{fig:deadlock}
illustrates a deadlock scenario in a single-dimensional \noc consisting
of four routers. In this arrangement every channel is occupied, forming a
circular dependency that prevents any packet from advancing.

To address the deadlock problem, we adopted the methodology of
\emph{virtual channels}~\cite{virtualchannels}. For that, each channel
is duplicated into a \emph{high} and a \emph{low} virtual channel.
Then the routing algorithm is adapted such that packets with a destination
higher than their current router are forwarded via the high channels. All other
packets are routed via the low channels. Phrased differently, packets which need
to wrap around are routed on the high channels until they wrap. Then they continue
on the low channels together with the other packets.
Intuitively, this breaks the dependency cycle because low channels never depend on
high channels.

An application of this methodology to the previous deadlock example is shown
in \cref{fig:virt_chan}. The high channels are depicted as the outer ring while
the low channels form the inner ring. Notably, two channels are grayed-out
as they can never be selected by the adapted routing algorithm. In this
specific example the deadlock is resolved as packet
\hyperlink{virt_channel:0}{%
\tikz[baseline=(char.base)]{
  \node[circle, fill=blue, text=white, inner sep=1pt] (char) {\small 0};
}%
}
does not depend on packet
\hyperlink{virt_channel:3}{%
\tikz[baseline=(char.base)]{
  \node[circle, fill=blue, text=white, inner sep=1pt] (char) {\small 3};
}%
}
anymore.

\subsubsection{Routing}

\begin{figure}
  \begin{subfigure}[b]{0.48\textwidth}
    \centering
\newcommand{\rspace}{1cm}
\newcommand{\spacing}{2.3cm}
\newcommand{\len}{2.2cm}
\newcommand{\debug}{none}

\newcommand{\spc}{6.5pt}
\newcommand{\wspc}{5pt}

\begin{tikzpicture}[every node/.style={draw,rectangle}]

  \node[thick, minimum height=2.5cm, minimum width=2.5cm] (r) at (0,0) {};

  \node[fill=white, inner sep=0pt, minimum width=1.0cm, minimum height=9pt, text=gray, right=-0.1pt of r.north] (arbt) {\footnotesize{\texttt{arb.}}};
  \node[fill=white, inner sep=0pt, minimum width=1.0cm, minimum height=9pt, text=blue, fill=blue!5, left=-0.1pt of r.north] (rout1) {\footnotesize{\texttt{rout.}}};

  \node[fill=white, inner sep=0pt, minimum width=1.0cm, minimum height=9pt, text=gray, right=-0.1pt of r.south] (rout2) {\footnotesize{\texttt{rout.}}};
  \node[fill=white, inner sep=0pt, minimum width=1.0cm, minimum height=9pt, text=blue, fill=blue!5, left=-0.1pt of r.south] (arbb) {\footnotesize{\texttt{arb.}}};

  \node[fill=white, inner sep=0pt, minimum width=1.0cm, minimum height=9pt, text=gray, left=0pt of r.west, rotate=90] (routl) {\footnotesize{\texttt{rout.}}};
  \node[fill=white, inner sep=0pt, minimum width=1.0cm, minimum height=9pt, text=gray, right=0pt of r.west, rotate=90] (arbl) {\footnotesize{\texttt{arb.}}};

  \node[fill=white, inner sep=0pt, minimum width=1.0cm, minimum height=9pt, text=gray, right=0pt of r.east, rotate=90] (routr) {\footnotesize{\texttt{rout.}}};
  \node[fill=white, inner sep=0pt, minimum width=1.0cm, minimum height=9pt, text=gray, left=0pt of r.east, rotate=90] (arbr) {\footnotesize{\texttt{arb.}}};

  \node[draw=none, text=blue, inner sep=1pt, above=0.8cm of rout1] (downt) {\texttt{down$_{n+1}$}};
  \draw[latex-, blue, thick] (rout1) -- (downt);
  \draw[dashed, thick, gray] ($(r.north) + (-2cm,0.9cm)$) -- ($(r.north) + (2cm,0.9cm)$);
  \node[draw=none, text=gray, inner sep=1pt, above=0.3cm of arbt] (upt) {\texttt{up$_n$}};
  \draw[-latex, gray] (arbt) -- (upt);

  \node[draw=none, text=blue, inner sep=1pt, below=0.3cm of arbb] (downb) {\texttt{down$_n$}};
  \draw[-latex, blue, thick] (arbb) -- (downb);

  \node[draw=none, text=gray, inner sep=1pt, right=0.3cm of arbr.south] (upr) {\texttt{up$_{m}$}};
  \draw[latex-, gray] (upr) -- (arbr);

  \node[draw=none, text=gray, inner sep=1pt, left=0.3cm of arbl.north] (downl) {\texttt{down$_{m}$}};
  \draw[latex-, gray] (downl) -- (arbl);

  \draw[-latex, gray, rounded corners=2pt] ($(rout1.south) + {-1.5}*(\wspc,0)$) -- ($(rout1.south) + {-1.5}*(\wspc,0) + (0,-\spc)$) -- ($(arbl.south) + {+1.5}*(0,\wspc) + ( \spc,0)$) -- ($(arbl.south) + {+1.5}*(0,\wspc)$);
  \draw[-latex, gray, rounded corners=2pt] ($(rout1.south) + { 0.5}*(\wspc,0)$) -- ($(rout1.south) + { 0.5}*(\wspc,0) + (0,-\spc)$) -- ($(arbr.north) + { 0.5}*(0,\wspc) + (-\spc,0)$) -- ($(arbr.north) + { 0.5}*(0,\wspc)$);
  \draw[-latex, gray, rounded corners=2pt] ($(rout1.south) + { 1.5}*(\wspc,0)$) -- ($(rout1.south) + { 1.5}*(\wspc,0) + (0,-\spc)$) -- ($(arbt.south) + {-1.5}*(\wspc,0) + (0,-\spc)$) -- ($(arbt.south) + {-1.5}*(\wspc,0)$);

  \draw[-latex, gray, rounded corners=2pt] ($(routl.south) + { 1.5}*(0,\wspc)$) -- ($(routl.south) + { 1.5}*(0,\wspc) + ( \spc,0)$) -- ($(arbl.south) + {-1.5}*(0,\wspc) + ( \spc,0)$) -- ($(arbl.south) + {-1.5}*(0,\wspc)$);
  \draw[-latex, gray, rounded corners=2pt] ($(routl.south) + {-1.5}*(0,\wspc)$) -- ($(routl.south) + {-1.5}*(0,\wspc) + ( \spc,0)$) -- ($(arbb.north) + {-1.5}*(\wspc,0) + (0, \spc)$) -- ($(arbb.north) + {-1.5}*(\wspc,0)$);
  \draw[-latex, gray, rounded corners=2pt] ($(routl.south) + {-0.5}*(0,\wspc)$) -- ($(routl.south) + {-0.5}*(0,\wspc) + ( \spc,0)$) -- ($(arbr.north) + {-0.5}*(0,\wspc) + (-\spc,0)$) -- ($(arbr.north) + {-0.5}*(0,\wspc)$);
  \draw[-latex, gray, rounded corners=2pt] ($(routl.south) + { 0.5}*(0,\wspc)$) -- ($(routl.south) + { 0.5}*(0,\wspc) + ( \spc,0)$) -- ($(arbt.south) + {-0.5}*(\wspc,0) + (0,-\spc)$) -- ($(arbt.south) + {-0.5}*(\wspc,0)$);

  \draw[-latex, gray, rounded corners=2pt] ($(rout2.north) + {-0.5}*(\wspc,0)$) -- ($(rout2.north) + {-0.5}*(\wspc,0) + (0, \spc)$) -- ($(arbl.south) + {-0.5}*(0,\wspc) + ( \spc,0)$) -- ($(arbl.south) + {-0.5}*(0,\wspc)$);
  \draw[-latex, gray, rounded corners=2pt] ($(rout2.north) + {-1.5}*(\wspc,0)$) -- ($(rout2.north) + {-1.5}*(\wspc,0) + (0, \spc)$) -- ($(arbb.north) + { 1.5}*(\wspc,0) + (0, \spc)$) -- ($(arbb.north) + { 1.5}*(\wspc,0)$);
  \draw[-latex, gray, rounded corners=2pt] ($(rout2.north) + { 1.5}*(\wspc,0)$) -- ($(rout2.north) + { 1.5}*(\wspc,0) + (0, \spc)$) -- ($(arbr.north) + {-1.5}*(0,\wspc) + (-\spc,0)$) -- ($(arbr.north) + {-1.5}*(0,\wspc)$);
  \draw[-latex, gray, rounded corners=2pt] ($(rout2.north) + { 0.5}*(\wspc,0)$) -- ($(rout2.north) + { 0.5}*(\wspc,0) + (0, \spc)$) -- ($(arbt.south) + {+0.5}*(\wspc,0) + (0,-\spc)$) -- ($(arbt.south) + {+0.5}*(\wspc,0)$);

  \draw[-latex, gray, rounded corners=2pt] ($(routr.north) + { 0.5}*(0,\wspc)$) -- ($(routr.north) + { 0.5}*(0,\wspc) + (-\spc,0)$) -- ($(arbl.south) + { 0.5}*(0,\wspc) + ( \spc,0)$) -- ($(arbl.south) + { 0.5}*(0,\wspc)$);
  \draw[-latex, gray, rounded corners=2pt] ($(routr.north) + {-0.5}*(0,\wspc)$) -- ($(routr.north) + {-0.5}*(0,\wspc) + (-\spc,0)$) -- ($(arbb.north) + { 0.5}*(\wspc,0) + (0, \spc)$) -- ($(arbb.north) + { 0.5}*(\wspc,0)$);
  \draw[-latex, gray, rounded corners=2pt] ($(routr.north) + {-1.5}*(0,\wspc)$) -- ($(routr.north) + {-1.5}*(0,\wspc) + (-\spc,0)$) -- ($(arbr.north) + { 1.5}*(0,\wspc) + (-\spc,0)$) -- ($(arbr.north) + { 1.5}*(0,\wspc)$);
  \draw[-latex, gray, rounded corners=2pt] ($(routr.north) + { 1.5}*(0,\wspc)$) -- ($(routr.north) + { 1.5}*(0,\wspc) + (-\spc,0)$) -- ($(arbt.south) + {+1.5}*(\wspc,0) + (0,-\spc)$) -- ($(arbt.south) + {+1.5}*(\wspc,0)$);

  \draw[-latex, blue, thick, rounded corners=2pt] ($(rout1.south) + {-0.5}*(\wspc,0)$) -- ($(rout1.south) + {-0.5}*(\wspc,0) + (0,-\spc)$) -- ($(arbb.north) + {-0.5}*(\wspc,0) + (0, \spc)$) -- ($(arbb.north) + {-0.5}*(\wspc,0)$);


  %

  %


  \node[draw=none, inner sep=1pt, above right=0pt of r.north east, fill=white] {\texttt{r}$_{\textcolor{blue}{n};\textcolor{red}{m}}$};

\end{tikzpicture}
    \caption{routing through a single dimension}
    \label{fig:routing_1}
  \end{subfigure}
~
  \begin{subfigure}[b]{0.45\textwidth}
    \centering
\newcommand{\rspace}{1cm}
\newcommand{\spacing}{2.3cm}
\newcommand{\len}{2.2cm}
\newcommand{\debug}{none}

\newcommand{\spc}{6.5pt}
\newcommand{\wspc}{5pt}

\begin{tikzpicture}[every node/.style={draw,rectangle}]

  \node[thick, minimum height=2.5cm, minimum width=2.5cm] (r) at (0,0) {};

  \node[fill=white, inner sep=0pt, minimum width=1.0cm, minimum height=9pt, text=gray, right=-0.1pt of r.north] (arbt) {\footnotesize{\texttt{arb.}}};
  \node[fill=white, inner sep=0pt, minimum width=1.0cm, minimum height=9pt, text=blue, fill=blue!5, left=-0.1pt of r.north] (rout1) {\footnotesize{\texttt{rout.}}};

  \node[fill=white, inner sep=0pt, minimum width=1.0cm, minimum height=9pt, text=gray, right=-0.1pt of r.south] (rout2) {\footnotesize{\texttt{rout.}}};
  \node[fill=white, inner sep=0pt, minimum width=1.0cm, minimum height=9pt, text=gray, left=-0.1pt of r.south] (arbb) {\footnotesize{\texttt{arb.}}};

  \node[fill=white, inner sep=0pt, minimum width=1.0cm, minimum height=9pt, text=gray, left=0pt of r.west, rotate=90] (routl) {\footnotesize{\texttt{rout.}}};
  \node[fill=white, inner sep=0pt, minimum width=1.0cm, minimum height=9pt, text=gray, right=0pt of r.west, rotate=90] (arbl) {\footnotesize{\texttt{arb.}}};

  \node[fill=white, inner sep=0pt, minimum width=1.0cm, minimum height=9pt, text=gray, right=0pt of r.east, rotate=90] (routr) {\footnotesize{\texttt{rout.}}};
  \node[fill=white, inner sep=0pt, minimum width=1.0cm, minimum height=9pt, text=red, fill=red!5, left=0pt of r.east, rotate=90] (arbr) {\footnotesize{\texttt{arb.}}};

  \node[draw=none, text=blue, inner sep=1pt, above=0.8cm of rout1] (downt) {\texttt{down$_{n+1}$}};
  \draw[latex-, blue, thick] (rout1) -- (downt);
  \draw[dashed, thick, gray] ($(r.north) + (-2cm,0.9cm)$) -- ($(r.north) + (2cm,0.9cm)$);
  \node[draw=none, text=gray, inner sep=1pt, above=0.3cm of arbt] (upt) {\texttt{up$_n$}};
  \draw[-latex, gray] (arbt) -- (upt);

  \node[draw=none, text=gray, inner sep=1pt, below=0.3cm of arbb] (downb) {\texttt{down$_n$}};
  \draw[-latex, gray] (arbb) -- (downb);

  \node[draw=none, text=red, inner sep=1pt, right=0.3cm of arbr.south] (upr) {\texttt{up$_{m}$}};
  \draw[latex-, red, thick] (upr) -- (arbr);

  \node[draw=none, text=gray, inner sep=1pt, left=0.3cm of arbl.north] (downl) {\texttt{down$_{m}$}};
  \draw[latex-, gray] (downl) -- (arbl);

  \draw[-latex, gray, rounded corners=2pt] ($(rout1.south) + {-1.5}*(\wspc,0)$) -- ($(rout1.south) + {-1.5}*(\wspc,0) + (0,-\spc)$) -- ($(arbl.south) + {+1.5}*(0,\wspc) + ( \spc,0)$) -- ($(arbl.south) + {+1.5}*(0,\wspc)$);
  \draw[-latex, gray, rounded corners=2pt] ($(rout1.south) + {-0.5}*(\wspc,0)$) -- ($(rout1.south) + {-0.5}*(\wspc,0) + (0,-\spc)$) -- ($(arbb.north) + {-0.5}*(\wspc,0) + (0, \spc)$) -- ($(arbb.north) + {-0.5}*(\wspc,0)$);
  \draw[-latex, gray, rounded corners=2pt] ($(rout1.south) + { 1.5}*(\wspc,0)$) -- ($(rout1.south) + { 1.5}*(\wspc,0) + (0,-\spc)$) -- ($(arbt.south) + {-1.5}*(\wspc,0) + (0,-\spc)$) -- ($(arbt.south) + {-1.5}*(\wspc,0)$);

  \draw[-latex, gray, rounded corners=2pt] ($(routl.south) + { 1.5}*(0,\wspc)$) -- ($(routl.south) + { 1.5}*(0,\wspc) + ( \spc,0)$) -- ($(arbl.south) + {-1.5}*(0,\wspc) + ( \spc,0)$) -- ($(arbl.south) + {-1.5}*(0,\wspc)$);
  \draw[-latex, gray, rounded corners=2pt] ($(routl.south) + {-1.5}*(0,\wspc)$) -- ($(routl.south) + {-1.5}*(0,\wspc) + ( \spc,0)$) -- ($(arbb.north) + {-1.5}*(\wspc,0) + (0, \spc)$) -- ($(arbb.north) + {-1.5}*(\wspc,0)$);
  \draw[-latex, gray, rounded corners=2pt] ($(routl.south) + {-0.5}*(0,\wspc)$) -- ($(routl.south) + {-0.5}*(0,\wspc) + ( \spc,0)$) -- ($(arbr.north) + {-0.5}*(0,\wspc) + (-\spc,0)$) -- ($(arbr.north) + {-0.5}*(0,\wspc)$);
  \draw[-latex, gray, rounded corners=2pt] ($(routl.south) + { 0.5}*(0,\wspc)$) -- ($(routl.south) + { 0.5}*(0,\wspc) + ( \spc,0)$) -- ($(arbt.south) + {-0.5}*(\wspc,0) + (0,-\spc)$) -- ($(arbt.south) + {-0.5}*(\wspc,0)$);

  \draw[-latex, gray, rounded corners=2pt] ($(rout2.north) + {-0.5}*(\wspc,0)$) -- ($(rout2.north) + {-0.5}*(\wspc,0) + (0, \spc)$) -- ($(arbl.south) + {-0.5}*(0,\wspc) + ( \spc,0)$) -- ($(arbl.south) + {-0.5}*(0,\wspc)$);
  \draw[-latex, gray, rounded corners=2pt] ($(rout2.north) + {-1.5}*(\wspc,0)$) -- ($(rout2.north) + {-1.5}*(\wspc,0) + (0, \spc)$) -- ($(arbb.north) + { 1.5}*(\wspc,0) + (0, \spc)$) -- ($(arbb.north) + { 1.5}*(\wspc,0)$);
  \draw[-latex, gray, rounded corners=2pt] ($(rout2.north) + { 1.5}*(\wspc,0)$) -- ($(rout2.north) + { 1.5}*(\wspc,0) + (0, \spc)$) -- ($(arbr.north) + {-1.5}*(0,\wspc) + (-\spc,0)$) -- ($(arbr.north) + {-1.5}*(0,\wspc)$);
  \draw[-latex, gray, rounded corners=2pt] ($(rout2.north) + { 0.5}*(\wspc,0)$) -- ($(rout2.north) + { 0.5}*(\wspc,0) + (0, \spc)$) -- ($(arbt.south) + {+0.5}*(\wspc,0) + (0,-\spc)$) -- ($(arbt.south) + {+0.5}*(\wspc,0)$);

  \draw[-latex, gray, rounded corners=2pt] ($(routr.north) + { 0.5}*(0,\wspc)$) -- ($(routr.north) + { 0.5}*(0,\wspc) + (-\spc,0)$) -- ($(arbl.south) + { 0.5}*(0,\wspc) + ( \spc,0)$) -- ($(arbl.south) + { 0.5}*(0,\wspc)$);
  \draw[-latex, gray, rounded corners=2pt] ($(routr.north) + {-0.5}*(0,\wspc)$) -- ($(routr.north) + {-0.5}*(0,\wspc) + (-\spc,0)$) -- ($(arbb.north) + { 0.5}*(\wspc,0) + (0, \spc)$) -- ($(arbb.north) + { 0.5}*(\wspc,0)$);
  \draw[-latex, gray, rounded corners=2pt] ($(routr.north) + {-1.5}*(0,\wspc)$) -- ($(routr.north) + {-1.5}*(0,\wspc) + (-\spc,0)$) -- ($(arbr.north) + { 1.5}*(0,\wspc) + (-\spc,0)$) -- ($(arbr.north) + { 1.5}*(0,\wspc)$);
  \draw[-latex, gray, rounded corners=2pt] ($(routr.north) + { 1.5}*(0,\wspc)$) -- ($(routr.north) + { 1.5}*(0,\wspc) + (-\spc,0)$) -- ($(arbt.south) + {+1.5}*(\wspc,0) + (0,-\spc)$) -- ($(arbt.south) + {+1.5}*(\wspc,0)$);

  \coordinate (pstart) at ($(rout1.south) + {0.5}*(\wspc,0) + (0,-\spc)$);
  \coordinate (pend)   at ($(arbr.north) + {0.5}*(0,\wspc) + (-\spc,0)$);
  \coordinate (midA) at ($(pstart)!0.2cm!(pend)$);
  \coordinate (midB) at ($(pend)!0.2cm!(pstart)$);
  \coordinate (midAA) at ($(pstart)!0.21cm!(pend)$);
  \coordinate (midBB) at ($(pend)!0.21cm!(pstart)$);

  \draw[thick, blue, rounded corners=2pt] ($(rout1.south) + {0.5}*(\wspc,0)$) -- (pstart) -- (midAA);
  \fadeline{midA}{midB}{top color=blue, bottom color=red, minimum height=0.8pt}
  \draw[-latex, thick, red, rounded corners=2pt] (midBB) -- (pend) -- ($(arbr.north) + {0.5}*(0,\wspc)$);

  %

  %


  \node[draw=none, inner sep=1pt, above right=0pt of r.north east, fill=white] {\texttt{r}$_{\textcolor{blue}{n};\textcolor{red}{m}}$};

\end{tikzpicture}
    \caption{routing into the next dimension}
    \label{fig:routing_2}
  \end{subfigure}
  \caption{Multi-dimensional routing.}
  \label{fig:routing}
\end{figure}

The \texttt{routing} strategy of our design is a generalization of
XY-routing to an arbitrary number of dimensions, called
dimension-order routing~\cite{dally2004principles}.
At first, we route a message to its destination address within a
single dimension (see \cref{fig:routing_1}).
\begin{lstlisting}[style=frame,mathescape]
Definition routing_spec$_1$ lcl dst : §3:RoutingDecision§ :=
  if lcl <|=?|> dst then §4:Arrived§
  else if dist lcl dst <? dist dst lcl
    then if dst <|<?|> lcl then §4:UpHigh§   else §4:UpLow§
    else if lcl <|<?|> dst then §4:DownHigh§ else §4:DownLow§.
\end{lstlisting}

A message only has two choices, either it travels \ilstrocq{§4:Up§}- or
the \ilstrocq{§4:Down§}-wards.
Since the last router of a dimension wraps around, i.e., connects to
the first, both directions would ultimately lead to the message's destination.
However, to minimize the number of steps, we select the shorter path.
Additionally, to prevent the deadlock discussed in the previous section (\ref{sec:virt_chan}),
the routing either selects the \ilstrocq{§4:High§} or the \ilstrocq{§4:Low} virtual channel based on the index.

Then for multidimensional routing, we route the dimensions one after another.
\begin{lstlisting}[style=frame]
Fixpoint routing_spec {dims} lcl dst : §3:vect§ §3:RoutingDecision§ (|dims| + 1) :=
  match dims with
  | []           => §1:λ§  _          _        , [§4:Core§]
  | dim :: dims' => §1:λ§ '(l, lcl') '(d, dst'),
    let r1 := §3:routing_spec$_1$§ l d in
    if r1 =?b §4:Arrived§
      then r1 :: (§3:routing_spec§ lcl' dst')
      else r1 :: (§3:vect_const§ §4:Irrelevant§)
  end lcl dst.
\end{lstlisting}

We start routing at dimension 0.
Once \ilstrocq{§4:Arrived§} at the dimension's destination, we advance to the
next dimension until we arrive at the final destination in dimension $k$,
where the packet is directly forwarded to the connected \ilstrocq{§4:Core§}.
Since only a single dimension can be routed at a time, all lower-priority
dimensions are considered \ilstrocq{§4:Irrelevant§}.
In \cref{fig:routing_2}, router
\texttt{r}$_{\textcolor{blue}{n};\textcolor{red}{m};...;\textcolor{green}{z}}$
forwards the message from the \textcolor{blue}{current} dimension
into the \textcolor{red}{next}.
To make sure that this \ilstrocq{§3:routing_spec§} is sensible, we prove the following
progress lemma.
\begin{lemma}[Routing progress]\label{lem:route:progress}
  Given any local index $lcl$ and destination index $dst$ in the dimensional space $c$ (of the \noc)
  the \ilstrocq{§3:routing_spec§} either selects the local output (in case $lcl = dst$) or it selects
  a next index $nxt$ to route a message to,
  such that:
  $$(\texttt{dist}_{man} ~ nxt ~ dst) + 1 = \texttt{dist}_{man} ~ lcl ~  dst$$
  Here \texttt{dist}$_{man}$ denotes the Manhattan Distance.
\end{lemma}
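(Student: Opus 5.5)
We prove the statement by induction on the list \texttt{dims}, mirroring the recursive structure of \ilstrocq{routing_spec}. Before the induction we fix two auxiliary notions. The first is the neighbour function $\texttt{next}$: given the vector of routing decisions, it finds the unique dimension carrying a non-\ilstrocq{§4:Arrived§} decision and moves the local index by $\pm 1 \bmod n$ in that dimension. The second is the torus Manhattan distance, defined recursively as
\[
\texttt{dist}_{man}\,(l,lcl')\,(d,dst') = \min(\texttt{dist}\;l\;d,\;\texttt{dist}\;d\;l) + \texttt{dist}_{man}\;lcl'\;dst'.
\]
Here $\texttt{dist}\;a\;b = (b - a) \bmod n$ is the directed ring distance used by \ilstrocq{routing_spec}$_1$.

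The core of the argument is a one-dimensional lemma about \ilstrocq{routing_spec}$_1$. Let $lcl \neq dst$ be indices in a ring of size $n$.
\begin{itemize}
\item If the decision is \ilstrocq{§4:Up§} (either virtual channel), then $nxt = (lcl+1) \bmod n$ satisfies $\min(\ldots nxt\,dst) + 1 = \min(\ldots lcl\,dst)$.
\item If the decision is \ilstrocq{§4:Down§}, the same holds for $nxt = (lcl + n - 1) \bmod n$.
\end{itemize}
The proof unfolds $\texttt{dist}$. In the \ilstrocq{§4:Up§} case the guard gives $\texttt{dist}\;lcl\;dst < \texttt{dist}\;dst\;lcl$, so the minimum is the forward distance $k \ge 1$. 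Stepping forward gives forward distance $k-1$ and backward distance $n-k+1$. Because $k < n-k$, we have $k-1 < n-k+1$, so the minimum drops by exactly one. The \ilstrocq{§4:Down§} case is symmetric, except that the guard is $\ge$, so ties must be handled. If $k' \le n-k'$ for the backward distance $k'$, then after stepping backward the new backward distance $k'-1$ is still the minimum. The \ilstrocq{§4:High§}/\ilstrocq{§4:Low§} split does not affect the index, so it only causes case duplication. These are modular arithmetic facts over \texttt{nat}, which I would discharge with \texttt{lia} after rewriting \texttt{mod} via $a < n \Rightarrow a \bmod n = a$ and a case split on $a < b$.

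The multi-dimensional induction then proceeds as follows.
\begin{itemize}
\item \textbf{Base case} ($\texttt{dims} = []$): the decision is \ilstrocq{§4:Core§} and $lcl = dst$ trivially, so the local output is selected.
\item \textbf{Step case} ($\texttt{dim} :: \texttt{dims}'$): case on $l = d$.
  \begin{itemize}
  \item If $l \neq d$, the head decision is a move in dimension $0$ and the tail is \ilstrocq{§4:Irrelevant§}. Hence $nxt = (l', lcl')$, and the distance identity follows from the one-dimensional lemma plus the unchanged tail summand.
  \item If $l = d$, the head is \ilstrocq{§4:Arrived§} and the head summand is $0$ both before and after. Applying the induction hypothesis to $lcl', dst'$ either yields $lcl' = dst'$, so $lcl = dst$ and \ilstrocq{§4:Core§} is selected, or yields some $nxt'$, in which case we take $nxt = (l, nxt')$.
  \end{itemize}
\end{itemize}

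The main obstacle is the one-dimensional arithmetic on the ring. There are off-by-one risks in the wrap-around cases ($lcl = n-1$ moving up, $lcl = 0$ moving down) and in the tie case $k = n-k$. The tie case matters because the reverse direction is then taken, and we must check that its distance is also minimal. I would first prove a normalised characterisation $\texttt{dist}\;a\;b + \texttt{dist}\;b\;a = n$ for $a \neq b$, then reduce every goal to linear arithmetic over this identity. A secondary nuisance is the dependent typing of \texttt{vect} and the nested index tuples. I would hide this behind a lemma stating that the decision vector has exactly one non-\ilstrocq{§4:Arrived§}/\ilstrocq{§4:Irrelevant§} entry, so the induction only manipulates plain tuples.
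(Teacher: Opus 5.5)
Your proposal is correct and takes the paper's approach: induction on \texttt{dims} with a trivial base case, the \texttt{Arrived} case closed by the induction hypothesis, and the other cases by the fact that one Up/Down step in a single ring lowers the wrap-around distance by exactly one; you spell this last step out (including the tie case where Down is taken), whereas the paper only asserts it. One small slip: after an Up step with forward distance $k=1$, the new backward distance is $(n-k+1) \bmod n = 0$, not $n-k+1$, but the minimum is $0$ either way, so the argument is unaffected.
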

\begin{proof}
  The proof is by induction on $c$, i.e., the configuration (\texttt{dims}) of the \noc.

  \textbf{Base case $c = \texttt{[]}$}. The conclusion holds trivially. Since the type of both $lcl$ and $dst$ reduces to \ilstrocq{§3:unit§}, they must be equal.
  Similarly, by definition \ilstrocq{§3:routing_spec§} selects the local output (\ilstrocq{§4:Core§}).

  \textbf{Induction step $c = dim \texttt{::} c' $}. Since there is at least one dimension we know that we can split the indices $lcl$ and $dst$ into
  a sub-index of the first dimension ($l$/$d$) and their remainder ($lcl'$/$dst'$).
  We then continue by a case distinction on the routing of the first dimension $\ilstrocq{§3:routing_spec§}_1~l~d$.

  \textbf{Case Arrived}. Routing continues in the remaining dimensions, thus the conclusion follows from the induction hypothesis.

  \textbf{Other cases}. Routing chooses the current dimension (all others are ignored). Since only a single dimension is routed
  either \ilstrocq{§4:Up§} or \ilstrocq{§4:Down§}, distance must be exactly 1 less.
\end{proof}

\subsubsection{Arbiter}

\begin{wrapfigure}{r}{0.25\textwidth}
  \centering
  \vspace{-2.2em}
  \begin{tikzpicture}[every node/.style={draw,rectangle},on grid,
    disabled/.style={text=black},
    enabled/.style={text=blue},
    enabled_arr/.style={blue, thick}]
  \begin{scope}[node distance=0.707106cm]
    \node[circle, fill=green, inner sep=0.5pt, draw=none] (a) at (0,0) {};
    \node[circle, inner sep=0.68cm] at (0,0) {};
    \node[circle, inner sep=0.32cm] at (0,0) {};

    \node[circle, draw=none, minimum height=0.5cm, minimum width=0.5cm, inner sep=0pt, disabled, left=of a       ] (n1) {0}; 
    \node[circle, draw=none, minimum height=0.5cm, minimum width=0.5cm, inner sep=0pt,  enabled, above left=of a ] (n5) {1}; 
    \node[circle, draw=none, minimum height=0.5cm, minimum width=0.5cm, inner sep=0pt, disabled, above=of a      ] (n3) {0}; 
    \node[circle, draw=none, minimum height=0.5cm, minimum width=0.5cm, inner sep=0pt, disabled, above right=of a] (n6) {0}; 
    \node[circle, draw=none, minimum height=0.5cm, minimum width=0.5cm, inner sep=0pt,  enabled, right=of a      ] (n2) {1}; 
    \node[circle, draw=none, minimum height=0.5cm, minimum width=0.5cm, inner sep=0pt, disabled, below right=of a] (n8) {0}; 
    \node[circle, draw=none, minimum height=0.5cm, minimum width=0.5cm, inner sep=0pt, disabled, below=of a      ] (n4) {...}; 
    \node[circle, draw=none, minimum height=0.5cm, minimum width=0.5cm, inner sep=0pt, disabled, below left=of a ] (n7) {0}; 

    \node[draw=none, text=gray, minimum height=0.5cm, minimum width=0.5cm, inner sep=0pt] at ($(a)!0.85cm!-16:(n1)$) {\tiny 0};
    \node[draw=none, text=gray, minimum height=0.5cm, minimum width=0.5cm, inner sep=0pt] at ($(a)!0.85cm!-16:(n5)$) {\tiny 1};
    \node[draw=none, text=gray, minimum height=0.5cm, minimum width=0.5cm, inner sep=0pt] at ($(a)!0.85cm!-16:(n3)$) {\tiny 2};
    \node[draw=none, text=gray, minimum height=0.5cm, minimum width=0.5cm, inner sep=0pt] at ($(a)!0.85cm!-16:(n6)$) {\tiny 3};
    \node[draw=none, text=gray, minimum height=0.5cm, minimum width=0.5cm, inner sep=0pt] at ($(a)!0.85cm!-16:(n2)$) {\tiny 4};
    \node[draw=none, text=gray, minimum height=0.5cm, minimum width=0.5cm, inner sep=0pt] at ($(a)!0.85cm!-16:(n8)$) {\tiny 5};
    \node[draw=none, text=gray, minimum height=0.5cm, minimum width=0.5cm, inner sep=0pt] at ($(a)!0.85cm!-16:(n7)$) {\tiny n};

    \draw[latex-, gray             ] (n1) -- ($(a)!1.4cm!(n1)$);
    \draw[latex-, gray, enabled_arr] (n2) -- ($(a)!1.4cm!(n2)$);
    \draw[latex-, gray             ] (n3) -- ($(a)!1.4cm!(n3)$);
    \draw[latex-, gray, enabled_arr] (n5) -- ($(a)!1.4cm!(n5)$);
    \draw[latex-, gray             ] (n6) -- ($(a)!1.4cm!(n6)$);
    \draw[latex-, gray             ] (n7) -- ($(a)!1.4cm!(n7)$);
    \draw[latex-, gray             ] (n8) -- ($(a)!1.4cm!(n8)$);

    \draw ($(a)!0.4571cm!22.5:(n1)$) -- ($(a)!0.9571cm!22.5:(n1)$);
    \draw ($(a)!0.4571cm!22.5:(n2)$) -- ($(a)!0.9571cm!22.5:(n2)$);
    \draw ($(a)!0.4571cm!22.5:(n3)$) -- ($(a)!0.9571cm!22.5:(n3)$);
    \draw ($(a)!0.4571cm!22.5:(n4)$) -- ($(a)!0.9571cm!22.5:(n4)$);
    \draw ($(a)!0.4571cm!22.5:(n5)$) -- ($(a)!0.9571cm!22.5:(n5)$);
    \draw ($(a)!0.4571cm!22.5:(n6)$) -- ($(a)!0.9571cm!22.5:(n6)$);
    \draw ($(a)!0.4571cm!22.5:(n7)$) -- ($(a)!0.9571cm!22.5:(n7)$);
    \draw ($(a)!0.4571cm!22.5:(n8)$) -- ($(a)!0.9571cm!22.5:(n8)$);

    \draw[-latex, green] (a) -- (n3);
    \node[draw=none, text=green, above right=2pt and 4pt of a] {$p$};
  \end{scope}
\end{tikzpicture}
  \caption{Arbiter ring}
  \vspace{-1em}
  \label{fig:arbiter}
\end{wrapfigure}
Due to the static nature of hardware, \texttt{arbitration}
gets the same number of requests from the \texttt{routing} units each
cycle.
We implement round-robin arbitration~\cite{dally2004principles}
which, conceptually, orders all requests into a ring
(see \cref{fig:arbiter}).
The \textcolor{green}{priority pointer $p$} determines the order in which
requests are granted and thereby forwarded along the outgoing channel.
For that, \texttt{arbitration} always selects the request that is clockwise-closest
to $p$ and has a message to forward (\textcolor{blue}{1}).
In the case of \cref{fig:arbiter}, the next request to be forwarded is
retrieved from incoming channel at arbiter ring index $4$.
Then, to guarantee progress for all requests, $p$ advances to the slot
after the selected one, here 5.
\begin{lstlisting}[label=fig:arb_spec, style=frame, style=coq]
Definition arbiter_spec n p reqs : §2:option§ (§2:Fin.t§ n) :=
  let selected := §3:first_idx§ (§3:rotate§ p reqs)
  in §3:option_map§ (§1:λ§ idx, p + idx) selected.
\end{lstlisting}

We use \ilstrocq{§3:rotate§} to order the list of requests by their priority
and \ilstrocq{§3:first_idx} to select the highest-priority one.
In terms of \cref{fig:arbiter}, this rotation essentially relabels all requests
s.t. $p$ points to index 0.
Then \ilstrocq{§3:first_idx} performs the actual selection.
Finally, to reestablish the correct label, we add the value of $p$ back to the
calculated index.
Just as with our routing specification, we also prove that
our arbitration algorithm makes progress.
\begin{lemma}[Arbiter progress]\label{lem:arb:progress}
  For any arbiter ring $a$ of size $n$ with priority at $p$,
  given there is a request at $i$
  then the \ilstrocq{§3:arbiter_spec§} selects the message at $i$ for transmission
  or $p$ makes progress towards $i$.
\end{lemma}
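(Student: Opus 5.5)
We make the statement precise by measuring progress as the clockwise distance from the priority pointer to the request, $\delta(p,i) := (i - p) \bmod n$. The claim then becomes: if $\texttt{reqs}[i]$ holds, then either \ilstrocq{§3:arbiter_spec§}~$n$~$p$~\texttt{reqs} returns \texttt{Some}~$i$, or it returns \texttt{Some}~$j$ with $j \neq i$ and the updated pointer $p' := j+1$ satisfies $\delta(p',i) < \delta(p,i)$.

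The proof unfolds \ilstrocq{§3:arbiter_spec§} and relies on two auxiliary facts about the list operations. The first is a rotation lemma: $(\texttt{rotate}~p~\texttt{reqs})[k] = \texttt{reqs}[p+k]$, with addition modulo $n$ in \texttt{Fin.t}. Consequently the request at $i$ appears in the rotated list at position $\delta(p,i)$. The second is a specification of \ilstrocq{§3:first_idx§}: if some position $k$ is set, it returns \texttt{Some}~$k'$ with $k' \le k$, position $k'$ set, and every position below $k'$ unset.

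Combining these, the selected offset is some $k' \le \delta(p,i)$, and the arbiter returns $j = p + k'$. We then split into two cases.
\begin{itemize}
\item If $k' = \delta(p,i)$, then $j = p + \delta(p,i) = i$ modulo $n$. The message at $i$ is selected, which is the first disjunct.
\item If $k' < \delta(p,i)$, then the new pointer is $p' = p + k' + 1$. A modular computation gives $\delta(p',i) = \delta(p,i) - k' - 1 < \delta(p,i)$. This uses $k'+1 \le \delta(p,i) < n$, so no wrap-around occurs. This is the second disjunct.
\end{itemize}
A single induction on $\delta(p,i)$ over this lemma would then yield the eventual-grant property needed later for liveness.

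The main obstacle will be the modular arithmetic on \texttt{Fin.t}. Specifically, we need $p + \delta(p,i) \equiv i$ and the subtraction identity for $\delta(p',i)$, and \texttt{Fin.t} addition goes through conversions to \texttt{nat} with \texttt{mod}. I would first lift everything to \texttt{nat} via \texttt{proj1\_sig}/\texttt{Fin.to\_nat}, state $\delta$ as $(i + n - p) \bmod n$, and discharge the goals with \texttt{Nat.add\_mod}/\texttt{Nat.mod\_small} plus \texttt{lia}. The rotation lemma itself should follow by induction on $p$, rotating one step at a time.
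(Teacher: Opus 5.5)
Your proof is correct and is essentially the paper's argument. It uses the same case split on whether the selected index equals $i$. It relies on the same key fact that \texttt{first\_idx} picks the smallest set offset, so a selected offset other than $i$'s lies strictly below $(i-p) \bmod n$. It then does the same computation showing $i - (i_s+1) < i - p$ modulo $n$. The difference is only in detail: you state the rotation and \texttt{first\_idx} specifications explicitly, and you justify the absence of wrap-around via $k'+1 \le \delta(p,i) < n$, which the paper's modular calculation leaves implicit.
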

\begin{proof}
  Let $i_s$ be the selected index, then by case distinction:

\textbf{Case $i_s = i$}: Conclusion follows immediately.

\textbf{Case $i_s \neq i$}:
We show that $p_{new} := i_s + 1$ is \emph{closer} to $i$ than $p$ (i.e. it has a smaller difference).
As \ilstrocq{§3:first_idx§} always picks the smallest set index
we know that $i_s - p < i - p$. Then
\begin{equation*}
\begin{aligned}
  i_s - p                 & < i - p  & \pmod{n} & \implies \\
  (i - p) - (i_s - p + 1) & < i - p  & \pmod{n} & \iff     \\
  i - (i_s + 1)           & < i - p  & \pmod{n} & \iff     \\
  i - p_{new}             & < i - p
\end{aligned}
\end{equation*}
(All comparisons modulo \(n\) are performed on the canonical residue representatives in \(\{0,\dots,n-1\}\))
\end{proof}

\subsubsection{\noc}

The overall state of the \noc is then a record that stores
the contents of all channels and
the priorities of all arbiters.
For stating the liveness property, we define
\begin{equation*}
  \texttt{eventually} ~ P := \lambda ~ s, ~ \exists ~ n, P ~ (\texttt{ns\_step} ~ n ~ s)
\end{equation*}
where \texttt{ns\_step} $n$ $s$ executes the formal model of the \noc for $n$ cycles starting from state $s$.
\begin{axiom}[Output liveness]\label{ax:noc:output}
  Every core connected to the \noc must handle arriving packets immediately.
\end{axiom}
This axiom makes sure that the network cannot be blocked by a core.
\begin{theorem}[Deadlock freedom]\label{theo:noc:deadlockfree}
  Any port $p_i$ of any router $r_{lcl}$ will \texttt{eventually} become available.
\end{theorem}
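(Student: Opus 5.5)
We prove \cref{theo:noc:deadlockfree} by well-founded induction on a \emph{blocking rank} attached to each channel (port) of the formal \noc model. We order the ports so that a packet in port $p_i$ of $r_{lcl}$ can only be blocked by packets in ports of strictly smaller rank. The rank is a triple, compared lexicographically:
\begin{itemize}
\item the number of dimensions still to be traversed after the current one;
\item whether the packet sits on a \emph{high} (rank larger) or \emph{low} virtual channel;
\item the remaining distance within the current dimension toward the packet's destination.
\end{itemize}
The local \ilstrocq{§4:Core§} output has the minimal rank. By \cref{ax:noc:output}, a packet routed to the core is consumed immediately, so that output is always available. This is the base case.

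For the inductive step, take a port $p_i$ holding a packet $m$ and let $o$ be the output selected for $m$ by \ilstrocq{§3:routing_spec§}. By \cref{lem:route:progress}, $o$ is either the core or leads to a router $nxt$ whose Manhattan distance to $dst$ is one less. I will check that the channel of $o$ has strictly smaller rank in every case:
\begin{itemize}
\item if routing moves into a later dimension, the first component decreases;
\item if it wraps from high to low, the second component decreases;
\item otherwise it stays on the same virtual class and only the in-dimension distance decreases.
\end{itemize}
The key fact is that low channels never route into high channels. This needs a small invariant: a packet on a low channel satisfies $dst \le lcl$ in the current dimension, so \ilstrocq{§3:routing_spec$_1$} never returns a \ilstrocq{§4:High§} decision for it. Because the torus wrap-around is taken only once and the shorter direction is fixed, this invariant is preserved along the path. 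I would prove it as a separate lemma about reachable states by induction on \texttt{ns\_step}.

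From the induction hypothesis, $o$ eventually becomes empty at some cycle $t_0$. Availability alone is not enough, since $m$ must also win arbitration at $o$. Here we use \cref{lem:arb:progress}. As long as $m$ is waiting, its request is asserted at some fixed ring index $i$. Each time $o$ is free and another request is granted, the priority pointer $p$ strictly approaches $i$ in the modular distance $i-p \pmod n$. Hence after at most $n$ grants in which $o$ is free, $m$ itself is granted and $p_i$ is vacated. The obstacle is that $o$ must become free \emph{repeatedly}, not just once. I would therefore strengthen the induction statement to ``$p_i$ becomes available infinitely often'', i.e.\ from every reachable state it is \texttt{eventually} available. Applying this $n$ times, together with the arbiter measure, gives the result. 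The combined measure for the inner argument is the pair (arbiter distance, time until $o$ is free).

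I expect the main obstacle to be making the rank well-defined and provably decreasing uniformly over the parametric \texttt{dims}. This requires the nested recursion over dimensions and routers, plus the reachable-state invariant for the virtual channels. The step from single availability to recurrent availability is also delicate. A secondary difficulty is that the rank of $p_i$ depends on the packet it holds, not on the port alone. So the induction must be on (port, packet) configurations, and the statement ``port eventually available'' is discharged by case analysis on whether the port is currently empty. If it is empty the claim is trivial; if not, we apply the argument above to its occupant.
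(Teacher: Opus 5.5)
Your skeleton is the paper's: the Output-liveness axiom as base case, a well-founded ``can be blocked by'' order, Arbiter progress combined with a strengthening to recurrent availability, and a descent through the dimensions. The paper nests a strong induction on dimensions around an induction along its \emph{channel dependency order}; your lexicographic triple would flatten this.

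The gap is in the measure. Your third component (remaining in-dimension distance) is a property of the \emph{packet}, and so is your first unless you read it as the number of dimensions after the channel's own. You therefore induct on (port, occupant) pairs. But your case check only computes the rank $m$ \emph{itself} would have at $o$. To apply the induction hypothesis at $o$ you need it for the packets that actually occupy $o$, now and each time $o$ is refilled, and their rank is not bounded by that of $m$. Concretely, take a ring of size $8$:
\begin{itemize}
\item $m$ sits at router $3$ with destination $4$: remaining distance $1$, decision \texttt{UpLow};
\item $o$, the low up-channel $3\to 4$, holds $m'$ with destination $6$, which router $3$ legally placed there (also \texttt{UpLow}), and which has remaining distance $2$ from router $4$.
\end{itemize}
Then $m$ is blocked by $m'$, yet $(o,m')$ has rank $(\cdot,\text{low},2)$, larger than the rank $(\cdot,\text{low},1)$ of $(p_i,m)$. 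Likewise $m'$ may have more dimensions left than $m$. So the induction hypothesis is unavailable at $o$, and the recurrence step, which needs it for every future occupant of $o$, fails with it.

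The repair is to make the rank depend on the channel alone. Its components are the number of dimensions after the channel's own, its virtual class, and its position relative to the wrap link in its direction of travel: $n-i$ for the up-channel leaving router $i$, and $i$ for the down-channel leaving router $i$. This is the paper's channel dependency order, in the style of Dally and Seitz. With it, ``$p_i$ eventually becomes available'' quantifies over all occupants automatically, and your case split on the current occupant disappears. Your reachable-state invariant then becomes exactly the lemma that every routing step strictly lowers this structural rank. In particular it rules out use of the low wrap links, the only edges that would close a cycle. State the invariant per direction: by \texttt{routing\_spec}$_1$, a packet on a low up-channel satisfies $dst \ge lcl$ and one on a low down-channel satisfies $dst \le lcl$. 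Your uniform ``$dst \le lcl$'' holds only for the down direction.
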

\begin{proof}
  This proof is intricate as it consists of multiple nested inductions.
  To ease its comprehension, we only present the underlying intuition
  here and refer to our artifact for more details.
  This intuition proceeds in reverse order of the actual proof. It starts at
  \cref{ax:noc:output} and describes how this guarantee propagates
  through the whole network.

  From the axiom we already know that all output ports are always available.
  Likewise, by \hyperref[lem:arb:progress]{\textsc{Arbiter progress}}~(\ref{lem:arb:progress}) a packet
  whose target port is infinitely often available is guaranteed to be forwarded eventually.
  Thus, any packet whose next hop is an output port will
  always eventually take a step, thereby freeing its own port.

  From that we derive that all packets of the lowest dimension also release their port eventually.
  This follows from an induction on the position of a packet in the \emph{channel dependency order}. This order
  describes if a channel could be blocked by another. The base case consists of
  the last low virtual channel, which only ever routes to its direct output port, thus following
  from the argument above.
  The induction step instead utilizes the induction hypothesis together with \hyperref[lem:arb:progress]{\textsc{Arbiter progress}}.
  Any packet could only ever be blocked by a packet within a lower-rank channel, thus its next hop must
  become available, and it must eventually be selected by the arbiter.

  Lastly, we propagate this guarantee to higher dimensions by strong induction on the number of dimensions. The base
  case of the lowest dimension was already described above. The induction step follows the same structure for
  reasoning about packets within a single dimension, except that the last virtual channel may forward either
  to an output or just a lower dimension, which then requires the induction hypothesis.

\end{proof}

\begin{theorem}[\noc liveness]\label{theo:noc:liveness}
  Any message $m$ in the \noc at any input port $i$ of any router $r_{lcl}$ will \texttt{eventually} be delivered to its destination $m.dst$.
\end{theorem}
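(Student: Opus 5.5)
The plan is a well-founded induction on the Manhattan distance $\texttt{dist}_{man}~lcl~m.dst$ between the router currently holding $m$ and its destination. We combine \cref{theo:noc:deadlockfree}, \cref{lem:arb:progress} and \cref{lem:route:progress}. The claim to prove is: for every state $s$ in which $m$ sits at input port $i$ of router $r_{lcl}$, \texttt{eventually} $m$ is delivered. Since \texttt{eventually} unfolds to an existential over the number of cycles, each inductive step will produce a concrete bound $n_1$. We then compose it with the bound obtained from the induction hypothesis, using $\texttt{ns\_step}~(n_1+n_2)~s = \texttt{ns\_step}~n_2~(\texttt{ns\_step}~n_1~s)$.

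\textbf{Base case.} If the distance is $0$, then $lcl = m.dst$. By \cref{lem:route:progress}, \texttt{routing\_spec} selects the local \texttt{Core} output. \Cref{ax:noc:output} says this port is always available, so by \cref{lem:arb:progress} the arbiter either grants $m$ or moves its priority pointer strictly closer to $m$'s ring slot. Inducting on this ring distance, which is bounded by the arbiter size, $m$ is delivered after finitely many cycles.

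\textbf{Inductive step.} If $lcl \neq m.dst$, \cref{lem:route:progress} gives a next index $nxt$ with $\texttt{dist}_{man}~nxt~m.dst + 1 = \texttt{dist}_{man}~lcl~m.dst$. It also gives a fixed output channel (a direction plus a high/low virtual channel) leading to an input port of $r_{nxt}$. Two facts are needed.
\begin{enumerate}
  \item $m$ does not move and its routing decision does not change while it waits. This follows because the decision depends only on $lcl$ and $m.dst$.
  \item $m$ is eventually forwarded. By \cref{theo:noc:deadlockfree}, the target port becomes available eventually. Applying the theorem repeatedly, it is available infinitely often. On each cycle in which it is available, \cref{lem:arb:progress} yields either the grant or a strict decrease of the pointer distance. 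Hence there is an inner induction on that distance, which is at most the arbiter ring size. In each round of it we wait for availability via \cref{theo:noc:deadlockfree}.
\end{enumerate}
Once granted, $m$ sits at an input port of $r_{nxt}$, which is strictly closer to the destination, and the outer induction hypothesis finishes the step. We also need to check that forwarding preserves $m$ unchanged, in particular its $dst$ field; this should follow from the channel semantics of the formal model.

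\textbf{Main obstacle.} The delicate part is the interaction between ``available at some future cycle'' and arbitration. \Cref{lem:arb:progress} only speaks about a single cycle's decision. So we must argue that on cycles where the port is unavailable, the pointer does not regress and $m$ stays put. Then we must chain together the successive availability events supplied by \cref{theo:noc:deadlockfree}. I would first try to prove an auxiliary lemma by induction on the pointer distance: ``if $m$ requests port $o$ and $o$ is eventually available from every state, then eventually $m$ is granted $o$.'' It would use a monotonicity invariant stating that the arbiter pointer only changes when a grant happens. The remainder is then bookkeeping over the distance measure.
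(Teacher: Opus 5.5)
Your proposal is correct and follows the paper's proof. The paper also uses an outer induction on the hop distance, with the next hop supplied by routing progress, and an inner strong induction on the arbiter's priority distance, in which deadlock freedom makes the output port available and arbiter progress shrinks that distance. The only difference is the base case: the paper treats $d=0$ as trivial because $m$ already sits at $r_{m.dst}$, whereas you additionally push $m$ through the \texttt{Core} port using the output-liveness axiom and arbiter progress, which is harmless extra work under a stricter reading of ``delivered''.
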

\begin{proof}
  Let $d := \texttt{dist}~lcl~m.dst$ denote the number of hops between $lcl$ and $m.dst$.
  Then by induction over $d$:

  \textbf{Base case: $d = 0$}. It follows that $lcl = m.dst$ satisfying the conclusion trivially.

  \textbf{Induction step: $d = 1 + d'$}. We know that there exists a next hop $nxt$.
  Thus, we show that eventually $m$ reaches router $r_{nxt}$ and conclude by the induction hypothesis.
  To establish the single step we do a strong induction over the priority of input $i$ on the port that leads from $src$ to $nxt$.
  Let $p$ be the priority pointer at that port.

  \textbf{Base case: $i - p = 0$}. As $i$ has the highest priority, it will be forwarded immediately once the output port is available,
  which must happen eventually by \hyperref[theo:noc:deadlockfree]{\textsc{Deadlock freedom}}~(\ref{theo:noc:deadlockfree}).

  \textbf{Induction step: $i - p = 1 + prio'$}. Consider the arbiter's decision. Either the arbiter
  selects input $i$, then our goal follows just as in the base case. Or the arbiter selects a different input $j$.
  Then the packet at $j$ will eventually be forwarded by \hyperref[theo:noc:deadlockfree]{\textsc{Deadlock freedom}} causing the priority to get closer to $i$
  by \hyperref[lem:arb:progress]{\textsc{Arbiter progress}}~(\ref{lem:arb:progress}).
  With the smaller priority the conclusion then follows from the inner induction hypothesis.
\end{proof}

\subsection{Implementation}

For our \koika implementation, we only show the key parts in our
dependent hardware design.
For the full details, we refer to our \coq development.
In order to implement a $k$-dimensional \noc, we needed to create
the corresponding following depending on $k$ and the length of the
individual dimensions.
\koika supports such a dependent design using parameters to the
registers.
\begin{lstlisting}[style=frame,style=coq]
Inductive reg_t : Type :=
  | §4:routing_up_r§   : §3:idx_t§ -> §3:dim_t§ -> reg_t
  | §4:routing_down_r§ : §3:idx_t§ -> §3:dim_t§ -> reg_t
  | §4:arbiter_up_r§   : §3:idx_t§ -> §3:dim_t§ -> reg_t
  | §4:arbiter_down_r§ : §3:idx_t§ -> §3:dim_t§ -> reg_t
  | §4:arbiter_cpu_r§  : §3:idx_t§          -> reg_t.
\end{lstlisting}

Every channel has an index \texttt{idx\_t} in its dimension
\texttt{dim\_t}.
That is, every channel exists in exactly one dimension.
To make sure the set of registers for a design is finite and thus
can be synthesized to valid hardware, \koika requires parameters to be
finite too.
Hence, we worked with \texttt{Fin.t}, i.e., the natural numbers modulo
$k$, for \texttt{dim\_t} and modulo $m$ for \texttt{idx\_t} where
$m$ is the length of a single dimension.
Respectively, we used \coq as the meta programming language to generate \koika
functions and actions.
\definecolor{hlYellow}{rgb}{0.99,0.78,0.07}
\sethlcolor{hlYellow}
\begin{lstlisting}[style=frame,style=coq]
Fixpoint routing {dims} (lcl : §3:idx_t§ dims) : §3:function§ R Sigma := <|\hl{\texttt{<\{}}|>
fun routing (dst_a : addr_t dims) : §3:array_t§ (§3:bits_t§ §2§) (§3:length§ dims) =>
  <|\hl{`}|>match dims with
  | []           => fun _   => <|\hl{\texttt{<\{}}|> [ ] <|\hl{\texttt{\}>}}|>
  | dim :: dims' => fun lcl => <|\hl{\texttt{<\{}}|>
    let r1 := {§3:routing$_1$§ dim (fst lcl)}(§3:addr_hd§(dst_a)) in
    if r1 == §'b00§
      then §3:array_cons§(r1, {<|@|>§3:routing§ dims' (snd lcl)}(§3:addr_tail§(dst_a)) )
      else §3:array_cons§(r1, <|\$|>(§3:vect_const§ §'b00§) )
    <|\hl{\texttt{\}>}}|>
  end lcl<|\hl{`}|>
<|\hl{\texttt{\}>}}|>.
\end{lstlisting}

Note how the \texttt{routing} implementation enters into \koika via
\hl{\texttt{<\{}} (Line~1), then escapes into \coq via \hl{\texttt{`}} at Line~3 just
  to generate \koika code again at Line~4.
During our development, we found that \koika has powerful support for
meta programming that allows for very flexible hardware designs.

\subsection{Refinement}
The following refinements for the \texttt{routing},
\texttt{arbitration} and the overall \noc connect the execution of a single hardware cycle
to the formal specification.
Functions \texttt{repr} and \texttt{is\_repr} lower indexes and states from the
specification level into the implementation level.
\begin{lemma}[Routing refinement]\label{lem:route:refine}
  For any local index $l$ and destination index $d$ in the dimensional space $c$ (of the \noc),
  the following total Hoare triple holds:
\end{lemma}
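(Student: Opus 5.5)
The triple itself is cut off in the statement, so I assume it has the natural form
\[
\thoare{\top}{\texttt{routing}~l~(\texttt{repr}~d)}{v.~ v = \texttt{repr}~(\texttt{routing\_spec}~l~d)}.
\]
Here the \koika function \texttt{routing} (generated for \texttt{dims} $= c$) is applied to the lowered destination. The claim is that it never aborts and returns the lowered vector of \texttt{RoutingDecision}s computed by \texttt{routing\_spec}.

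I would prove this by induction on $c$, mirroring both the \rocq fixpoint that generates the \koika code and the proof of \cref{lem:route:progress}. The statement must be generalized over $l$ and $d$ before the induction, since the recursive call uses \texttt{snd} $l$ and the tail of the address.

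\textbf{Base case $c = \texttt{[]}$.} The generated action is the empty array literal \texttt{[ ]}. The specification yields $[\texttt{Core}]$ of length $|c|+1 = 1$, and \texttt{repr} maps it to the corresponding bit-level array. Closing this case is one application of the bitstring-constant rule together with unfolding \texttt{repr}. I may need to check that the array type is indexed by \texttt{length dims} exactly as the spec's $|c|+1$ suggests; if the Core entry is handled separately in \texttt{repr}, that is absorbed there.

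\textbf{Induction step $c = dim :: c'$.} First I symbolically execute the \texttt{let} using the WP-based automation. The call \texttt{routing}$_1$ \texttt{dim (fst} $l$\texttt{)} on \texttt{addr\_hd} is handled by a separate auxiliary refinement lemma for one dimension: its result equals \texttt{repr} (\texttt{routing\_spec}$_1$ (\texttt{fst} $l$) (\texttt{fst} $d$)). That lemma I would prove by unfolding both sides and case-splitting on the comparisons; the key facts are that \texttt{repr} on \texttt{Fin.t} commutes with \texttt{dist}, $<?$ and $=?$, and that \texttt{Arrived} is encoded as \texttt{'b00}. I also need \texttt{addr\_hd}(\texttt{repr} $d$) $=$ \texttt{repr}(\texttt{fst} $d$) and the analogous fact for \texttt{addr\_tail}.

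The conditional is then handled by the \textsc{If} rule. In the then-branch I obtain \texttt{r1} $=$ \texttt{'b00}, hence by injectivity of the encoding the spec decision is \texttt{Arrived}. The nested call then matches the induction hypothesis instantiated at \texttt{snd} $l$ and \texttt{snd} $d$, and \texttt{array\_cons} corresponds to vector cons under \texttt{repr}. In the else-branch the spec produces \texttt{vect\_const Irrelevant}, so I need \texttt{repr Irrelevant} $=$ \texttt{'b00}. Because no register is read or written, the log assertions are trivial and totality follows from the absence of \texttt{abort}.

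\textbf{Main obstacle.} I expect the hardest part to be the dependent types rather than the logic. The \koika function is generated by a \rocq \texttt{match} on \texttt{dims} under \texttt{fun lcl}, so the Hoare goal must be reduced (convoy pattern) before the rules apply. On top of that, the \texttt{repr}/\texttt{is\_repr} lowering lemmas for arrays, cons and constant vectors must line up with casts between $|c|+1$ and \texttt{length}. I would first prove these lowering lemmas as standalone rewrite lemmas and only then run the symbolic execution tactics.
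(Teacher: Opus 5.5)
Your proposal is correct and follows the paper's proof. The paper also does symbolic execution through the program logic in the cons case of \texttt{dims}, with the recursive call replaced by \texttt{routing\_spec lcl' dst'}, and splitting on the conditional gives its two residual goals (the \texttt{Arrived} case and the non-\texttt{Arrived} case with \texttt{vect\_const Irrelevant}), both closed by unfolding \texttt{routing\_spec}. One small fix: do not invoke injectivity of the encoding in the then-branch, since your else-branch itself requires \texttt{repr Irrelevant} $=$ \texttt{'b00} $=$ \texttt{repr Arrived}; you only need that \texttt{'b00} encodes no value other than \texttt{Arrived} among the possible results of \texttt{routing\_spec}$_1$.
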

\begin{center}
\thoare{\top}
{\texttt{<\{ routing s lcl (repr dst) \}>} }
{\texttt{v, v = repr (routing\_spec lcl dst)}}
\end{center}

\begin{proof}
  By applying our program logic for symbolic evaluation,
  we can reduce the goal to the following sub-goals with minimal effort.

\noindent
\hspace{0.5cm}
\begin{minipage}[t]{0.40\linewidth}
\begin{lstlisting}
Hif: §3:routing_spec$_1$§ l d = §4:Arrived§
@(goal 1) <|\color{gray}\rule[0.5ex]{\dimexpr\linewidth-1.4cm\relax}{0.4pt}|>@
§3:routing_spec§ (l,lcl') (d,dst') =
§3:routing_spec$_1$§ l d ::
  §3:routing_spec§ lcl' dst'
\end{lstlisting}
\end{minipage}
\hfill
\begin{minipage}[t]{0.40\linewidth}
\begin{lstlisting}
Hif: §3:routing_spec$_1$§ l d ≠ §4:Arrived§
@(goal 2) <|\color{gray}\rule[0.5ex]{\dimexpr\linewidth-1.4cm\relax}{0.4pt}|>@
§3:routing_spec§ (l,lcl') (d,dst') =
§3:routing_spec$_1$§ l d ::
  §3:vect_const§ §4:Irrelevant§
\end{lstlisting}
\end{minipage}
\hspace{0.5cm}
\\[0.3cm]
These goals are then finished by manual reasoning over the definitions of the specs.
\end{proof}

\begin{lemma}[Arbiter refinement]\label{lem:arb:refine}
  For any arbiter ring $a$ of size $n$ with priority at $p$,
  the following total Hoare triple holds:
\end{lemma}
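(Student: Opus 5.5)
The triple presumably mirrors the one in \cref{lem:route:refine}: precondition $\top$, the \koika arbiter function applied to \texttt{repr p} and \texttt{repr reqs}, and postcondition $\texttt{v, v = repr (arbiter\_spec n p reqs)}$. I would prove it in the same two-phase style. First, I would use the rules of our program logic (\cref{fig:rules:action} together with the structural rules of \cref{sec:appendix:hoare}) and the accompanying symbolic-execution tactics to discharge the \koika-level part of the goal. Since the arbiter is a pure combinational function, it performs no register reads or writes, so the relation $r \Mapsto_{\R,\cL}$ never has to be touched. Because the triple is total, I also have to check that no \texttt{abort} is reachable, which should be immediate from the absence of \texttt{abort} in the function body. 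What remains after symbolic execution is a purely functional equation between the Rocq-level meaning of the generated \koika expression and \texttt{repr} of the specification.

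Second, I would split that equation along the structure of \texttt{arbiter\_spec}. The plan has three auxiliary lemmas, proved independently, whose composition gives the result.
\begin{itemize}
  \item The implementation of the rotation agrees with \ilstrocq{§3:rotate§ p reqs} on represented inputs. I would prove this by pointwise reasoning on vector indices: entry $j$ of the result is entry $(p+j) \bmod n$ of \texttt{reqs}.
  \item The priority encoder that implements \ilstrocq{§3:first_idx§} agrees with the specification. Since the hardware version is generated by a Rocq \texttt{Fixpoint} over $n$ (as \texttt{routing} is over \texttt{dims}), this goes by induction on $n$, with a case split on whether the head request bit is set. It yields either \texttt{Some 0}, or the recursive result shifted by one, in both spec and implementation.
  \item Adding $p$ back commutes with \texttt{repr}, i.e.\ \texttt{repr} of $(p + idx)$ in \texttt{Fin.t n} equals the \koika addition of the two represented indices followed by the reduction modulo $n$. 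The \texttt{option\_map} is handled by a case split on \texttt{None}/\texttt{Some}, assuming \texttt{repr} encodes \texttt{option} as a valid bit plus payload.
\end{itemize}

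I expect the third lemma to be the main obstacle. The specification works in \texttt{Fin.t n} with wrap-around at $n$, while the hardware uses fixed-width bit vectors whose wrap-around is at a power of two. Unless $n$ is a power of two, the implementation must contain an explicit conditional subtraction or comparison, and proving it equal requires bounding $p + idx < 2n$ and showing the bit width is large enough to avoid overflow. I would first try to state a single generic lemma relating \texttt{repr} on \texttt{Fin.t} to bit-vector addition with a conditional subtraction of $n$, proved once via \texttt{N}/\texttt{Z} arithmetic (using \texttt{lia} after unfolding to natural numbers). The rotation lemma can then reuse the same lemma for its index computation.
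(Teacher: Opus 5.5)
Your proposal follows the paper's route. As with \cref{lem:route:refine}, the paper obtains this triple by automated symbolic execution in the program logic (and, as you note, the register relation plays no role for this pure function), which leaves purely functional goals that are closed by manual reasoning over the definition of \texttt{arbiter\_spec}. The paper gives no finer breakdown of that last step, so your split into rotation, priority-encoder and modular-addition lemmas is a sensible elaboration of it rather than a different argument.
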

\begin{center}
\thoare{\top}
{\texttt{<\{ arbiter n (repr p, repr a) \}>}}
{\texttt{v, v = repr (arbiter\_spec p a)}}
\end{center}
\begin{theorem}[\noc refinement]\label{theo:noc:refine}
  For any dimensional space $c$ (\texttt{dims}),
  the generated schedule \texttt{noc\_schedule} takes the \noc
  from state $s$ into state $s'$ and
  satisfies the following total triple:
\end{theorem}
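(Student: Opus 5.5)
The triple itself is cut off, but presumably it has the shape $\wpt{s}{\texttt{noc\_schedule}}{\texttt{is\_repr}~s'}$, or equivalently a total triple with precondition $\texttt{is\_repr}~s$ and postcondition $\texttt{is\_repr}~(\texttt{step}~s)$. Here \texttt{step} is one cycle of the formal \noc model. The plan is to prove it by unfolding the schedule WP (\Cref{def:wps}) into a chain of action WPs, one per rule of \texttt{noc\_schedule}, and to discharge each with the total action rules.

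\textbf{Step 1: a register-level invariant.} The schedule is generated by the same two nested recursions as the specification: an outer one over \texttt{dims} and an inner one over the indices of a dimension. So I would first generalise the goal into an invariant indexed by how far the schedule has progressed. For every channel and arbiter register $r$ I state its log state with the relation of \cref{fig:logstate}. A register not yet touched is in \texttt{[$=$repr $s.r$,/,/]}. A register whose router rule has already run is in \texttt{[$=$repr $s.r$,$=$repr $s'.r$,/]}.

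Thanks to the RAW semantics (\textsc{Read0}$_{\text{raw}}$), a later rule reading a channel with \kread{0}{\cdot} still sees the initial value even after a neighbour wrote it. Hence every router rule can be verified against \texttt{repr}~$s$ only, independently of schedule order. This is exactly the point of \cref{sec:challenge2}, and it makes the argument modular.

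\textbf{Step 2: a per-router lemma.} For a single router $r_{lcl}$ I prove a total triple: assuming its incoming channels and its arbiter registers are in the untouched state, the action writes $\texttt{repr}$ of the spec's next channel contents and priority pointers. Inside this proof, routing and arbitration calls are replaced by their specs using \cref{lem:route:refine} and \cref{lem:arb:refine} as call rules. What remains is symbolic execution with \textsc{HRead0} and \textsc{HWrite}, plus the fact that each register is written by exactly one rule with \kwrite{0}{\cdot}{\cdot}. That uniqueness guarantees totality: no write-after-write abort ever occurs.

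A frame property is also needed. A router's rule only touches registers indexed by $lcl$ or its neighbours, so the invariant for all other registers is preserved.

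\textbf{Step 3: induction and commit.} I then prove the schedule WP by induction over the generated schedule list, that is, over dimensions and then indices, using \cref{theo:adequate:schedule}-style unfolding at each step. At the end, every register is in \texttt{[$=$repr $s.r$,$=$repr $s'.r$,/]}. By the definition of \texttt{update}, the committed environment is therefore $\texttt{repr}~s'$, which gives $\texttt{is\_repr}~s'$.

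\textbf{Main obstacle.} I expect the hardest step to be aligning the dependently indexed registers (\texttt{Fin.t} indices, wrap-around neighbours, and the $+1/-1$ modulo the dimension length) with the spec's channel record. Concretely, I must show that the register written by $r_{lcl}$ on \texttt{up} is exactly the one read by $r_{lcl+1}$, and that distinct routers write disjoint registers. I would first try to isolate this in a lemma characterising the neighbour function of the torus (injectivity and inverse of successor/predecessor modulo $m$), proved once by induction on \texttt{dims}. Steps 1 to 3 can then use this lemma purely by rewriting.
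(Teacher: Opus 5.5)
Your overall route matches the paper's, which says only that the refinement follows by automated symbolic execution in the program logic, made straightforward by the RAW semantics. Your decomposition elaborates that faithfully: \kread{0}{\cdot} always returns $\R[r]$, so each router rule can be checked against \texttt{repr}~$s$ regardless of schedule order, and Lemmas~\ref{lem:route:refine} and~\ref{lem:arb:refine} discharge the calls. However, Step~2 as stated fails when you use it in Step~3. You require the router's \emph{incoming} channels to be untouched, i.e.\ in the \textsc{RegEnv} state with $\kwrite{0}{r}{\cdot}\notin\cL$. Under your own unique-writer reading, those channels are written by the upstream neighbours. In any linear schedule of a torus, some router runs after its neighbours (e.g.\ the last one). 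Its incoming channels may then already carry the neighbour's \kwrite{0}{r}{\cdot}, which is your state \texttt{[$=$repr $s.r$,$=$repr $s'.r$,/]}, so the lemma cannot be applied to it. Your own Step~1 remark explains why this precondition is unnecessary. For registers the rule only reads, require just $\R[r]=\texttt{repr}~s.r$, which is all \textsc{HRead0} needs. Require the untouched state (no \kread{1}{r}, \kwrite{0}{r}{\cdot}, or \kwrite{1}{r}{\cdot} in \cL) exactly for the registers the rule \emph{writes}, i.e.\ its outgoing channels and arbiter pointers. That is what makes its \kwrite{0}{\cdot}{\cdot} succeed, and it is what your Step~3 invariant actually provides.

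Second, totality rests entirely on your claim that each register is written by exactly one rule. The paper never states this, so you should verify it on the generated design rather than assume it. With single-slot channels and arbitration, a message that loses arbitration must stay in its channel, while a forwarded one must vacate it; the deadlock-freedom argument is phrased in terms of ports becoming available. Unless consumption is signalled through a separate register, the channel register therefore has two potential writers: the upstream arbiter fills it and the downstream router drains it. Then no static uniqueness holds. Totality instead needs a state-dependent argument that, given \texttt{repr}~$s$, at most one of the two actually writes in a cycle (for instance, fill only if the slot is empty in $\R$, drain only if it is full). Correspondingly, your Step~1 invariant must track which of a register's potential writers have run, not a single owner.
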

\begin{center}
\thoare
{\texttt{ is\_repr s }}
{\texttt{ noc\_schedule dims }}
{\texttt{ is\_repr (ns\_{step} 1 s) }}
\end{center}

Due to our program logic in combination with the \oursem, the proofs for
these refinements are straight-forward by automated symbolic execution.

\section{Synthesis}
\label{sec:eval}

In this section, we evaluate \sys and show
that \ \textbf{our formal verification approach is necessary
to make \noc{s} in \koika practical}.
In order to study the practicality of \sys, we compile
a set of \noc{s} to Verilog and synthesize them with the
open-source toolchain from the original \koika paper~\cite{koika}.
The toolchain consists of Yosys~\cite{yosys}, for synthesis,
a process development kit for 45~nm~\cite{freepdk45}
and ABC~\cite{abc} for technology mapping.
In this synthesis flow, the high-level \noc design is mapped to
concrete gates and wires, yielding estimates on circuit size
and timing constraints.
However, as this workflow does not include floorplanning,
these estimates represent best-case bounds.
In particular, for \noc{s} that span more than two dimensions,
the actual performance is likely to degrade, since their topologies
are difficult to map onto a planar 2D silicon.
But this evaluation is outside the scope of this paper.

\begin{figure}
  \centering
  \begin{tikzpicture}
\begin{axis}[
    scatter,
    only marks,
    bar width=18pt,
    ymin=0,
    xmin=0,
    ylabel={delay [ps]},
    xlabel={\# of routers},
    xlabel style={
        at={(axis description cs:1,0)},
        anchor=south west,
    },
    xtick={2,8,16,32,64},
    enlarge y limits={upper, value=0.2},
    enlarge x limits={upper, value=0.1},
    height=5cm,
    width=10cm,
    point meta=explicit symbolic,
    nodes near coords,
    grid=major,
]
\addplot+[
    mark=*,
    blue,
    mark options={fill=blue, fill opacity=0.2},
    nodes near coords style={below, fill opacity=1}
] coordinates {
    ( 2,  358.53)
    ( 3,  494.83)
    ( 4,  541.15)
    ( 8,  757.67) [8]
    (16, 1132.60)
    (32, 1914.81) [32]
    (64, 3553.02) [64]
};
\node[above, blue, xshift=-0.1cm] at (axis cs: 2,  358.53) {2};
\node[below=-0.02cm, blue, xshift=0.05cm] at (axis cs: 3,  494.83) {3};
\node[below=-0.1cm, blue, xshift=0.13cm] at (axis cs: 4,  541.15) {4};
\node[above, blue] at (axis cs:16, 1132.60) {16};

\addplot+[
    mark=square*,
    red,
    mark options={fill=red, fill opacity=0.2},
    nodes near coords style={above, fill opacity=1}
] coordinates {
    ( 4,  605.29) [$2\!\!\times\!\!2$]
    ( 9,  894.38)
    (16, 1031.04)
};
\node[above, red, xshift=0.3cm] at (axis cs:9,  894.38) {$3\!\!\times\!\!3$};
\node[below, red] at (axis cs:16, 1031.04) {$4\!\!\times\!\!4$};

\addplot+[
    mark=triangle*,
    green,
    mark options={fill=green, fill opacity=0.2},
    nodes near coords style={left, fill opacity=1}
] coordinates {
    ( 8, 882.08)
};
\node[left=-0.2cm, green, yshift=0.3cm] at (axis cs:8, 882.08) {$2\!\!\times\!\!2\!\!\times\!\!2$};

\end{axis}
\end{tikzpicture}
  \caption{Critical path length (with transaction checks)}
  \label{fig:critical_path_guards}
\end{figure}

\Cref{fig:critical_path_guards} shows the length of the
\emph{critical path} for different instantiations of
\textcolor{blue}{one-dimensional},
\textcolor{red}{two-dimensional} and a
\textcolor{green}{three-dimensional} \noc{s}.
The critical path is the maximum time that the combinational
logic might take to compute the state of the next cycle.
This delay is particularly important because it directly limits
the achievable clock frequency and thereby the chip's overall
performance.
The plot shows that the critical path length grows linearly with
the number of nodes in the \noc.
This scaling is undesirable; as in an ideal architecture each
router operates independently and in parallel.
Thus, the critical path should only cross a single router and
remain essentially constant regardless of the network's size.
Surprisingly, in the case of our \noc, the scaling is not caused
by the \koika design itself.
Instead, it is introduced by the additional circuitry that the
compiler automatically generates to enforce the transactional
semantics of the language.
For \sys, this transactional logic is unnecessary runtime overhead
because our refinement proof already uses total Hoare triples.
As such, it guarantees that actions never fail (see
\cref{def:wpa:total} and
\cref{def:hoare:total} respectively).
Thus, we implemented a flag for the \koika compiler to disable
the generation of the transactional circuitry.
The plot in \cref{fig:critical_path} shows that this improves the
critical path length by an order of magnitude.
Most importantly, it caps the delay even when increasing the \noc
size thereby making \sys scalable and practical.

\begin{figure}
  \centering
  \begin{tikzpicture}
\begin{axis}[
    scatter,
    only marks,
    bar width=18pt,
    ymin=0,
    xmin=0,
    ylabel={delay [ps]},
    xlabel={\# of routers},
    xlabel style={
        at={(axis description cs:1,0)},
        anchor=south west,
    },
    xtick={2,8,16,32,64},
    enlarge y limits={upper, value=0.2},
    enlarge x limits={upper, value=0.1},
    height=5cm,
    width=10cm,
    point meta=explicit symbolic,
    nodes near coords,
    grid=major,
]

\addplot+[
    mark=*,
    blue,
    mark options={fill=blue, fill opacity=0.2},
    nodes near coords style={below, fill opacity=1}
] coordinates {
    ( 2, 369.67) [2]
    ( 3, 455.32)
    ( 4, 440.39) [4]
    ( 8, 476.11) [8]
    (16, 508.92) [16]
    (32, 521.23) [32]
    (64, 538.25) [64]
};
\node[left=-2pt, blue, yshift=-0.5em] at (axis cs:3, 455.32) {3};

\addplot+[
    mark=square*,
    red,
    mark options={fill=red, fill opacity=0.2},
    nodes near coords style={above=-0.02cm, fill opacity=1}
] coordinates {
    ( 4, 470.10) [$2\!\!\times\!\!2$]
    ( 9, 572.93)
    (16, 583.22) [$4\!\!\times\!\!4$]
};
\node[above, red, xshift=0.2em] at (axis cs:9, 572.93) {$3\!\!\times\!\!3$};

\addplot+[
    mark=triangle*,
    green,
    mark options={fill=green, fill opacity=0.2},
    nodes near coords style={below, fill opacity=1}
] coordinates {
    ( 8, 530.01)
};
\node[left=-0.1cm, green, yshift=0.23cm] at (axis cs:8, 530.01) {$2\!\!\times\!\!2\!\!\times\!\!2$};

\end{axis}
\end{tikzpicture}
  \caption{Critical path length (without transaction checks)}
  \label{fig:critical_path}
\end{figure}

We point out that this compiler flag side-steps the original compiler
correctness theorem.
Re-establishing this guarantee requires to strengthen the correctness
statement of the compiler to preserve not only the semantics but also
the properties, i.e., the weakest pre-conditions, of our Hoare logic,
including the one for data-race freedom~\cite{wp_prisc_2026}.
This requires a considerable effort, in fact a re-design of the
\koika compiler, so we have deferred this to future work.

\section{Related and Future Work}
\label{sec:related}

In this paper, we contributed \oursem and the first program logic to
formally-verify concurrent rule-based hardware designs in
a modular fashion.
This allowed us to design \sys, a highly-parameterized
library that takes a configuration for $k$ dimensions with
varying size as an input and produces a corresponding \noc (as output).
All \noc{}s are formally-verified against a formal specification
and no additional verification effort is required.
We are unaware of such a library.
In these concluding paragraphs, we review these contributions
in the context of related work and identify interesting
directions for future work.

\paragraph*{Verification of concurrent hardware}
Several approaches exist to formally-verified hardware design~\cite{10.1145/307988.307989,braibant2013formal,herklotz21_fvhls}.
Rule-based hardware design was pioneered by the design of Bluespec SystemVerilog~\cite{bluespec}.
The key concept of Bluespec, i.e., the \emph{one-rule-at-a-time (ORAAT)}
semantics, allows designing concurrent hardware that is free of data races.
All three state-of-the-art rule-based HDLs -- \koika, \kami and \fjfj{} --
are derivatives of Bluespec (in \rocq~\cite{the_coq_development_team_2024_14542673}).
\koika~\cite{koika} studies a formally-verified compilation with
guarantees for ORAAT.
\kami~\cite{kami} studies the concept of (hardware) modules which export
methods that can be called by other modules as it is known for example
from object-oriented programming languages.
The work on \kami then investigates how to reason about modules, i.e., on
one module individually and then on their composition.
For that, \kami defines the semantics of their modules as a labeled
transition system such that reasoning can take place on their traces.
In \kami's semantics, a race-free trace then does not contain a particular
method call more than once.
In contrast to \koika, the \kami compiler derives the schedule -- as a
binary relation -- and hands it to the Bluespec SystemVerilog compiler
to enforce ORAAT during parallel execution.
This binary relation quickly explodes for more complex hardware designs
where only some but not all the actions fire, i.e., execute, in a single
hardware cycle.
This also influences the proofs in \kami that need to rule out
certain traces.
Therefore, \fjfj restricts the writes that are possible to the same register
in one rule.
Our approach is different.
Instead of restricting, we provide more freedom.
This is possible because \koika supports ephemeral history registers
while \kami and \fjfj do not.
Yet, the idea of our program logic would also work and benefit
\kami and \fjfj while modules are yet a dearly needed concept in \koika.

\paragraph*{Verification of \noc designs}

The literature even on formally-verified \noc designs is scarce~\cite{nocsverify}.
The most notably formal specification of a \noc is GeNoC~\cite{genoc}.
GeNoC was implemented in ACL2~\cite{chamarthi2011acl2} and used formally verify \noc hardware designs.
GeNoC focused primarily on the routers and their properties.
In our \sys design, we followed some of GeNoC's principles, such
as the separation between routing and arbitration, and left others to future work, such
as the idea of delaying transmission of messages.
Yet, our \noc designs are formally-verified
against a formal specification, while our functional
correctness propagates down to the generated Verilog.
This is far beyond what GeNoC does.
\sys marks a starting point and leaves room for various extensions for
future work, such as abstracting over the size of the channels,
generalizing over different routing algorithms and
performance evaluation of the generated \noc designs.

\begin{credits}
\subsubsection{\ackname}
The authors would like to thank Clément Pit-Claudel and
Thomas Bourgeat for insightful discussions on the Hoare logic and
the design of \koika.

This work was funded by the Agentur für Innovation
in der Cybersicherheit GmbH (Cyberagentur).

\subsubsection{\discintname}
The authors have no competing interests to declare that are relevant to the content of this article.
\end{credits}

\bibliographystyle{LNCS/splncs04}
\bibliography{references2, references}

@book{sva,
  title={Systemverilog assertions and functional coverage},
  author={Mehta, Ashok B},
  year={2020},
  publisher={Springer}
}

@misc{abc,
  author       = {Alan Mishchenko and others},
  title        = {{ABC: System for Sequential Logic Synthesis and Formal Verification}},
  howpublished = {\url{https://github.com/berkeley-abc/abc}},
  note         = {Accessed: 2026-05-11}
}

@misc{yosys,
  author       = {Clifford Wolf},
  title        = {{Yosys Open SYnthesis Suite}},
  howpublished = {\url{http://www.clifford.at/yosys/}},
  note         = {Accessed: 2026-05-11}
}

@misc{freepdk45,
  author       = {{Silvaco}},
  title        = {{45nm FreePDK}},
  howpublished = {\url{https://si2.org/open-cell-and-free-pdk-libraries/}},
  note         = {Accessed: 2026-05-11}
}

@article{virtualchannels,
author = {Dally, W. J. and Seitz, C. L.},
title = {Deadlock-Free Message Routing in Multiprocessor Interconnection Networks},
year = {1987},
issue_date = {May 1987},
publisher = {IEEE Computer Society},
address = {USA},
volume = {36},
number = {5},
issn = {0018-9340},
url = {https://doi.org/10.1109/TC.1987.1676939},
doi = {10.1109/TC.1987.1676939},
journal = {IEEE Trans. Comput.},
month = may,
pages = {547–553},
numpages = {7}
}

@inproceedings{nocsverify,
author="Zerdani, Amina
and Boutekkouk, Fateh",
editor="Laouar, Mohamed Ridda
and Balas, Valentina Emilia
and Piuri, Vincenzo
and Rad, Dana
and Touati Hamad, Zineb
and Cheddad, Abbas",
title="An Overview of Formal Verification of Network-on-Chip (NoC) Methods",
booktitle="13th International Conference on Information Systems and Advanced Technologies ``ICISAT 2023''",
year="2024",
publisher="Springer Nature Switzerland",
address="Cham",
pages="26--34",
isbn="978-3-031-60594-9"
}

@book{dally2004principles,
  title={Principles and practices of interconnection networks},
  author={Dally, William James and Towles, Brian Patrick},
  year={2004},
  publisher={Elsevier}
}

@article{bertozzi2004xpipes,
  title={Xpipes: A network-on-chip architecture for gigascale systems-on-chip},
  author={Bertozzi, Davide and Benini, Luca},
  journal={IEEE circuits and systems magazine},
  volume={4},
  number={2},
  pages={18--31},
  year={2004},
  publisher={IEEE}
}

@article{10.1145/1132952.1132953,
author = {Bjerregaard, Tobias and Mahadevan, Shankar},
title = {A survey of research and practices of Network-on-chip},
year = {2006},
issue_date = {2006},
publisher = {Association for Computing Machinery},
address = {New York, NY, USA},
volume = {38},
number = {1},
issn = {0360-0300},
url = {https://doi.org/10.1145/1132952.1132953},
doi = {10.1145/1132952.1132953},
journal = {ACM Comput. Surv.},
month = jun,
pages = {1–es},
numpages = {51}
}

@inproceedings{kumar2002network,
  title={A network on chip architecture and design methodology},
  author={Kumar, Shashi and Jantsch, Axel and Soininen, J-P and Forsell, Martti and Millberg, Mikael and Oberg, Johny and Tiensyrja, Kari and Hemani, Ahmed},
  booktitle={Proceedings IEEE Computer Society Annual Symposium on VLSI. New Paradigms for VLSI Systems Design. ISVLSI 2002},
  pages={117--124},
  year={2002},
  organization={IEEE}
}

@article{10.1145/3450964,
author = {Charles, Subodha and Mishra, Prabhat},
title = {A Survey of Network-on-Chip Security Attacks and Countermeasures},
year = {2021},
issue_date = {June 2022},
publisher = {Association for Computing Machinery},
address = {New York, NY, USA},
volume = {54},
number = {5},
issn = {0360-0300},
url = {https://doi.org/10.1145/3450964},
doi = {10.1145/3450964},
journal = {ACM Comput. Surv.},
month = may,
articleno = {101},
numpages = {36}
}

@inproceedings{braibant2013formal,
  title={Formal verification of hardware synthesis},
  author={Braibant, Thomas and Chlipala, Adam},
  booktitle={International Conference on Computer Aided Verification},
  pages={213--228},
  year={2013},
  organization={Springer}
}

@article{10.1145/307988.307989,
author = {Kern, Christoph and Greenstreet, Mark R.},
title = {Formal verification in hardware design: a survey},
year = {1999},
issue_date = {April 1999},
publisher = {Association for Computing Machinery},
address = {New York, NY, USA},
volume = {4},
number = {2},
issn = {1084-4309},
url = {https://doi.org/10.1145/307988.307989},
doi = {10.1145/307988.307989},
journal = {ACM Trans. Des. Autom. Electron. Syst.},
month = apr,
pages = {123–193},
numpages = {71}
}

@software{the_coq_development_team_2024_14542673,
  author       = {The Coq Development Team},
  title        = {The Coq Proof Assistant},
  month        = sep,
  year         = 2024,
  publisher    = {Zenodo},
  version      = {8.20},
  doi          = {10.5281/zenodo.14542673},
  url          = {https://doi.org/10.5281/zenodo.14542673},
}

@inproceedings{chamarthi2011acl2,
  title={The ACL2 sedan theorem proving system},
  author={Chamarthi, Harsh Raju and Dillinger, Peter and Manolios, Panagiotis and Vroon, Daron},
  booktitle={International Conference on Tools and Algorithms for the Construction and Analysis of Systems},
  pages={291--295},
  year={2011},
  organization={Springer}
}

@ARTICLE{9858921,
  author={Das, Surajit and Karfa, Chandan and Biswas, Santosh},
  journal={IEEE Access},
  title={Accelerating NoC Verification Using a Complete Model and Active Window},
  year={2022},
  volume={10},
  number={},
  pages={88985-88999},
  doi={10.1109/ACCESS.2022.3199671}
}

@inproceedings{genoc,
  author = {Schmaltz, Julien and Borrione, Dominique},
  title = {A generic network on chip model},
  year = {2005},
  isbn = {3540283722},
  publisher = {Springer-Verlag},
  address = {Berlin, Heidelberg},
  url = {https://doi.org/10.1007/11541868_20},
  doi = {10.1007/11541868_20},
  booktitle = {Proceedings of the 18th International Conference on Theorem Proving in Higher Order Logics},
  pages = {310–325},
  numpages = {16},
  location = {Oxford, UK},
  series = {TPHOLs'05}
}

@inproceedings{koika,
  author = {Bourgeat, Thomas and Pit-Claudel, Cl\'{e}ment and Chlipala, Adam and Arvind},
  title = {The essence of Bluespec: a core language for rule-based hardware design},
  year = {2020},
  isbn = {9781450376136},
  publisher = {Association for Computing Machinery},
  address = {New York, NY, USA},
  url = {https://doi.org/10.1145/3385412.3385965},
  doi = {10.1145/3385412.3385965},
  booktitle = {Proceedings of the 41st ACM SIGPLAN Conference on Programming Language Design and Implementation},
  pages = {243–257},
  numpages = {15},
  location = {London, UK},
  series = {PLDI 2020}
}

@article{kami,
  author = {Choi, Joonwon and Vijayaraghavan, Muralidaran and Sherman, Benjamin and Chlipala, Adam and Arvind},
  title = {Kami: a platform for high-level parametric hardware specification and its modular verification},
  year = {2017},
  issue_date = {September 2017},
  publisher = {Association for Computing Machinery},
  address = {New York, NY, USA},
  volume = {1},
  number = {ICFP},
  url = {https://doi.org/10.1145/3110268},
  doi = {10.1145/3110268},
  journal = {Proc. ACM Program. Lang.},
  month = aug,
  articleno = {24},
  numpages = {30}
}

@article{fjfj,
  author = {Bourgeat, Thomas and Liu, Jiazheng and Chlipala, Adam and Arvind},
  title = {Making Concurrent Hardware Verification Sequential},
  year = {2025},
  issue_date = {June 2025},
  publisher = {Association for Computing Machinery},
  address = {New York, NY, USA},
  volume = {9},
  number = {PLDI},
  url = {https://doi.org/10.1145/3729331},
  doi = {10.1145/3729331},
  journal = {Proc. ACM Program. Lang.},
  month = jun,
  articleno = {228},
  numpages = {25}
}

@Manual{rocq,
  title =        {The Rocq prover reference manual},
  author =       {The Rocq development team},
  organization = {LogiCal Project},
  note =         {Version 9.1.0},
  year =         {2025},
  url =          "https://rocq-prover.org/doc/V9.1.0/refman/index.html"
}

@inproceedings{herklotz21_fvhls,
  author = {Herklotz, Yann and Pollard, James D. and Ramanathan, Nadesh and Wickerson, John},
  title = {Formal Verification of High-Level Synthesis},
  year = {2021},
  number = {OOPSLA},
  numpages = {30},
  month = {11},
  journal = {Proc. ACM Program. Lang.},
  volume = {5},
  publisher = {Association for Computing Machinery},
  address = {New York, NY, USA},
  doi = {10.1145/3485494}
}

@inproceedings{bluespec,
  author={Nikhil, R.},
  booktitle={Proceedings. Second ACM and IEEE International Conference on Formal Methods and Models for Co-Design, 2004. MEMOCODE '04.},
  title={Bluespec System Verilog: efficient, correct RTL from high level specifications},
  year={2004},
  volume={},
  number={},
  pages={69-70},
  doi={10.1109/MEMCOD.2004.1459818}
  }

@inproceedings{wp_prisc_2026,
title = {WP-Preserving Compilation -- Preserving Weakest Preconditions For End-to-End Verification},
author = {Abate, Carmine and Elsheikh, Mohamed and Liotati, Kleio and Farka, Franti\v{s}ek and Ertel, Sebastian},
year = {2026},
booktitle = {10th Workshop on Principles of Secure Compilation},
location = {Rennes, France},
series = {PriSC '26}
}

\appendix

\section{GenAI Declaration}

Neither the code nor the paper was written with any kind of
support from generative AI.

\section{Hoare Rules for \koika}\label{sec:appendix:hoare}
\FloatBarrier
\begin{figure}
  \centering
\begin{tikzpicture}
  \matrix (m) [
  matrix of math nodes,
  ampersand replacement=\&,
  column sep=0.5cm
  ] {
    \inference
    { \pre{P} \rightarrow \pre{P'} & \hoare{P'}{S}{Q} }
    { \hoare{P}{S}{Q} }
    [\textsc{StrenPre}]
    \&
    \inference
    { \hoare{P}{S}{Q'} & \post{Q'} \rightarrow \post{Q} }
    { \hoare{P}{S}{Q} }
    [\textsc{WeakPost}]
    \\
};
\end{tikzpicture}

\caption{Structural rules.}
\label{fig:rules:structural}

\end{figure}
\begin{figure}
\begin{tikzpicture}[node distance=0cm]

\matrix (m) [
  matrix of math nodes,
  ampersand replacement=\&,
  column sep=0.25cm
  ] {
    \inference
    { }
    { \hoare{Q~(b)}{b}{Q} }
    [\textsc{BitStrings}]
    \&
    \inference
    { }
    { \hoare{ Q~(\Ctx[x]) }{ x }{ Q } }
    [\textsc{Var}]
    \\
};

\matrix (m1) [below =of m] [
  matrix of math nodes,
  ampersand replacement=\&,
  column sep=0.25cm
  ] {
    \inference
    {  }
    { \hoare
      { Q~(\R[r]) }
      { \kread{0}{r} }
      { Q }
    }
    [\textsc{HRead0}]
    \\
};

\matrix (m11) [below =of m1] [
  matrix of math nodes,
  ampersand replacement=\&,
  column sep=0.25cm
  ] {
    \inference
    {  }
    { \hoare
      { Q~(\cL[r]_0 ~ \texttt{?\!:} ~ \R[r])) }
      { \kread{1}{r} }
      { Q }
    }
    [\textsc{HRead1}]
    \\
};

\matrix (m1a) [below =of m11] [
  matrix of math nodes,
  ampersand replacement=\&,
  column sep=0.25cm
  ] {
    \inference
    { \hoare
      { P }
      { a }
      { v. ~ Q~(\cL \mdoubleplus [\kwrite{p}{r}{v}], ~ \epsilon) }
    }
    { \hoare{ P }{ \kwrite{p}{r}{a} }{ Q } }
    [\textsc{HWrite}]
    \\
};

\matrix (m2) [below =of m1a] [
  matrix of math nodes,
  ampersand replacement=\&,
  column sep=0.25cm
  ] {
  \inference
  {
    \hoare{ P }{ a_1 }{ v. ~ Q ~ (\Ctx \oplus [x \mapsto v], ~ \epsilon) } \quad
    \hoare{ Q }{ a_2 }{ v. ~ R ~ (\Ctx \setminus \{ x \},~ v) } \quad
    x \in \texttt{Var}
  }
  { \hoare{P}{\texttt{let} ~ x ~ \texttt{:=} ~ a_1 ~ \texttt{in} ~ a_2 }{R} }
  [\textsc{Bind}]
  \\
};

\matrix (m3) [below =of m2] [
  matrix of math nodes,
  ampersand replacement=\&,
  column sep=0.25cm
  ] {
  \inference
  { \hoare{P}{a_{c}}{v. ~ \texttt{if $v$ then} ~ Q_{t} ~ \texttt{else} ~ Q_{f }}
      & \hoare{Q_{t}}{a_{t}}{R}
      & \hoare{Q_{f}}{a_{f}}{R} }
  { \hoare{P}{\texttt{if} ~ a_{c} ~ \texttt{then} ~ a_{t} ~ \texttt{else} ~ a_{f}}{R} }
  [\textsc{If}]
  \\
};

\matrix (m4) [below =of m3] [
  matrix of math nodes,
  ampersand replacement=\&,
  column sep=0.25cm
  ] {
  \inference
  {
    \pre{\{P_0\}} ~ a_1 ~ \post{\{v_1. ~ P_1~(\cL_1, (v1))\}} \\
    \pre{\{P_1\}} ~ a_2 ~ \post{\{v_2. ~ P_2~(\cL_2, (v1,v2))\}} \\
    \ldots \\
    \pre{\{P_{n-1}\}} ~ a_n ~ \post{\{v_n. ~ P_n(\cL_n,~ (v_1,\ldots,v_n))\}} \\
    \hoare{P_n ~ (v_1,\ldots,v_n)}{f}{R}
  }
  { \hoare{P_0}{f\texttt{(}a_1, \ldots, a_n\texttt{)}}{R} }
  [\textsc{Call}]
  \\
};

\matrix (m5) [below =of m4] [
  matrix of math nodes,
  ampersand replacement=\&,
  column sep=0.25cm
  ] {
  \inference
  {  }
  { \hoare{P}{\texttt{abort}}{\bot} }
  [\textsc{Abort}]
  \\
};

\end{tikzpicture}

  \caption{Reasoning rules for \koika actions.}
  \label{fig:rules:term}

\end{figure}
The rules for reasoning in our program logic classically fall
into two categories: structural rules and term rules.
Structural rules are independent of the \koika actions that the
Hoare triples reason about.
\begin{definition}[Structural rules]
  The structural rules to reason about \koika actions are
  defined in \cref{fig:rules:structural}.
\end{definition}
Rules~\textsc{StrenPre} and \textsc{WeakPost} are standard rules
for a Hoare logic and allow exchanging the pre- and postcondition
respectively.
The term rules state how an individual syntactic construct
in a \koika action interacts with the state, i.e., the register
environment \R, the log \cL and the variable context \Ctx.
\begin{definition}[Action rules]
  The term rules for \koika actions of our program logic
  are defined in \cref{fig:rules:term}.
\end{definition}
The rules for bit strings, variables and reads of
register values all follow the same principle.
All of these expressions are side-effect free.
As such, whatever condition \post{$Q$} should hold after
the execution of this construct should also hold before.
Note again that \post{$Q$} here actually means \post{$v.~ Q (v)$}.
For variables, we retrieve $v$ from the context \Ctx.
For \kread{0}{\cdot}, we load $v$ from the register environment \R.
And for \kread{1}{\cdot}, we either find \kwrite{0}{r}{v} in the
log \cL or resort to load again from \R.
For a write to a register, we require that $Q$ holds when
passed the log \cL with the appended \kwrite{p}{r}{v}.
For conditionals, we formulate a join point, i.e.,
a postcondition \post{R} that needs to hold after
execution of either of the two branches.
A call requires a Hoare triple for each of its arguments. These triples
are structured s.t. they accumulate their values, which are then
passed to the triple for the function body. Note that these values
are passed as variable context, which makes all of them
available as local variables in the function body. What's more -- they
even replace the context \Ctx{} -- preventing $f$ from accessing
the caller's local variables.
Finally, the \textsc{Abort} rule maps an \texttt{abort}
to $\bot$ and thereby specifies our Hoare triples as partial.
For total Hoare triples denoted as \thoare{P}{a}{Q},
we drop this rule and thereby require that the action never
aborts.
If it happens in the course of a proof then the proof is simply
stuck.

\end{document}